\definecolor{labelcolor}{RGB}{100,0,0}
\definecolor{blue-violet}{rgb}{0.54, 0.17, 0.89}
\newcommand{\changed}[1]{{#1}}
\newcommand{\changedA}[1]{{#1}}
\newcommand{\changedB}[1]{}
\newcommand\redout{\bgroup\markoverwith
{\color{red}{\rule[.5ex]{2pt}{0.4pt}}}\ULon}
\DeclareMathOperator{\sgn}{sgn}
\newcommand{\opload}[1]{\mathrm{const}_{#1}}
\newcommand{\opmult}[2]{\mathrm{mult}_{#1,#2}}
\newcommand{\opdiv}[2]{\mathrm{div}_{#1,#2}}
\newcommand{\opadd}[2]{\mathrm{add}_{#1,#2}}
\newcommand{\opsub}[2]{\mathrm{sub}_{#1,#2}}
\newcommand{\opcpy}[1]{\mathrm{cpy}_{#1}}
\newcommand{\shiftl}{\text{shift-left}}
\newcommand{\shiftr}{\text{shift-right}}
\newcommand{\Prec}{\mathrm{prec}}
\newcommand{\Arith}{\mathrm{arith}}
\newcommand{\Length}[1]{\mathrm{Length}{(#1)}}
\newcommand{\Size}[1]{\mathrm{Size}{(#1)}}
\newcommand{\SAFeas}{\text{\small \sc SA-Feas}}
\newcommand{\SAEFeas}{\text{\small \sc SAE-Feas}}
\newcommand{\CircFeas}{\text{\small \sc Circ-Feas}}
\newcommand{\CircPseudoFeas}{\text{\small \sc Circ-Pseudo-Feas}}
\newcommand{\fl}{\mathrm{fl}}
\renewcommand{\P}{\ensuremath{\mathbf{P}} }
\newcommand{\realP}{\ensuremath{\mathbf{P}_{\mathbb R}} }
\newcommand{\realPone}{\ensuremath{\mathbf{P}'_{\mathbb R}} }
\newcommand{\realPtwo}{\ensuremath{\mathbf{P}''_{\mathbb R}} }
\newcommand{\fpP}{\ensuremath{\mathbf{P}_{\mathbb R}^{\mathrm{fp}}} }
\newcommand{\NP}{\ensuremath{\mathbf{NP}} }
\newcommand{\realNP}{\ensuremath{\mathbf{NP}_{\mathbb R}} }
\newcommand{\realNPone}{\ensuremath{\mathbf{NP}'_{\mathbb R}} }
\newcommand{\fpNP}{\ensuremath{\mathbf{NP}_{\mathbb R}^{\mathrm{fp}} }}
\newcommand{\realEXPone}{\ensuremath{\mathbf{EXP}'_{\mathbb R}}}
\newcommand{\fpEXP}{\ensuremath{\mathbf{EXP}^{\mathrm{fp}}}}
\newcommand{\PR}{\ensuremath{\mathbf{P}_{\mathbb R}} }
\newcommand{\co}{\ensuremath{\mathbf{co-}}}
\newcommand{\Pro}{\ensuremath{\mathbf{P_{ro}}} }
\newcommand{\NPro}{\ensuremath{\mathbf{NP_{ro}}} }
\newcommand{\Piter}{\ensuremath{\mathbf{P_{iter}}} }
\newcommand{\NPiterU}{\ensuremath{\mathbf{NP_{iter}^U}} }
\newcommand{\NPiterB}{\ensuremath{\mathbf{NP_{iter}^B}} }
\newcommand{\umach}{\ensuremath{u_{\mathrm{mach}}}}
\newcommand{\dd}{\ensuremath{\ {\mathrm d}}}
\newcommand{\function}[5]{\ensuremath{
\begin{array}{rccl}
#1 : & #2 & \rightarrow & #3,\\ & #4 & \mapsto & #5
\end{array}}}
\newtheorem{main}{Theorem}
\newtheorem{theorem}{Theorem}
\numberwithin{theorem}{section}
\newtheorem{lemma}[theorem]{Lemma}
\newtheorem*{lemma*}{Lemma}
\newtheorem{proposition}[theorem]{Proposition}
\theoremstyle{definition}
\newtheorem{definition}[theorem]{Definition}
\newtheorem{example}[theorem]{Example}
\theoremstyle{remark}
\newtheorem{remark}[theorem]{Remark}
\author{Gregorio Malajovich}
\address{Instituto de Matemática,\\Universidade Federal do Rio de Janeiro,\\Caixa Postal 68530, Rio de Janeiro RJ 21941-909 Brazil\\gregorio.malajovich@gmail.com}
\author{Mike Shub}
\address{
Department of Mathematics,\\
The City College of the City University of New York.\\
NAC 8/133\\
160 Convent Avenue, New York, NY 10031\\
mshub@ccny.cuny.edu
}
\date{\changedA{Revised version. Jan 16, 2019}}
\title[A theory of NP-completeness and ill-conditioning...]{A theory of NP-completeness and ill-conditioning for approximate real computations}
\subjclass[2010]{68Q15; 
68Q05, 
68Q17. 
}
\thanks{\changedA{2012 {\em ACM Computing Classification System.}
Complexity classes; Problems, reductions and completeness.}}
\thanks{This work was partially supported by the Smale Institute.}
\begin{document}
\changedB{
\thispagestyle{empty}
\centerline{\large \sc Answer to the referees.}

Dear Referees,

\medskip
\par
First of all, thanks for your time and for the very valuable comments. Most of the suggestions were incorporated in the text, and appear \changed{in blue} in the version of the paper below.  Page references in this letter refer to the new version.

One of you asked for more clarifications on the relations between classes $\P$ and $\NP$ introduced in this paper and in computable analysis. In particular he pointed out the need for a discussion on what is the `real' class $\P$ for practitionners.

Computable analysis predicts that computable real functions must be continuous and recognizable sets must be open. This makes computable analysis inadequate for usual numerical computations. One can of course introduce a condition number like in this paper and define classes $\P$ and $\NP$ by assuming that real numbers are approximated by a nested sequence of balls in some topology. Depending on the topology, this could be similar to our model. Taking balls of radius $2^{-t}$ centered on a floating point number would give something similar to a `strong-strong' type of model, while we opted for the `strong-weak' model (Remarks 8.9, 11.3, 11.4). This would introduce yet another variant which we should maybe leave for further investigations.

The second question calls for a meta-discussion that is essentially external to our models of complexity. There is no `better' model, but different models may be more adequate to different realities. The natural place for such a discussion is the conclusion (Sec.12). 
Regarding the suggestion to reduce the number of $\NP$-like classes: we do not know how to do it at this point. 

The comments by the second referee are mostly incorporated in the text, as indicated in the table below:

\centerline{
\begin{tabular}{|rl|rl|rl|rl|}
1&p.\pageref{ref01}&11&p.\pageref{ref11}&21&p.\pageref{ref21}&31&p.\pageref{ref31}\\
2&p.\pageref{ref02}&12&p.\pageref{ref12}&22&p.\pageref{ref22}&32&p.\pageref{ref32}\\
	3&p.\pageref{ref03}&13&p.\pageref{ref13}&23&p.\pageref{ref23}&33&Note (b) below\\
	4&p.\pageref{ref04}&14&p.\pageref{ref14}&24&p.\pageref{def-rho}&34&Note (c) below\\
5&p.\pageref{ref05}&15&p.\pageref{ref15}&25&p.\pageref{ref25}&35&p.\pageref{ref35}\\
	6&p.\pageref{ref06}&16&p.\pageref{ref16}&26&p.\pageref{ref26}&36&pp.\pageref{ref36a},\pageref{ref36b},\pageref{ref36c}\\
	7&p.\pageref{ref07}&17&Note a below&27&p.\pageref{ref27}&37&Note (d) below\\
8&p.\pageref{ref08}&18&p.\pageref{ref18}&28&p.\pageref{ref28}&38&p.\pageref{alt-reduction}\\
9&p.\pageref{ref09}&19&p.\pageref{ref19}&29&p.\pageref{ref29}&39&p.\pageref{ref39}\\
	10&Several places&20&p.\pageref{ref20}&30&p.\pageref{ref30}&40&Note (e)
\end{tabular}
}

\newpage
\thispagestyle{empty}
\noindent
NOTES:
\begin{itemize}
\item[(a)]
We introduced the references and fixed the error in the Lemma. Since we are working with a simplification of the floating point model and not the full model, we kept the appendix for completeness.
\item[(b)]
	In this version, a BSS over the reals is allowed to have any constant as in the original definition. However, machines for $\fpP$ and $\fpNP$ are supposed to have only rational constants, see remark \ref{ref33} p.\pageref{ref33}.
\item[(c)] Standarized to BSS machines over ${\mathbb F}_2$. We also changed the notation for floating point numbers to ${\mathbf F}_{\epsilon}$ to remove ambiguity.
\item[(d)] A sign error in Lemma 6.2 was fixed, as well as the proof. It became much simpler by assuming radix 2.
\item[(e)] This is also possible, but we already have circuits and machines, we do not want to introduce more models of computation.
\end{itemize}

Once again, thanks for giving us the oportunity to improve the paper.
\medskip
\par

The authors.
\newpage
\setcounter{page}{1}
}
\maketitle
\begin{abstract}
We develop a complexity theory for approximate real computations. 
We first produce a theory for exact computations but with condition numbers.
The input size depends on a condition number, which is not assumed known by the machine. The theory admits deterministic and nondeterministic polynomial time recognizable problems. We prove that P is not NP in this theory if and only if P is not NP in the BSS theory over the reals. 
\par
Then we develop a theory with weak and strong approximate computations. This theory is intended to model actual numerical computations that are usually performed in floating point arithmetic. It admits classes P and NP and also an NP-complete problem.
We relate the P vs NP question in this new theory to the classical P vs NP problem.  
\end{abstract}

\tableofcontents
\section{Introduction}
\ocite{BSS} proposed a model of computation over the real and
complex numbers and \changedA{over} any ring or field. The initial goal was to provide
foundations and a theory of complexity for numerical analysis and scientific
computing. They borrowed components from several existing theories, especially algebraic complexity and the complexity theory of theoretical computer science including the \P versus \NP problem. The model was essentially a Turing machine in which the entries on the tape would be elements of the real numbers, complex numbers or the other rings or fields in question and the arithmetic operations and comparisons exact. It was intended to capture essential components of computation as employed in numerical analysis and scientific computing in a simple model which would put the theory in contact with additional mainstream areas of mathematics and produce meaningful or even predictive results.
The need to incorporate approximations, input and round-off error was mentioned at the time but there has not been much progress in those directions in the intervening years especially as concerns decision problems. In this paper we propose a remedy for this situation.
\par
A good part of the problem concerns how to describe the input error and \changedA{the} round-off error. Here we choose floating point arithmetics. 
Our model has the following properties which we refer to as the {\em wish list}:
\begin{enumerate}[(a)]
	\item The theory admits \changedA{classes $\P$ and $\NP$} with $\P \subseteq \NP$.
	\item The class \P contains the class \P of classical (Turing) computations of computer science. Moreover it contains decision sets that are related to computations considered easy in numerical practice, such as the complements of graphs of elementary functions. It also contains problems related to standard linear algebra computations and certain fractal sets.
	\item The class \NP contains a complete problem.
\item Machines supporting the definitions of \P and \NP never give wrong
	answers, regardless of the precision. 
\item Numerical stability issues do play a role.
\item The condition number plays a major role in the theory.
\end{enumerate}
We do not require \P to be closed by complements, as this seems to preclude other important goals. The condition number used is quite general as the one used by \ocite{Cucker}. This definition emerged from discussions between Cucker and the first author of this paper. This generality is useful as a natural definition of condition as the reciprocal distance to the locus of ill-posedness varies according to the context. 
\par
\subsection*{Outline of the paper and main results}
We proceed in two steps. First we define classes \realPone and \realNPone of real decision
problems with a condition number. Those classes generalize the classes $\realP$ and $\realNP$ as in \ocite{BSS} to include condition numbers, but still with exact computations. This will allow us to investigate all the main
features of the theory, except for numerical stability. The main result in Part~1 will be
\begin{main}
The following are equivalent:
	\begin{enumerate}[(a)]
		\item $\realP \neq \realNP$.
		\item $\realPone \neq \realNPone$.
	\end{enumerate}
\end{main}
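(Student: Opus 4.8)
The plan is to prove the two implications separately; they are of very different character, so I would treat them by quite different means.

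\emph{The implication $\mathrm{(a)}\Rightarrow\mathrm{(b)}$ is the routine one.} I would embed the plain BSS theory into the conditioned one by equipping a decision set $S$ with the trivial condition number $\varphi\equiv 1$, so that the conditioned input size of $x$ collapses to its ordinary length. A BSS machine is a legitimate machine of the conditioned model, and conversely a machine of the conditioned model running in time polynomial in the conditioned size is, when $\varphi\equiv 1$, just a polynomial time BSS machine; hence $S\in\realP\iff(S,\varphi)\in\realPone$ and $S\in\realNP\iff(S,\varphi)\in\realNPone$. Any $S\in\realNP\setminus\realP$ therefore produces $(S,\varphi)\in\realNPone\setminus\realPone$, which gives (b).

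\emph{For $\mathrm{(b)}\Rightarrow\mathrm{(a)}$ I would argue the contrapositive: assuming $\realP=\realNP$, show $\realNPone\subseteq\realPone$} (the inclusion $\realPone\subseteq\realNPone$ being immediate). Fix $(S,\varphi)\in\realNPone$, witnessed by a verifier $M$ and a polynomial time bound $T$: if $x\in S$ some witness makes $M(x,\cdot)$ accept within $T(\mathrm{size}(x))$ steps, and if $x\notin S$ then $M(x,\cdot)$ halts and rejects within $T(\mathrm{size}(x))$ steps on \emph{every} witness, where $\mathrm{size}(x)$ is the conditioned input size (incorporating $\log$ of the condition number). The obstruction is that a machine deciding $S$ in the conditioned model is never told $\mathrm{size}(x)$, hence cannot know when $M$ has been run long enough, and $\realNP$ furnishes no certificate of non-membership; a bare ``guess a budget and run $M$'' loop diverges on \textsc{no}-instances, which is forbidden by item (d) of the wish list.

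To get around this I would pad with an explicit unary budget $0^{n}$ and introduce two BSS languages, the obvious one and an auxiliary \emph{honest-budget} language:
\[ \widetilde S=\bigl\{(x,0^{n}):\exists y,\ M(x,y)\text{ accepts within }n\text{ steps}\bigr\}, \]
\[ W=\bigl\{(x,0^{n}):\exists y,\ M(x,y)\text{ does not halt within }n\text{ steps}\bigr\}. \]
Since a computation that halts within $n$ steps reads only $O(n)$ input symbols, the quantifier over $y$ may be restricted to witnesses of length $O(n)$, so both languages lie in $\realNP$ (guess the short witness, simulate $M$ for $n$ steps), and hence, by hypothesis, in $\realP$, say via deterministic machines $D_{\widetilde S},D_{W}$ running in time polynomial in $\mathrm{length}(x)+n$. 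The conditioned machine for $(S,\varphi)$ then, for $n=1,2,4,8,\dots$, evaluates both deciders on $(x,0^{n})$; it answers \textsc{yes} the first time $D_{\widetilde S}$ accepts, it answers \textsc{no} the first time $D_{\widetilde S}$ and $D_{W}$ both reject, and otherwise it doubles $n$. Soundness of $M$ makes an accepting witness within $n$ steps force $x\in S$; and $D_{W}$ rejecting $(x,0^{n})$ certifies that \emph{every} witness makes $M$ halt within $n$ steps, so together with $D_{\widetilde S}$ rejecting it follows that every witness is rejected within $n$ steps, whence $x\notin S$ by completeness of $M$. Once $n\ge T(\mathrm{size}(x))$ one of the two stopping conditions must fire, and doubling reaches such an $n$ within $O(\log T(\mathrm{size}(x)))$ rounds, each costing $\mathrm{poly}(\mathrm{length}(x)+T(\mathrm{size}(x)))=\mathrm{poly}(\mathrm{size}(x))$ steps; so the machine halts correctly in time polynomial in $\mathrm{size}(x)$, i.e.\ $(S,\varphi)\in\realPone$.

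The main work, and where I expect the subtleties to hide, is not the padding itself but the faithful reconciliation of the two computational models: one must check that the explicit budget $0^{n}$ lets the verifier's time and witness-length bounds be stated without reference to the unknown quantity $\mathrm{size}(x)$ (this is exactly why $W$ speaks of halting rather than of a size-dependent clock), that truncating over-long witnesses leaves short computations unchanged so that $\widetilde S$ and $W$ genuinely meet the formal definition of $\realNP$ problems, and that running BSS subroutines inside a machine of the conditioned model respects the conventions on encodings, constants, and the accounting of running time. The ``never give a wrong answer'' requirement is precisely what rules out a naive guess-and-double heuristic and forces the use of the honest-budget language $W$.
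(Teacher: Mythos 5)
Your first direction is fine (the backward implication $S\in\realP\Leftarrow(S,1)\in\realPone$ needs the standard timer argument via Proposition~\ref{timed}, which the paper itself invokes), and your padding strategy for the other direction is a genuinely different route from the paper's, which instead goes through the $\realNPone$-completeness of $(\SAFeas,1)$ via one-sided Turing reductions and the register equations. But as written your second direction has a genuine gap: the \textsc{no} stopping rule. You justify it by ``every witness is rejected within $n$ steps, whence $x\notin S$ by completeness of $M$'', but the completeness clause, Definition~\ref{NP}(a), is only guaranteed when $\Size{x}<\infty$. For an ill-posed member, $x\in S$ with $\mu(x)=\infty$, the verifier need have no accepting witness at all. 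Concretely, take any nonempty $S$ with $\mu\equiv\infty$; then $(S,\mu)\in\realNPone$ is witnessed by the verifier that halts and rejects every input immediately, and with this verifier your machine finds $D_{\widetilde S}$ and $D_W$ both rejecting already at the first budget and outputs \textsc{no} on every $x$, including $x\in S$. That violates Definition~\ref{P}(b), which --- as the paper stresses right after the definition --- requires any halting answer to be correct even when the input size is infinite. So the machine you build does not establish $(S,\varphi)\in\realPone$. (For well-posed members your rule is safe, since an accepting witness still running at budget $n$ puts $(x,0^n)$ into $W$; the failure is exactly on the ill-posed members, which is where condition numbers change the game.)

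The root of the problem is the premise that led you to introduce $W$: you write that a bare ``guess a budget and run $M$'' loop is forbidden because it diverges on \textsc{no}-instances, but item (d) of the wish list and Definition~\ref{P} forbid wrong halting answers, not divergence --- $\realPone$ is one-sided by design. So the repair is simply to delete $W$ and the \textsc{no} rule: answer \textsc{yes} the first time $D_{\widetilde S}$ accepts $(x,0^n)$ and otherwise keep doubling forever. Soundness of \textsc{yes} answers follows from Definition~\ref{NP}(b); for $x\in S$ with $\Size{x}<\infty$ the doubling reaches $n\ge p(\Size{x})$ after $O(\log p(\Size{x}))$ rounds, each costing time polynomial in $\Length{x}+n$, hence polynomial in $\Size{x}$; and on all other inputs the machine either accepts correctly or runs forever, which Definition~\ref{P} permits. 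With that change (and the routine encoding/truncation details you already flag), your argument gives a correct and more elementary proof of $\realP=\realNP\Rightarrow\realPone=\realNPone$ that avoids the completeness machinery, at the cost of not producing the $\realNPone$-complete problem that the paper's approach yields as a byproduct.
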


This theorem is restated below as Theorem \ref{toy-transfer}. In Part 2 we will introduce
a model of floating point computations, and classes \fpP and \fpNP . 
The main results in Part 2 are the existence of an $\fpNP$-complete problem and the Theorem below, restated as Theorem \ref{real-transfer}.
\begin{main}
If $\P \neq \NP$, then $\fpP \neq \fpNP$.
\end{main}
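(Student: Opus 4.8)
The plan is to prove the contrapositive: assume $\fpP=\fpNP$ and deduce that $\P=\NP$. Fix a classical $\NP$-complete problem, say $\mathrm{SAT}$, and let $\beta$ be the encoding that sends a Boolean string $w$ to the sequence of floating point numbers obtained by reading its bits as the numbers $0$ and $1$; these are exactly representable at every precision. Equip the lifted problem $\beta(\mathrm{SAT})$ with the condition number that is identically $1$ on the image of $\beta$ (nothing forces an encoded instance near a locus of ill-posedness), so that the input size of $\beta(w)$ in the floating point theory is within a constant factor of the bit length $|w|$. Then two containments are needed: $\beta(\mathrm{SAT})\in\fpNP$, and (Boolean part) any problem of the form $\beta(L)$ that lies in $\fpP$ has $L\in\P$.

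\textbf{Classical $\NP$ embeds into $\fpNP$.} Let $V(w,y)$ be a classical polynomial time verifier for $\mathrm{SAT}$ with $|y|=p(|w|)$. Build an $\fpNP$ machine $M$ for $\beta(\mathrm{SAT})$ as follows: on input $\beta(w)$ it makes a nondeterministic guess $g$ of $p(|w|)$ floating point numbers, first tests that each $g_i$ satisfies $g_i(g_i-1)=0$ and rejects otherwise, and then simulates the classical computation of $V(w,g)$ bit by bit, all of whose configurations are encoded by sequences of $0$'s and $1$'s and are therefore manipulated without rounding at any precision. On a genuine bit the product $g_i(g_i-1)$ equals $0$ exactly at any precision, so $M$ never errs, runs in polynomial time, and $w\in\mathrm{SAT}$ iff some guess is accepted. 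Hence $\beta(\mathrm{SAT})\in\fpNP=\fpP$ by the hypothesis.

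\textbf{The Boolean part of $\fpP$ is contained in $\P$.} Let $N$ be an $\fpP$ machine recognizing $\beta(\mathrm{SAT})$, running in time polynomial in its input size and working precision. Because the machines defining $\fpP$ and $\fpNP$ carry only rational constants, and because the condition number of $\beta(w)$ is $1$, the definition of $\fpP$ guarantees that $N$ returns the correct answer once the working precision exceeds a threshold $t(|w|)$ polynomially bounded in $|w|$. Run $N$ on $\beta(w)$ at precision $t(|w|)$: this is a sequence of polynomially many floating point operations on rational numbers whose numerators, denominators and exponents have bit length polynomial in $|w|$ (each operation adds $O(t(|w|))$ mantissa bits and $O(1)$ exponent bits, and there are polynomially many of them), together with polynomially many comparisons of such rationals. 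A Turing machine performs this simulation in time polynomial in $|w|$, so $\mathrm{SAT}\in\P$ and $\P=\NP$.

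The step I expect to be the main obstacle is the inclusion of the Boolean part of $\fpP$ in $\P$, and within it the genuinely load-bearing hypothesis is that $\fpP$ and $\fpNP$ machines use only rational constants: this is what forbids a polynomial time floating point machine from exploiting transcendental constants to decide, on Boolean data, a set outside classical $\P$, and it must be combined with a careful bound on bit-growth under iterated floating point operations and on the precision needed. A secondary point requiring care is the soundness of the bit-test in the first containment when the nondeterministic guess is delivered only through approximations (the strong-weak model) or when a cheating prover supplies a value extremely close to, but distinct from, $\{0,1\}$: there one invokes that such a value sits near the ill-posed locus $\{g_i:g_i(g_i-1)=0\}$, so the instance size grows and the precision at which correctness is demanded is high enough for the test $g_i(g_i-1)=0$ to fail; arranging the condition number of $M$ to dominate both the conditioning of $\beta(w)$ and that of the tests is the remaining bookkeeping.
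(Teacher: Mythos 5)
Your argument is essentially the paper's own proof of Theorem~\ref{real-transfer}: lift a classical \NP problem to $\mathbb R^{\infty}$ with condition identically $1$ (the paper encodes bits by signs rather than by the values $0,1$), note that the lifted problem lies in \fpNP, and then, assuming $\fpP=\fpNP$, simulate the resulting deterministic machine over ${\mathbb F}_2$ at the fixed precision $2^{-p_{\Prec}(\Length{x})}$, the restriction to rational constants keeping every quantity of polynomial bit length. Two small repairs to your write-up: soundness of your test $g_i(g_i-1)=0$ under weak computations follows simply because relative errors preserve zeros and signs (not from any growth of the input size, since the condition number is attached to the instance and not to the guess — the paper sidesteps the issue entirely by reading guesses through sign tests), and since Definition~\ref{fpP} guarantees termination only for yes-instances, your classical simulation must carry a step counter bounded by $p_{\Arith}(\Length{x})$ and reject on timeout, correctness of this being exactly clause (b) of that definition.
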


\subsection*{Related work}
Alan Turing \ycite{Turing} understood that the main obstruction to the
efficiency of numerical computations would be loss of accuracy due to iterated
round-off errors. He realized that
\begin{quotation}
	\em
When we come to make estimates of errors in matrix processes we
shall find that the chief factor limiting the accuracy that can be
obtained is ‘ill-conditioning’ of the matrices involved.
\end{quotation}
This motivated the introduction of the condition number as a ‘measure of 
ill-conditionning’.

Formal models of computability for real functions were developed later. Recursive Analysis is an extension of Turing’s model of computation to Turing machines
with infinite tapes for input and output. Those machines can be used to compute
maps between topological spaces with a countable basis. The input is a convergent
nested sequence of balls from the basis of input space, and the output is the same
for output space. This model has the property that only continuous functions can
be computable. In particular, decidable sets must be both open and closed.

An attempt to propose a more realistic model of numerical computation was
made by \ocite{BSS}. This model admits a real $\NP$-completeness theory
similar to the classical theory by Cook and Karp. Condition numbers and rounding-off are not incorporated into the BSS model. A strong objection against it was raised by \ocite{Braverman-Cook}:
\begin{quotation}
	\em
	A weakness of the BSS approach as a model of scientific computing is that uncomputability results do not correspond to computing practice in the case $R=\mathbb R$.
\end{quotation}
This objection was further elaborated by \ocite{Braverman-Yampolsky}*{p.13}:
\begin{quotation}
	\em
	Algebraic in nature, BSS decidability is not well-suited for the study
of fractal objects, such as Julia sets. It turns out (see Chapter 2.3
of (Blum, Cucker, Shub, and Smale, 1998)) that sets with a 
fractional Hausdorff dimnension, including ones with very simple 
description, such as Cantor set and the Koch snowflake (...), are
BSS-undecidable. Morevoer, due to the algebraic nature of the
model, very simple sets that do not decompose into a countable
union of semi-algebraic sets are not decidable. An example of such
	a set is the graph of the function $f(x)=e^x$ (...)
\end{quotation}

They proposed instead a theory of sets recognizable in polynomial time, but without
an $\NP$-completeness theory. The class \realPone that we propose borrows 
from the idea
of measuring the cost of recognizing a set (or its complement), instead of 
the cost of deciding it.

\changed{A first tentative to endow the BSS model with condition numbers
and approximate computations \changedA{is} due to \ocite{Cucker-Smale}\label{ref01}. They
studied specifically algorithms for deciding semi-algebraic sets under
an absolute error model for numerical computations, but no reduction 
theory was developed. }

\ocite{Cucker}*{Remark 7} defined a model of numerical computation with condition numbers similar to the one in this paper. In his model a problem is a pair
$(X, \mu)$ where \changed{$X \subset \mathbb R^{\infty}$ and
the notation $\mathbb R^{\infty}$\label{ref02} stands for the disjoint union of all $\mathbb R^k$,
$k \in \mathbb N$. The {\em condition number} 
$\mu: \mathbb R^{\infty} \rightarrow [1, \infty]$}
is an arbitrary function. We will retain
this definition. The size of an input $x$ is its length plus $\log(\mu(x))$. We will use a
similar definition (length times $\log(\mu(x))$) that preserves the polynomial hierarchy.
A machine in Cucker’s model is a BSS machine over $\mathbb R$ modified so that all
computations are approximate, and for instance $c \leftarrow a+b$  produces a real number
$c$ so that
\[
	(1 - \epsilon)(a + b) \le c \le (1 + \epsilon)(a + b).
\]
The number $\epsilon$ is known to the machine. The cost of a computations is 
$-\log(\epsilon)$ times
the number of arithmetic operations and branches.
Several classes are defined. The classes \Pro and \NPro allow for a machine to
give a wrong answer if the precision $\epsilon$ is not small enough. This fails one of our
main wishes in this paper. An example of undesirable consequences is a constant
time algorithm to decide the problem $(K,\mu)$ where $K$ is Cantor’s middle thirds set
and $\mu(x) = d(x, K)-1$ is the natural condition number (see example \ref{Cantor}).
Cucker also defined a class where machines are not supposed to give wrong
answers. Those are classes \Piter and \NPiterU. Quoting from 
\ocite{Cucker}*{Sec.5.4},
\begin{quotation}\em
(...) Similarities with the development in the preceding paragraph,
however, appear to stop here, as we see as unlikely the existence of
complete problems in either \NPiterU or $\NPiterB$.
This is so because
the property characterizing problems in $\Piter$ -- the fact that any
computation with a given \umach ‘measures its error’ to ensure that
	outputs in $\{\text{Yes}, \text{No}\}$ are correct -- does not appear to be checkable in
\Piter . That is, we do not know how to check in \Piter that, given
	a circuit $C$ outputting values in $\{\text{Yes}, \text{No}, \text{Unsure}\}$ a point $x \in \mathbb R^n$,
and a real \umach, all \umach-evaluations of $C$ with input $x$ return
the same value (...)
\end{quotation}

Our definition of classes \fpP and \fpNP will ensure that approximate computations can be certified. This will allow for the existence of $\NP$-complete problems.

\subsection*{Acknowledgements} We would like to thank Felipe Cucker and Marc Braverman for conversation and insight. \changedA{Two anonymous referees
significantly helped us to 
clarify some results and to improve the presentation of this paper.
}

\part{Exact computations over $\mathbb R$}

We first extend the BSS model of complexity over the reals to a model
where condition numbers are part of the input size. 
Computations are assumed {\em exact}.
Yet, this model is rich enough to fulfill most of our wish list.
For instance, it gives an answer to the objection by Braverman and Yampolsky.
In the model we present now, ill-conditioned 
instances are deemed to have
a large input size. Ill-posed instances have {\em infinite} input size.
A consequence is that some BSS-undecidable problems can be decided
in polynomial time with respect to their new input size.

In the example of the Julia set, points on its boundary
are ill-posed and their input size is infinite. The same
is true in the example of the graph of the exponential.
Therefore a machine can still be {\em polynomial time}
without accepting or rejecting ill-posed points. The machine can
work longer for points close to the Julia set or the graph of the
exponential, because those points are ill-conditioned.

We follow \ocite{Cucker} in considering the choice of 
the condition number function as a part of the problem. Sensible choices
include reciprocal distances to an ill-posed set. The metric used
to measure that distance makes a difference, so we do not prescribe
any restriction to the condition number at this stage.

\begin{definition} A {\em decision problem} is a pair $(X, \mu)$ 
where $X \subseteq \mathbb R^{\infty}$ and 
\changedA{$\mu: \mathbb R^{\infty} \rightarrow [1, \infty]$}
is an arbitrary function.
\end{definition}
\remark{In most of our work the reciprocal 
condition number $\mu^{-1}$ 
is Lipschitz and can be efficiently estimated a posteriori.}

\section{BSS machines}

In this paper, a {\em machine} is always a BSS machine over \changed{a
ring as in Blum et al. \ycite{BCSS}. More precisely, 
{\em BSS machines over $\mathbb R$} are machines
over the field of real numbers with division and {\em BSS machines
over ${\mathbf F}_2$} are machines over the finite field with two
elements. The complexity theory over ${\mathbb F}_2$ is known to be
equivalent to Turing complexity with respect to polynomial time.}
\changed{Following \ocite{BCSS},
$\mathbb R^{\infty}$ is the disjoint union of all the $\mathbb R^k$,
$k \in \mathbb N$ and $\mathbb R_{\infty}$ is the class of bi-infinite
sequences $(x_k)_{k \in \mathbb Z} \in \mathbb R$ with $x_k=0$ but for a finite number of $k$. The input and output
spaces of the machine are $\mathbb R^{\infty}$ but the state space is
$\mathbb R_{\infty}$.}
The machine is
assumed to be in a particular {\em canonical form}. 
This means that each node performs at most one arithmetical
operation. 
For later reference, we formally define:

\begin{definition}\label{BSS} A {\em machine} $M$ is a BSS machine over the
reals in {\em canonical form}:
\begin{enumerate}[(a)]
	\item The input node maps input 
$x=(x_1, \dots, x_L)$ into state 
\[
	s=I(x)= ( \dots, 0, 
		\underbrace{1, \dots, 1}_{L \text{\ times}}
		, 0. x_1, \dots, x_L, 0, \dots)
\]
where the dot is placed at the right of the zero-th coordinate in $\mathbb R_{\infty}$.
\item The output node maps state $s$ into output $y=O(s)$,
with $y_i=s_i$, $i \ge 0$.
\item All branching tests are of the form $s_0 > 0$.
\item The map 
$\mathbb R_{\infty} \rightarrow \mathbb R_{\infty}$
associated to a computation node is one of the following:
\begin{eqnarray*}
\opload{c}: &s \mapsto& ( \cdots, s_{-1}, c . s_{1}, \cdots )\\
\opadd{j}{k}: &s \mapsto& ( \cdots, s_{-1}, s_j + s_k . s_{1}, \cdots )\\
\opsub{j}{k}: &s \mapsto& ( \cdots, s_{-1}, s_j - s_k . s_{1}, \cdots )\\
\opmult{j}{k}: &s \mapsto& ( \cdots, s_{-1}, s_j s_k . s_{1}, \cdots )\\
\opdiv{j}{k}: &s \mapsto& ( \cdots, s_{-1}, s_j/ s_k . s_{1}, \cdots )\\
\opcpy{j}: &s \mapsto& ( \cdots, s_{-1}, s_j . s_{1}, \cdots )\\
\end{eqnarray*}
\item Constants $c$ are assumed to be real numbers.
\item \changedA{To each `fifth-node' is associated a map 
$\shiftl(s)$ or $\shiftr(s)$: $\mathbb R_{\infty} \rightarrow \mathbb R_{\infty}$ where
$\shiftl(s)_i= s_{i+1}$ and 
$\shiftr(s)_i= s_{i-1}.$ }
\item There is only one output node.
\item Each division is preceded by tests $s_0 > 0$ and $s_0<0$. In case
	both tests fail, $s_0$ does not change and the next node is the actual node itself, so the machine never terminates.
\end{enumerate}
\end{definition}

\changedA{Every BSS machine can be replaced by a machine in canonical form, at a cost of a linear increase in the number of nodes and in the running time.}
A machine $M$ \changedA{in canonical form} can be described by the number $N$ 
of nodes, and by a list of
maps associated to each node. Let
$j,k: \{1,\dots, N\} \rightarrow \mathbb Z$, $\beta^+, \beta^- :\{1, \dots, N\} \rightarrow \{2, \dots, N\}$, and $c: \{1, \dots, N\} \rightarrow$ 
$\mathbb R$. The letter $\nu$ denotes
the current node number and the table below gives the 
next-node map 
$\beta: 
\{1, \dots, N\} \times \mathbb R_{\infty}
\rightarrow
\{2, \dots, N\} 
$ and the next-state map $g:
\{1, \dots, N\} \times \mathbb R_{\infty}
\rightarrow
\mathbb R_{\infty}
$. Below, 
$I: \mathbb R^{\infty} \rightarrow \mathbb R_{\infty}$
and
$O: \mathbb R_{\infty} \rightarrow \mathbb R^{\infty}$
denote the input and output maps.

\centerline{
\begin{tabular}{|c|c|}
\hline\hline
Node type of $\nu$. & Associated maps
\\\hline\hline
Input node 
&
$\beta(1,s) = \beta^+(1) = \beta^-(1)$ \\
($\nu=1$)
&
$g(1,s)=I(x)$
\\ \hline
Computation node
&
$\beta(\nu,s) = \beta^+(\nu) = \beta^-(\nu)$
\\
$\nu \in \{2, \dots, N-1\}$
&
$g(\nu,s) \in \left\{ \opload{c(\nu)}(s), 
\opmult{j(\nu)}{k(\nu)}(s),
\opdiv{j(\nu)}{k(\nu)}(s),
\right.$\\
& \hspace{\stretch{1}}
$\left.\opadd{j(\nu)}{k(\nu)}(s), \opsub{j(\nu)}{k(\nu)}(s),
\opcpy{j(\nu)}(s) \right\}$.
\\\hline
Branch node
& $\beta(\nu,s) = \left\{
\begin{array}{ll}
\beta^+(\nu) &\text{if $s_0 > 0$} \\
\beta^-(\nu) &\text{if $s_0 \le 0$} 
\end{array}
\right.
$
\\
$\nu \in \{2, \dots, N-1\}$
& $g(\nu,s) = s$
\\\hline
Output nodes 
&
$\beta(\nu,s) = \beta^+(\nu)= \beta^-(\nu) = \nu$
\\
($\nu=N$). 
&
$g(\nu,s) = s$
\\
&
Outputs $(s_1, \dots, s_m)$.
\\\hline
Fifth node
& $\beta(\nu,s) = \beta^+(\nu) = \beta^-(\nu)$
\\
$\nu \in \{2, \dots, N-1\}$
& 
$ g(\nu,s) \in \{ \shiftl(s), \shiftr(s) \}$ where
\\ & $\shiftl(s)_i= s_{i+1}$, 
$\shiftr(s)_i= s_{i-1}.$
\\\hline\hline
\end{tabular}
}

\medskip
\par
\begin{definition}
An 
 {\em exact computation} is a sequence $((\nu(t), s(t))_{t \in \mathbb N_0}$ in $\{1,\dots, N\} \times \mathbb R_{\infty}$ 
satisfying $\nu(0)=1$, $s(0)=I(x)$ and for all $t \ge 0$,
\begin{eqnarray*}
\nu(t+1) &=& \beta(\nu(t), s(t)) \hspace{1em}
\\
s(t+1) &=& g(\nu(t), s(t)) 
\end{eqnarray*}
The computation {\em terminates} if $\nu(t) = N$ eventually, and the 
execution time $T=T_M(x)$ is the smallest of such $t$. 
The terminating computation is said to {\em accept} 
	input $x$ if $s_1(T)>0$ and to {\em reject} $x$ otherwise. 
The {\em input-output map} is the map $M: x \mapsto M(x)=O( s ( T(x) ))$ and 
the {\em halting set} of $M$ is the 
domain of definition of the input-output map.
\end{definition}

\begin{remark} For input $x=(x_1, \dots, x_n)$ of length $n$,
$i > n + t$ implies $s_i(t)=0$. Also, $i<-t$ implies
$s_i(t)=0$.
\end{remark}

A Universal Machine $U$ over $\mathbb R$ (actually over any ring)
was constructed
by~\ocite{BSS}*{Sec.8}.
Any machine $M$ over $\mathbb R$ can be described by
a `program' $f_M \in \mathbb R^{\infty}$ so that $U(f_M,x) = M(x)$.
This holds whenever either of the two machines stops.
Moreover if any of these computations finishes in finite time,
the running time
$T_U(f_M,x)$ for the universal machine
is polynomially bounded in terms of the running time 
$T_M(x)$ for the original machine. 

The result below will be used later. It is a trivial modification
of the argument proving the existance of the Universal Machine. Let $\Omega_{M,T}=\{x \in \mathbb R^{\infty}:\nu_T = N\}$
denote the time-$T$ halting set
of a machine $M$.
\begin{proposition}\label{timed}
There is a Machine $U'$ over $\mathbb R$ \changedA{with real constants
and} with the following
\changedA{property}: for any machine $M$ over $\mathbb R$,
for any input 
$x \in \mathbb R^{\infty}$,
\[
U'(T,f_M,x) = 
\left\{
\begin{array}{ll}
M(x) & \text{if $x \in \Omega_{M,T}$,}\\
	0   & \text{(reject) otherwise.}
\end{array}\right.
\]
Moreover, the running time
$T_{U'}(T,f_M,x)$ is polynomially bounded in terms of the running time 
$\min(T,T_M(x))$. 
\end{proposition}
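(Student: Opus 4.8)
The plan is to obtain $U'$ from the universal machine $U$ of \ocite{BSS}*{Sec.8} by splicing a step counter into its simulation loop. Recall how $U$ works on an input $(f_M,x)$: it keeps on its state tape an encoding of the pair $(\nu(t),s(t))$ consisting of the current node number of $M$ and the current state $s(t)\in\mathbb R_{\infty}$ of $M$, and it repeats a fixed subroutine that decodes from $f_M$ the instruction attached to node $\nu(t)$, applies the corresponding map of Definition \ref{BSS} to the stored copy of $s(t)$, replaces the stored node number by $\nu(t+1)=\beta(\nu(t),s(t))$, and loops; the whole computation halts and emits $O(s(t))=M(x)$ as soon as the stored node number equals $N$. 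Each pass through this subroutine takes time polynomial in the sizes of $f_M$ and of the stored state, and $T_U(f_M,x)$ is polynomial in $T_M(x)$.

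To build $U'$, give it the additional input $T$ (a natural number, which in this model is just one real coordinate) and reserve a fresh coordinate $\kappa$ for a counter, both placed so as not to overlap the block encoding $(\nu,s)$. Initialize $\kappa\leftarrow 0$. Immediately after the point in the loop where the stored node number has been updated from $\nu(t)$ to $\nu(t+1)$, insert the instructions $\kappa\leftarrow\kappa+1$ followed by the comparison of $\kappa$ with $T$ (a subtraction and a branch of the form $s_0>0$); if $\kappa>T$ is detected, route the computation to an output node emitting $0$ (reject), and otherwise continue exactly as $U$ would. The only new constant used is $1$, which is real, so $U'$ has real constants.

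For correctness, observe that once $M$ is at node $N$ it stays there, so $x\in\Omega_{M,T}$ precisely when $T_M(x)\le T$. If $T_M(x)\le T$, the stored node reaches $N$ no later than when $\kappa$ reaches $T_M(x)\le T$, so the $\nu=N$ exit of the loop fires before the counter exit and $U'(T,f_M,x)=M(x)$. If $x\notin\Omega_{M,T}$ --- whether because $M$ halts only after more than $T$ steps or never halts --- the stored node is never $N$ during the first $T$ simulated steps, so $U'$ performs exactly $T$ passes and then, on the $(T+1)$-th increment, finds $\kappa>T$ and outputs $0$. In either case $U'$ makes at most $\min(T,T_M(x))+1$ passes through the loop, and truncating $U$ to that many simulated steps takes time polynomial in $\min(T,T_M(x))$, with the polynomial's coefficients allowed to depend on $f_M$ exactly as in the statement for $U$; the counter bookkeeping adds only a bounded number of extra operations per pass. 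The main point requiring care is the one already underlying the construction of $U$: one must know that a single simulated step of $M$ is realized by a \emph{fixed} finite subroutine of $U$, so that there is a well-defined place to insert the counter test and so that the per-pass overhead does not interfere with the (a priori unbounded) growth of the simulated state $s(t)$. Granting the standard analysis of the BSS universal machine, everything else --- disjointness of the tape blocks for $\kappa$, $T$ and $(\nu,s)$, the forced halt landing on a legitimate output node, and the branch tests being of the allowed form --- is routine bookkeeping.
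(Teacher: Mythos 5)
Your proposal is correct and is essentially the paper's own argument: the paper treats Proposition \ref{timed} as a trivial modification of the BSS universal machine construction, namely adding a timer to the simulation loop and forcing a reject (output $0$) once the counter exceeds $T$, exactly as you do. The off-by-one bookkeeping and the per-pass overhead analysis you supply are the routine details the paper leaves implicit.
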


\section{Polynomial time}

Recall that the input-output map for a machine $M$ with input $x$ is
denoted by $M(x)$ and the running time (number of steps) with input $x$
is denoted by $T_M(x)$. 
The {\em length} of an input $x=(x_1, \dots, x_n) \in \mathbb R^{\infty}$
is $\Length{x}=n$. 
We define the size of an instance $x$ of $(X,\mu)$ by
$\Size{x}= \Length{x} (1+\log_2 \mu(x))$.
In this paper, we make the convention that an output
$z=M(x) >0$ means YES 
and an output $z\le 0$ means NO.

In Turing and BSS complexity, the input size is its length which is
known. Therefore the
two following definitions of the class $\P$ of polynomial time decision problems are equivalent:
\begin{description}
	\item [`\changed{One-sided}' \P] $X \in \P$ iff there is a polynomial $p$ and a machine $M$ such
that for any $x \in X$ the machine $M$ with input $x$
		halts in time at most $p(\Length{x})$ and outputs a positive number, and for $x \not \in X$ the
machine does not halt.
\item [`\changed{Two-sided}' \P] 
$X \in \P$ iff there is a polynomial $p$ and a machine $\tilde M$ such
that for any $x$, the machine $\tilde M$ with input $x$
halts in time at most $p(\Length{x})$ and the output satisfies
\[
\tilde M(x) > 0 \Leftrightarrow x \in X.
\]
\end{description}
The delicate part of the argument for proving equivalence is the 
construction of the machine
$\tilde M$ given the machine $M$. This is done 
\changedA{in Proposition~\ref{timed}} by introducing a `timer' and
halting with a NO (negative) answer when the time is 
larger than $p(\Length{x})$.

Our model is different because the input size depends on
the condition, which is not assumed to 
be known.  \ocite{Braverman-Yampolsky} already explored the
idea of \changed{`one-sided'}\label{ref03} \P (actually \co\P) in their computer model, see
Example~\ref{BYweak}. They assumed the condition of {\em accepted} inputs can
be bounded conveniently. We do not make that explicit convention but all of our
examples admit an estimator.  We start with the \changed{one-sided} definition of $\P$.

\begin{definition}[Deterministic polynomial time]\label{P}
	The class \realPone (reads {\em \changed{one-sided}}~$\P$) of problems 
{\em recognizable in
polynomial time} is the
set of all pairs 
$(X, \mu)$ so that  
there is a BSS machine $M$ over $\mathbb R$ with input $x$, output
in $\mathbb R$ and with the following properties:
\begin{enumerate}[(a)]
\item There is a polynomial $p_{\Arith}$ such that {\bf whenever 
$x \in X$} and $\Size{x} < \infty$,
\[
T_M(x) < p_{\Arith}(\Size{x}).
\]
\item If $T_M(x) < \infty$, then 
\[
M(x) > 0 \Longleftrightarrow x \in X.
\]
\end{enumerate}
\end{definition}
The last condition implies in particular that for finite or infinite input size if an answer is given, it is correct.
\begin{definition}
	The class \realPtwo (reads {\em \changed{two-sided} \P}) of problems  
{\em decidable in polynomial time} is the
set of all pairs 
$(X, \mu)$ so that $(X, \mu)$ and $(\mathbb R^{\infty} \setminus X, \mu)$
are both in \realPone. 
\end{definition}

In this sense,
\[
\realPtwo = \realPone \cap \co\realPone
\]
Equivalently, one can remove the clause `{\em whenever $x \in X$}' 
from Definition~\ref{P}(a). 
Notice also that a
problem $(X,\mu)$ in $\realPone$ (resp. $\co\realPone$) 
`projects' in
$\realP''$ setting $\mu(x)=\infty$ for $x \not \in X$ (resp. $x \in X$).

\begin{example}\label{ex1} Let $X \in \mathbb R^{\infty}$. 
$X \in \PR$ if and only if $(X,1) \in \realPtwo$.
\end{example}

\begin{example} For any $X \subseteq \mathbb R^{\infty}$,
$(X,\infty) \in \realPtwo$.
\end{example}

The examples above are trivial. Below is a more instructive one.
It is known that $\mathbb Z \not \in \PR$ so that $(\mathbb Z,1) \not \in \realP''$.
If we plug in the correct condition number, then the `bits' of $\lfloor |x|
\rfloor$ can be found in polynomial time for all input $x$.

\begin{example}\changedA{Let $\mu(x)=1+|x|$. Then, $(\mathbb Z, \mu)$}
$\in \realP''$.
\end{example}

\begin{proof}
We consider the machine $M$ described by the following pseudo-code:

\begin{trivlist}
\item {\tt Input} $x$.
\item {\bf If} $x<0$ {\bf then} $x \leftarrow -x$.
\item $y\leftarrow1$.
\item {\bf While} $x \ge y$, 
\subitem $y \leftarrow 2y$.
\item {\bf While} $y \ge 2$, 
\subitem $y \leftarrow y/2$  
\subitem {\bf If} $x \ge y$ {\bf then} $x \leftarrow x-y$.
\item {\bf If} $x=0$ {\bf then} {\tt output} $1$ {\bf else} 
{\tt output} $-1$.
\end{trivlist}
When $|x|\ge 1$ each of the {\bf while} loops will be executed at most 
	$1+\log_2(\lfloor |x| \rfloor)$ \changedA{$\le \log_2(\mu(x))$} times. 
At the end of the second {\bf while}, $0 \le x < y$, and
$x - \lfloor x \rfloor$ is not changed after the first {\bf if}.
If $|x|<1$ the loops will not be executed at all, and the only
\changedA{possibly accepted} input is $x=0$.
\end{proof}

\begin{figure}
\centerline{\resizebox{7cm}{!}{\input{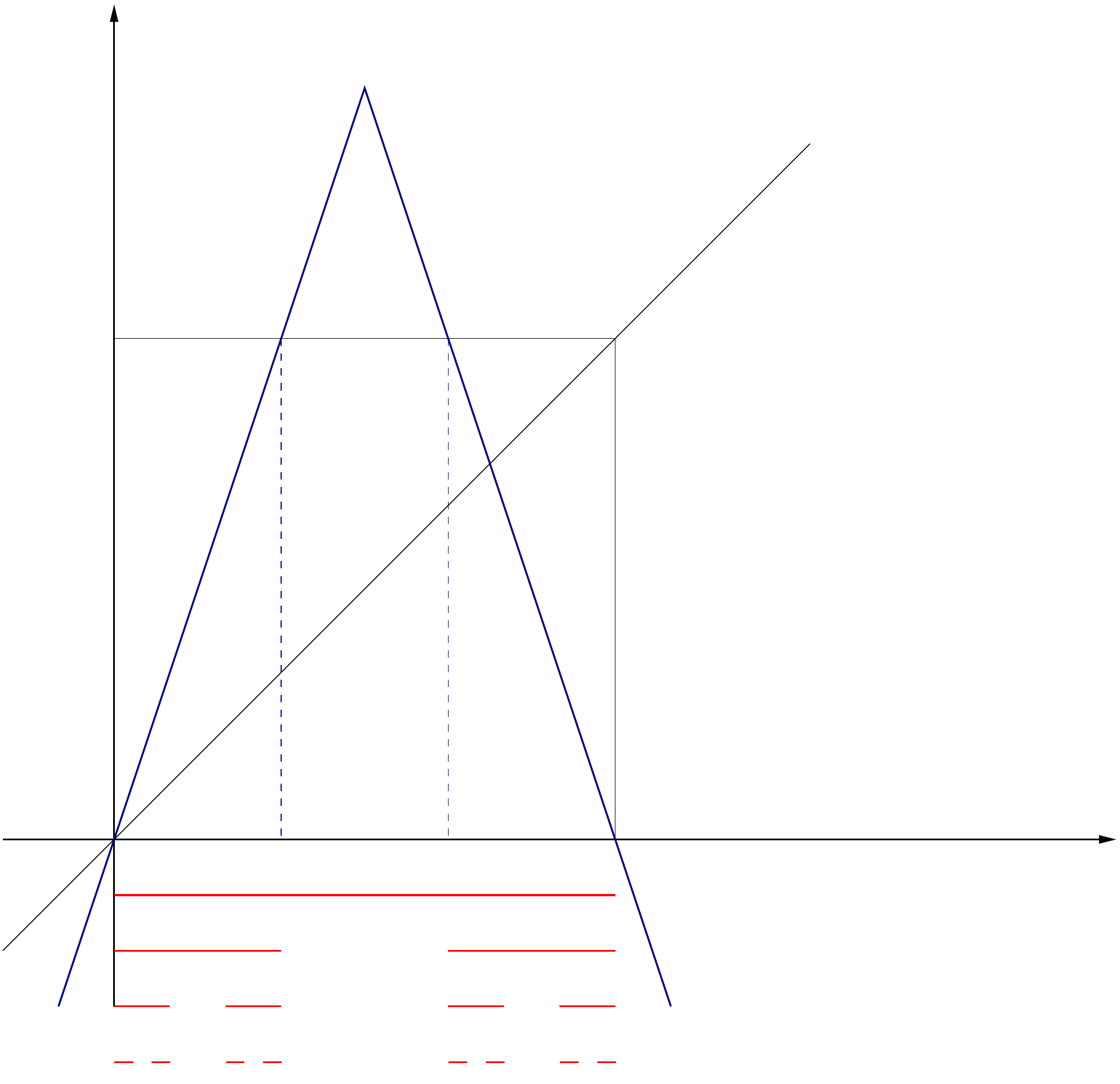_t}}}
\caption{Construction of the middle-thirds Cantor set.\label{cantor}}
\end{figure}

\begin{example}\label{Cantor} Let $C$ be the {\em Cantor middle thirds set} (Fig.\ref{cantor}), 
\[
C = \left\{ 
\sum_{k=1}^{\infty} \frac{2 a_k}{3^k} : a_k \in \{0,1\} \right\} .
\]
It can also be constructed as 
$C = \cap_{k \ge 0} C_k$, with $C_0=[0,1]$,
$C_k=T^{-1} C_{k-1}$ 
where the {\em tent map} $T: \mathbb R
\rightarrow \mathbb R$ is
\[
T(x) = \left\{ 
\begin{array}{ll}
3 x & \text{for $x \le \frac{1}{2}$} \\
3-3 x & \text{otherwise.}
\end{array}
\right.
\]
Clearly, $C$ is not a countable union\label{ref04} of disjoint
points and intervals and therefore 
it is not 
BSS computable. Membership to $\mathbb R\setminus C$ can be
verified by iterating the tent map.
The condition number for the Cantor set is defined as
\[
\mu_C(x) = \frac{1}{\min( d(x,C), 1)} 
\]
where $d(x,y)=|x-y|$ is the usual distance. Then $\mu_C$ is
infinite in $C$ and finite in $\mathbb R \setminus C$. Moreover,
for $x \in C_{k-1}\setminus C_k$, we have always 
$d(x, C) \le 3^{-k}/2$ so $\mu_C(x) \ge 2 \times 3^k \ge 1$
and
\[
\Size{x} \ge 2+k \log_2(3)
.
\]
Since $k$ iterates are sufficient to check that $x \not \in C_{k}$,
it follows that 
$(\mathbb R \setminus C, \mu_C) \in \realPone$. If $x \in C$ then
$\mu_C(x) = d(x,C)^{-1} = \infty$ so we also have $C \in \realPtwo$.
\end{example}

A small modification of the example
gives us a sharp separation:
\begin{proposition}
\[
\realPone \ne \realPtwo .
\] 
\end{proposition}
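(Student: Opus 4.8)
The plan is to exploit the built-in asymmetry of Definition~\ref{P}: a machine witnessing $(X,\mu)\in\realPone$ is constrained only on inputs of $X$ of \emph{finite} size, and may loop forever outside $X$. Since $\realPtwo=\realPone\cap\co\realPone$, it suffices to produce a single problem $(X,\mu)\in\realPone$ whose complementary problem $(\mathbb R^{\infty}\setminus X,\mu)$ is not in $\realPone$. I would keep the Cantor set $C$ of Example~\ref{Cantor} and merely replace its condition number by the cruder
\[
\mu(x)=\begin{cases}\infty,& x\in C,\\ 1,& x\in\mathbb R^{\infty}\setminus C,\end{cases}
\]
which is the announced ``small modification''.

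First I would verify $(C,\mu)\in\realPone$, essentially as in Example~\ref{Cantor}. Take the machine that iterates the tent map $T$ and outputs $-1$ as soon as it detects $x\notin C_k$; it never halts on $x\in C$. Then Definition~\ref{P}(a) is vacuous because $\Size{x}=\infty$ for every $x\in C$ (as $\mu(x)=\infty$ there), and Definition~\ref{P}(b) holds because whenever this machine halts it has certified $x\notin C$ and outputs $-1\le 0$; so the iff holds trivially. Hence $(C,\mu)\in\realPone$.

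The crux is to show $(\mathbb R^{\infty}\setminus C,\mu)\notin\realPone$. Suppose $M$ were such a machine, with arithmetic polynomial $p_{\Arith}$. Every $x$ in the length-one slice $\mathbb R\setminus C$ lies in $\mathbb R^{\infty}\setminus C$ and satisfies $\Size{x}=1$, so $T_M(x)<p_{\Arith}(1)$; fix an integer $k\ge p_{\Arith}(1)$. Thus $M$ genuinely halts within $k$ steps on \emph{every} $x\in\mathbb R\setminus C$, outputting a positive value there by Definition~\ref{P}(b), whereas on $C$ it outputs at most $0$ if it halts. Consequently the set $A=\{x\in\mathbb R:\ T_M(x)\le k\ \text{and}\ M(x)>0\}$ is exactly $\mathbb R\setminus C$. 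But $A$ is a finite union, over the finitely many computation paths of length $\le k$, of the regions of length-one inputs following each path; each such region is cut out by finitely many sign conditions on polynomials of bounded degree in the single input variable (divisions along a halting path are legitimate thanks to the guard tests required by the canonical form, Definition~\ref{BSS}), so $A$ is semialgebraic, i.e.\ a finite union of points and open intervals, and in particular has finitely many connected components. This contradicts the fact that $\mathbb R\setminus C=(-\infty,0)\cup(1,\infty)\cup\bigcup(\text{removed middle-third intervals})$ has infinitely many connected components. Therefore $(\mathbb R^{\infty}\setminus C,\mu)\notin\realPone$, hence $(C,\mu)\notin\co\realPone$, and $(C,\mu)\in\realPone\setminus\realPtwo$.

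The main obstacle I anticipate is making rigorous the step ``bounded-time BSS computation $\Rightarrow$ semialgebraic accept set'', since the canonical-form machine has division (which merely loops, rather than errors, on a zero denominator) and shift (``fifth'') nodes. This is handled by the observation that on $\mathbb R\setminus C$ the machine terminates in $\le k$ steps, so along each relevant path every divisor has passed a nonzero-sign test and the state remains a rational function of $x$ with non-vanishing denominator; clearing denominators keeps the branch conditions polynomial, and the shift nodes perform no arithmetic at all, so the standard finite-decomposition argument goes through unchanged.
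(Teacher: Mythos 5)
Your proof is correct and follows essentially the same route as the paper: a small modification of the Cantor-set condition number combined with the key fact that a BSS machine halting within a constant time bound on one real input can accept only a finite union of points and intervals, which no Cantor-flavored set (or its complement) is. The only difference is a mirror choice of which side carries the trivial condition — you set $\mu=\infty$ on $C$ and $1$ off $C$, making $(C,\mu)\in\realPone$ vacuous and deriving the contradiction from $\mathbb R\setminus C$ having infinitely many connected components, while the paper keeps $\mu_C$ off $C$, exhibits $(\mathbb R\setminus C,\mu)$ with $\mu=1$ on $C$, and contradicts the fact that $C$ is not a finite (or countable) union of points and intervals; you also spell out the semialgebraicity/path-decomposition step that the paper leaves implicit.
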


\begin{proof}
Define $\mu(x) = \mu_C(x)$ for $x \not \in C$ and $\mu(x)=1$ otherwise.
We claim that $(\mathbb R \setminus C, \mu) \not \in \realPtwo$.
Otherwise, the decision machine would be supposed to recognize $x \in C$
in constant time. This is impossible since 
\changed{$C$ is not a countable union of points and intervals.}
\end{proof}

\begin{figure}
\centerline{\resizebox{\textwidth}{!}{\input{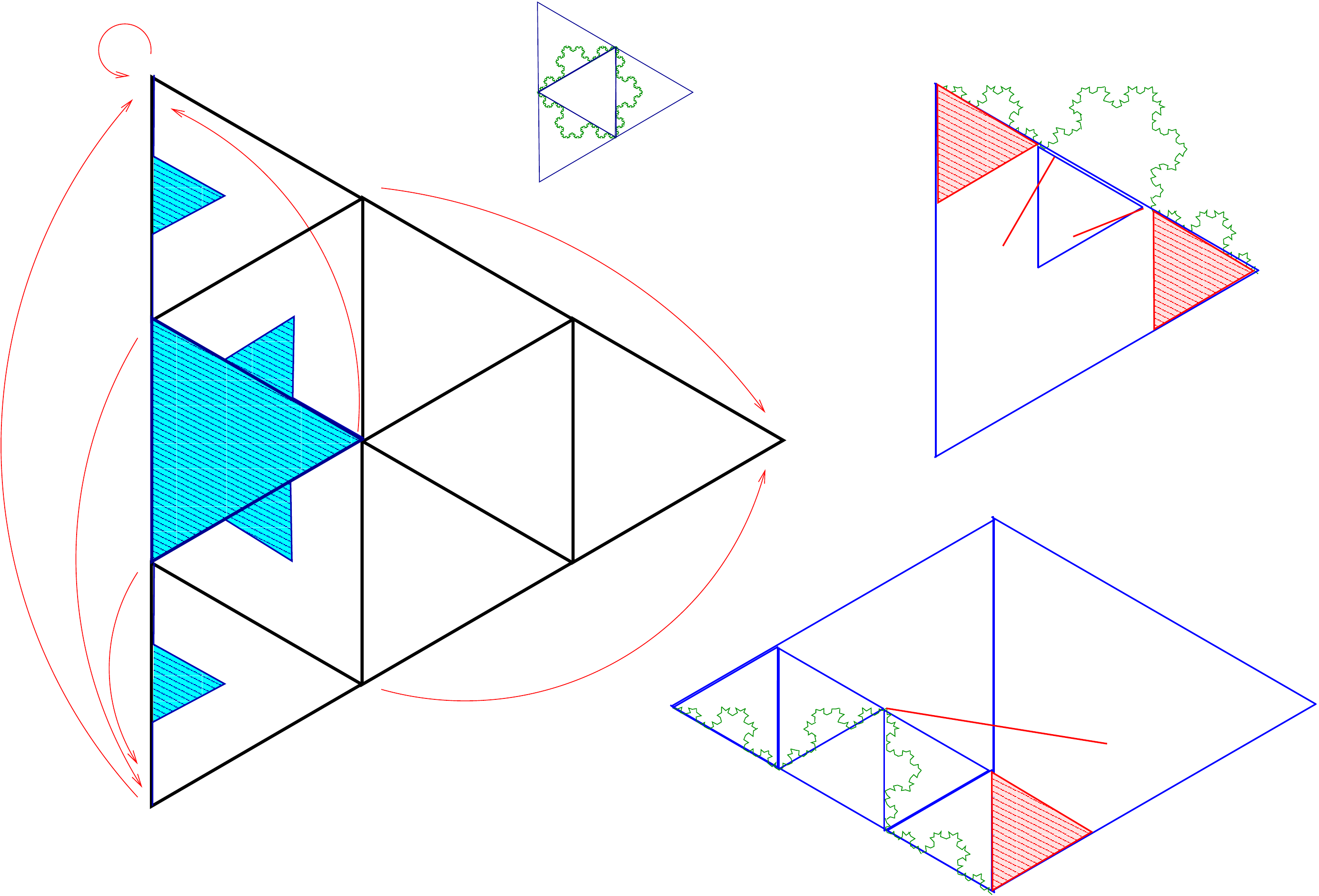_t}}}
\caption{Koch's snowflake: (a) General view, (b) construction: the 
mapping $T$, (c) approximation of the distance from an interior point to the
border: for $y \in e_0 \subset e$, the approximation is the distance from
$y$ to the segment $e \cap b$. For $z \in e_0'$, the distance can be
approximated by the distance to $w_1$ or $w_2$. For points on the red
	\changed{triangles}\label{ref05}, use self-similarity. (d) The distance from
an exterior point to the border can be approximated by the distance to
	$w_3$ except for the colored triangle. For points in the \changed{triangle},
use self-similarity.\label{koch}}
\end{figure}

\begin{example} The {\em Koch snowflake} from Fig.~\ref{koch}(a)
can be treated in a
similar way. To simplify the presentation we will check
computability of the region delimited by a Koch curve inside
an equilateral triangle. Namely, let $\omega = e^{2 \pi i/3}$
and let $A$ be the solid triangle $(1, \omega, \omega^2)$, that is
the convex hull of $1$, $\omega$, $\omega^2$. Subdivide 
$A$ as in Fig.\ref{koch}(b) and consider a piecewise linear map
$T:A \rightarrow A$ which is continuous in each subdivision, and
maps each of $a$, $b$, $c$, $d$ as in the picture, is undefined 
in $e$ and in the remaining regions.
Then define $K_0 = e$ and inductively, 
$K_{t+1} = T^{-1}(K_t)$.
The piece of snowflake is $K = \cup K_t$.

We set $\mu_K(x) = d(x, \partial K)^{-1}$. Since $T_{|a \cup b \cup c \cup d}$
multiplies distances by 3, it takes at most 
$\lfloor \log_3 \mu_K(x) \rfloor$ iterations of $T$ to decide if 
$x \in K$.
This condition number is geometrically appealing. Another important property
that we do not require in our model (but see Ex.~\ref{BYweak})
is the capacity of estimating
the condition number {\em a posteriori}. In this example,
estimating the distance to $\partial K$ is easy. First assume that
$x \in K_{t+1} \setminus K_t$. Let $y = T^t(x) \in e$ so that
$d(x, \partial K) = 3^{-t} d(y, \partial K)$. Assume without
loss of generality that the imaginary part of $y$ is non-negative.
Then approximate the distance as in Fig.\ref{koch}(c). For points
outside $K$, iterate until $T^k(y)$ leaves $a \cup b \cup c \cup d$
and then estimate the distance as in Fig.\ref{koch}(d).
\end{example}

\begin{example}
The epigraph of the exponential, with condition number $\mu(x,y)=
	\changed{\max(|x|,1)/|e^x-y|}$, 
	is in $\realP''$.
The supporting algorithm for this problem was described by
~\ocite{Brent} in the context of floating point computations: the
	cost of computing $e^x$ \changed{
in a given interval,
	say $0 \le x \le 2$,} with absolute accuracy $2^{-n}$ 
is $O(M(n)\log (n))$ where $M(n) \ge n$ is the cost of multiplication
(Th.6.1). Since we are using a model that allows for exact computations,
the cost of the very same algorithm becomes $O(n \log(n))$. However, this
	bound is valid only for $x$ \changed{in the interval}. 
\par
In order to extend the result to the reals, assume first that
	$x > 0$ and write \changed{$x = x_0 2^a$} with $a \in \mathbb N$ and 
$1 \le x_0 \le 2$ \changed{or $a=0$ with $0 \le x \le 2$}.
Then $e^x = (e^{x_0})^{2^a}$. This means that we should compute $e^{x_0}$ 
with accuracy $2^{-O(n+2^a)}$ at a cost of $O( (n+|x|) \log(n+|x|))$. 
	For $x<0$, we just \changed{compare $e^{-x}$ to $1/y$ as above (no extra accuracy is needed
for the inverse)}
\changed{The following pseudo-code summarizes the procedure:
\begin{trivlist}\label{ref06}
\item {\tt Input} $x,y$.
\item {\bf If} $x\ge 0$,
\subitem {\bf then} $s\leftarrow 1$
\subitem {\bf else} $s\leftarrow -1$, $x \leftarrow -x$, $y \leftarrow 1/y$.
\item $a \leftarrow 0$, $x_0 \leftarrow x$.
\item {\bf While} $x_0>2$, {\bf do} $x_0 \leftarrow x_0/2$, $a \leftarrow a+1$.
\item {\bf $n_0 \leftarrow 8$}
\item {\bf Repeat}
	\subitem Apply Brent's algorithm to compute $z_0 \leftarrow e^{x_0}$ with accuracy $2^{-n_0}$.
	\subitem Compute $z \leftarrow z_0^{2^a}$ 
	by repeated squaring.
	\subitem Compute $E \leftarrow (1+2^{-n_0})^{2^a}$ 
	by repeated squaring.
	\subitem {\bf If} $y > z E$ {\bf then} {\tt output} $s$.
	\subitem {\bf If} $y < z / E$ {\bf then} {\tt output} $-s$.
	\subitem $n_0 \leftarrow n_0+1$.
\end{trivlist}
}
\end{example}

\begin{example}\label{BYweak}
 A notion of {\em weakly computable set} of $\mathbb R^n$
	was explored by ~\ocite{Braverman-Yampolsky}.\label{ref07}
In their model a point $x \in \mathbb R^n$ is represented by an
`oracle' function that given $m$, produces a $2^{-m}$-approximation
with cost $O(m)$.
A set $K \subset \mathbb R^n$ is (weakly) computable if there is an
oracle Turing machine $T_K$ that given
	a \changed{point} $x \in \mathbb R^n$ (represented by an oracle)
and given $m$,
\begin{itemize}
\item answers 1 (true) if $x \in K$,
\item answers 0 (false) if $d(x,K) > 2^{-m+1}$,
\item answers 0 or 1 otherwise.
\end{itemize}
Above, $d$ is the Euclidean distance. If
the answer is $0$ we can infer that
	$x \not \in K$. An answer of $1$ is inconclusive.
\par
A set can be computed in polynomial
time if and only if the machine $T_K$ terminates in time polynomial in $m$.
Examples of computable polynomial time Julia sets are given in their book
	(for instance, Th 3.4 \changed{p.42} ).
\par
Given any closed, bounded and 
weakly computable set $K$, we may define the condition number
$\mu(x) = 2^m$ for $x \not \in K$, where $m$ is minimal so that $T_K$ 
	with input \changed{$(x,m)$} returns $0$. Of course, $\mu(x) = \infty$ for $x \in K$.
	Then $(\changed{\mathbb R^n \setminus K}, \mu) \in \realPone$. Indeed, given any
$x$, it is easy in the BSS model to obtain diadic approximations 
of $x$ in polynomial time. This replaces the oracle. Then we can simulate
the Turing machine $T_K$ within the same time bound. 
\par
Because we assumed that $K$ is closed, $x \not \in K$ implies that 
	$d(x, \partial K)>0$. Taking $m > -\log d(x, \partial K) +1$ \changed{already} guarantees that
the machine $T_K$ with input $x$ answers $0$. 
\end{example}

\begin{remark}
A more natural definition for the example above would be
$\mu(x) = d(x, \partial K)^{-1}$, and we would have $m < \log \mu(x) + 1$
anyway. This condition number can possibly be much larger than the original
one. 
\end{remark}

\begin{example}[canonical condition]
This is a generalization of the condition number of 
Example~\ref{BYweak}. 
Let $M$ be an arbitrary BSS machine over $\mathbb R$.
Let $\Omega_{M,T}=\{x: T_M(x) = T\} 
\subset \mathbb R^{\infty}$ 
be the time-$T$ halting set of $M$. 
Let $\Omega_M = \cup_{T \ge 1} \Omega_{M,T}$ be
the halting set of $M$. Then, define
\[
X = \{ x \in \Omega_M: \text{$M$ accepts $x$}\}
\hspace{1em}
\text{and}
\hspace{1em}
\mu(x) = 
\left\{
\begin{array}{rl} 
2^T & \text{if $x \in \Omega_{M,T}$ } \\
\infty & \text{if $x \not \in \Omega_M$.}
\end{array}
\right.
\]
Then $(X, \mu)$ is in $\realP''$. We will call $\mu$ the {\em canonical
condition} associated to a machine $M$. Given any machine $M$
accepting $X$,
one always have $(X, \mu) \in \realPtwo$. 
This is exactly the same trick as
increasing the size of an input instance in discrete computability theory.
\end{example}
The previous examples of problems $(X,\mu) \in \realPone$ 
admit an estimator for $\log_2 (\mu(x))$ when $x \in X$ up to
a bounded relative error.

More generally, for machines not necessarily solving a \changed{decision}\label{ref08} 
problem, a function $f(x)$
can be
estimated in polynomial time (with respect to some input size) if
and only if given $\epsilon$, there is a machine that produces
an $\epsilon$-approximation of $f(t)$ in time polynomial in the
input size and $\log \epsilon^{-1}$. For instance, the $d$-th root
of $t$ for $t>1$ can be approximated in polynomial time.
An easy modification of our previous example yields:

\begin{theorem}
Let $(X,\mu) \in \realPone$. Then there is $\mu' \le \mu$ so
that $(X, \mu') \in \realPone$ and $\log_2 \mu'$ 
can be estimated in
polynomial time with respect to the input size for $x \in X$.
\end{theorem}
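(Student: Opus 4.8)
The plan is to turn the running time of the given recognizing machine into a new, smaller condition number. Fix a machine $M$ and a polynomial $p_{\Arith}$ witnessing $(X,\mu)\in\realPone$; we may assume $p_{\Arith}$ is nondecreasing and $p_{\Arith}(m)\ge m$ for all $m\ge 0$. For $x\in X$ with $\mu(x)<\infty$, Definition~\ref{P}(a) forces $T_M(x)<p_{\Arith}(\Size{x})<\infty$, so $M$ halts on $x$ and the least positive integer $g$ with $T_M(x)\le p_{\Arith}\bigl(\Length{x}(1+g)\bigr)$ is well defined; call it $h(x)$. I would then define $\mu'(x):=2^{\,h(x)-1}$ for such $x$, $\mu'(x):=\infty$ for $x\in X$ with $\mu(x)=\infty$, and $\mu'(x):=1$ for $x\notin X$, so that $\mu'\colon\mathbb R^\infty\to[1,\infty]$. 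This is the ``easy modification'' of the canonical condition $2^{T_M(x)}$: instead of the exponent $T_M(x)$ we take the smallest time-budget exponent $h(x)$ that actually suffices for $M$ on $x$, which is what pulls $\mu'$ down below $\mu$.

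The first thing to check is $\mu'\le\mu$. Off $X$, and on $\{x\in X:\mu(x)=\infty\}$, this is immediate. For $x\in X$ with $\mu(x)<\infty$, set $g_0:=\max\bigl(1,\lceil\log_2\mu(x)\rceil\bigr)$; then $g_0\ge\log_2\mu(x)$, hence $\Length{x}(1+g_0)\ge\Size{x}$, and by monotonicity and Definition~\ref{P}(a), $p_{\Arith}(\Length{x}(1+g_0))\ge p_{\Arith}(\Size{x})>T_M(x)$. Thus $h(x)\le g_0$, so $\mu'(x)=2^{h(x)-1}\le 2^{g_0-1}\le\max(1,\mu(x))=\mu(x)$.

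Next I would show $(X,\mu')\in\realPone$, using the very same machine $M$. Definition~\ref{P}(b) is a statement about $M$ alone and is inherited verbatim. For Definition~\ref{P}(a): if $x\in X$ has finite size as an instance of $(X,\mu')$, then $\mu'(x)<\infty$, i.e.\ $\mu(x)<\infty$, and that size equals $\Length{x}(1+\log_2\mu'(x))=\Length{x}\,h(x)$. Since $h(x)\ge 1$ we have $\Length{x}\le\Length{x}\,h(x)$, whence $\Length{x}(1+h(x))\le 2\,\Length{x}\,h(x)$; combined with the defining inequality $T_M(x)\le p_{\Arith}(\Length{x}(1+h(x)))$ and monotonicity of $p_{\Arith}$, this gives $T_M(x)\le p_{\Arith}\bigl(2\,\Length{x}\,h(x)\bigr)$, which is polynomial in the $\mu'$-size of $x$. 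Hence $M$ runs in polynomial time with respect to the $\mu'$-size on all of $X$, as required.

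Finally, $\log_2\mu'$ must be estimated in polynomial time on $X$; in fact it can be computed exactly there, since for $x\in X$ with $\mu(x)<\infty$ one has $\log_2\mu'(x)=h(x)-1\in\mathbb N_0$ (when $\mu(x)=\infty$ the $\mu'$-size is infinite and nothing is required). The estimator $E$ simulates $M$ on $x$ with the universal machine until it halts --- costing $\mathrm{poly}(T_M(x))$ steps --- while keeping a step counter holding $T_M(x)$, then reads off $\Length{x}$ and scans $g=1,2,\dots$ for the first $g$ with $T_M(x)\le p_{\Arith}(\Length{x}(1+g))$, outputting $g-1$. The scan has $h(x)$ rounds, each one evaluation of the fixed polynomial $p_{\Arith}$ at a single real number plus one comparison, so it costs $O(\Length{x}+h(x))$; together with $T_M(x)\le p_{\Arith}(2\,\Length{x}\,h(x))$ this bounds $T_E(x)$ by a polynomial in $\Length{x}\,h(x)$, the $\mu'$-size of $x$ (hence also in $\Size{x}$, since $\mu'\le\mu$). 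I expect the only genuinely delicate point to be the tension in the second and third steps: making $\mu'\le\mu$ shrinks the budgeted size, so the budget could slip below $T_M(x)$ and break polynomiality; choosing $h(x)$ as the least sufficient budget exponent resolves this by anchoring the new size to the true running time $T_M(x)$ rather than to the possibly much larger quantity $\log_2\mu(x)$, and simultaneously makes $\log_2\mu'$ computable at the cost of one run of $M$.
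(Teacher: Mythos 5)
Your proposal is correct and takes essentially the same route as the paper: both redefine the condition number by inverting the running-time bound, so that the very same machine $M$ witnesses $(X,\mu')\in\realPone$ and $\log_2\mu'(x)$ is obtained by running $M$ on $x$ and reading off $T_M(x)$. The only difference is cosmetic --- the paper solves $T_M(x)=c\,\Length{x}^d(1+\log_2\mu'(x))^d$ exactly, so estimating $\log_2\mu'$ uses polynomial-time $d$-th root approximation, whereas you take the least integer budget exponent $h(x)$, making $\log_2\mu'$ integer-valued and exactly computable by a short scan.
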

\begin{proof}
	Let $M$ be the machine in Definition~\ref{P}. There \changedA{are $c,d>0$} so
that for $x \in X$,
\[
	T_M(x) \le \changedA{c} \Size{x}^d = \changedA{c} \Length{x}^d (1+\log_2 \mu(x))^d
\]
We define $\mu'(x)$ by solving the equation
\[
	T_M(x) = \changedA{c} \Length{x}^d (1+\log_2 \mu'(x))^d
\]
that is
\[
	\mu'(x) = 2^{\frac{ \sqrt[d]{T_M(x)\changedA{/c} } }{\Length{x}}-1}
\]
so that $\mu'(x) \le \mu(x)$ and $(X,\mu') \in \realPone$ \changedA{is} decided
by the same machine $M$. 
\end{proof}

\section{Non-deterministic polynomial time}

\begin{definition}[Non-deterministic polynomial time]\label{NP}
The class \realNPone of problems recognizable in non-deterministic
polynomial time is the
set of all pairs 
$(X, \mu)$ so that  
there is a BSS machine $M$ over $\mathbb R$ with input $(x,y)$, output
in $\mathbb R$ and with the following properties:
\begin{enumerate}[(a)]
\item There is a polynomial $p$ such that whenever 
$\Size{x} < \infty$ and $x \in X$, there is $y$ such that
		$M(x,y)>0$ and
\[
T_M(x,y) < p(\Size{x}).
\]
\item If $T_M(x,y) < \infty$ and $M(x,y)>0$, then $x \in X$.
\end{enumerate}
\end{definition}

The possibility of rejecting an unlucky guess $y$ for $x \in X$
is irrelevant and we
can replace the machine in the definition by a machine that can 
\changedA{either work forever
or accept the input.}
Clearly, $\realPtwo \subseteq \realPone \subseteq \realNPone$. 
A few trivial examples in \realNPone are:

\begin{example} Let $X \in \mathbb R^{\infty}$. $X \in \realNP$ if
and only if $(X,1) \in \realNPone$.
\end{example}

\begin{example} For any $X \subseteq \mathbb R^{\infty}$,
Then $(X,\infty) \in \realNPone$.
\end{example}

The following problem is in $\realNP$: 
\begin{example} Let \SAFeas\ (Semi-Algebraic Feasibility) be the set of all $(n,f)$
where $n \in \mathbb N$ and $f$ is a system of 
real polynomial equations in $n$ variables codified in sparse representation,
such that there is some $y \in \mathbb R^n$
with $f(y)>0$ (coordinatewise). The problem $\SAFeas$ belongs to $\realNP$
because
a non-deterministic
machine $M=M(f,y)$ can guess $y$ and compute $f(y)$ \cite{BCSS}*{Prop.3 p.103}. 
In particular $(\SAFeas,1) \in \realNPone$. 
\end{example}

\begin{example} A modification of the previous example:
Let \SAEFeas\ (Semi-Algebraic-Exponential Feasibility)
be the set of all $(n,f)$
where $n \in \mathbb N$ and $f$ codifies a system of 
real polynomial equations in $2n$ variables such that there is some 
$y \in [0,1]^{n}$
with $f(y,e^y)>0$ (coordinatewise). The machine $M=M(f,y,\epsilon)$
will compute each $e^{y_i}$ approximately up to relative error $\epsilon$.
There will be for each $f$ in the yes set a best guess $y$ and a value of
$\epsilon$ such that $\epsilon$-approximations of $e^{y_i}$ are sufficient
to infer that
$f(y, e^y)>0$. The condition number for such $f$ will be deemed to be 
$1/\epsilon$.
\end{example}

\begin{example}\label{ref09} This is a basic geometric example of problem in $\realNPone$.
\changed{For definition and references on the geometric concepts,
we recommend the textbook by \ocite{Berger}}.
	Let $S \subseteq \mathbb R^n$ be a smooth algebraic variety with \changed{maximal} 
principal curvature $\le \kappa_{\max}$. 
Let $d$ denote the
Euclidean distance in $\mathbb R^n$, while $d_{\ell}$ is the
distance along $S$, 
\[
d_{\ell}(x,y) = 
\changed{
\lim_{\epsilon \rightarrow 0} 
\inf_{\substack{
x_0=x,\  
x_N=y\\
x_i \in S,\  i=1 \dots N-1 \\
d(x_i, x_{i+1})<\epsilon,\  i=0 \dots N-1\\
}}}
\sum_{i=0}^{N-1} d(x_i, x_{i+1})
.
\]
\changed{ 
Let $\delta_0$ be such that
$S$ admits a $\delta_0$-tubular neighborhood. 
This means that the neighborhood $\{x \subset \mathbb R^n: d(x, S) <
	\delta_0\}$ is diffeomorphic to the normal bundle of $S$.}

Let $x$ be a fixed point of $S$ and let $r>0$. 
Then define
\[
X = \{ y \in S: d_\ell(x,y) < r\}
\text{ and }
\mu(y) = 2^{ 1/|r - d_\ell(x,y)| }
.\]
We claim that $(X, \mu) \in \realNPone$. 
\end{example}
\begin{proof}
The supporting algorithm is as follows:
\begin{trivlist}
\item {\tt Input} $(y; x_1, \dots, x_{N-1}, \delta)$.
\item $x_0 \leftarrow x$, $x_N \leftarrow y$.
\item \changed{{\bf If} $y \not \in S$ {\bf or} some $x_i \not \in S$
	\bf{ then } {\tt output} -1. }
\item {\bf If} $\delta \ge \delta_0$ {\bf then} {\tt output} -1.
\item {\bf For} $0 \le i \le N-1$, 
\subitem {\bf If} $\|x_{i+1} - x_i\| > \delta$ {\bf then} {\tt output} -1.
\item $m \leftarrow r - \sum_{i=0}^{N-1} \|x_{i+1}-x_i\|$.
\item {\tt Output} $m$.
\end{trivlist}

Before proving correctness of the algorithm, we notice the following
fact: if $x(t): [0,T] \rightarrow S$ is a minimizing geodesic, then
\begin{equation}\label{geodesic-estimate}
T - \frac{\kappa_{\max}T^2}{2}
\le
\| x(T) - x(0) \|  
\le T
.
\end{equation}
	\changed{The upper bound above follows from the triangle inequality.
	To establish the lower bound on \eqref{geodesic-estimate},} we write
\[
x(T)-x(0) = \int_{0}^T \dot x(t) \dd t =  T \dot x(0) + \int_0^T \int_0^t
\ddot x(s) \dd s \dd t .
\]
Then we use $\|\dot x(0)\|=1$, $\|\ddot x(t)\| = \kappa(t) \le \kappa_{\max}$
and triangular inequality.
Now assume that $y \in X$. Pick 
	\begin{equation}\label{eq-N}
N = 
\max
\left(
\left \lceil
\frac{\kappa_{\max} r^2 \log_2 \mu(y)}{2}
\right \rceil
,
\left \lceil
\frac{r}{\delta_0}
\right \rceil
\right)
	.\end{equation}
and set $\delta = d_{\ell}(x,y)/N$.
	By the Hopf-Rinow theorem \changed{\cite{Berger}*{th.52}},
there is a minimizing geodesic $x(t)_{t \in [0,N\delta]}$ between
$x$ and $y$. Set $x_i=x(i \delta)$. With those choices, the
algorithm computes $m = r - \sum_{i=0}^N \| x_{i+1} - x_i \|$.
	\changed{Upon acceptation,} equation \eqref{geodesic-estimate} yields
\[
0< r - d_{\ell}(x,y) \le m \le r -d_{\ell}(x,y) + \frac{\kappa_{\max} d_{\ell}^2}{2N}
\le
	2(r - d_{\ell}(x,y) \changed{)}
\]
\changed{where the last inequality is a consequence of \eqref{eq-N} and
	of the choice of $\mu$.}
	\changedA{The following estimate for $\mu$ also follows:}
\[
\mu(y) \le 2^{2/m} \le \mu(y)^2
\]
\end{proof}
\changed{
\begin{remark}
In the example above, the subset $X$ is contained in the connected
component $S' \subset S$ containing $x$. 
By taking $r$ as an extra input to the supporting algorithm, 
one can also deduce that $(S',1)$ is in $\realNPone$.
\end{remark}
} 

\begin{remark} In many problems of interest, the reciprocal of the condition number is equal or related
to the {\em distance} to the set $\Sigma$ 
	of degenerate inputs. The choice of the metric depends usually on the problem one wants to solve, and there may be several workable choices. In the context of this paper, one can make the subset of inputs $x \in \mathcal I$ with finite or infinite condition $\mu(x)$ into a \changedA{pseudo}-metric space by defining $d(x,y)= | \mu(x)^{-1} - \mu(y)^{-1} |$. This setting has the inconvenience of attibuting distance zero to different problems with the same condition. Another possibility is setting
\[
d(x,y) = | \mu(x)^{-1} - \mu(y)^{-1} | + \|x-y\|_2
\]
which provides an inequalty $\mu(x)^{-1} \le d(x, \Sigma)$.
	\end{remark}
\begin{remark} 
A {\em path metric
space} is a metric space where the distance between two points is the
infimum of the length of the curves between those two points. For 
instance, Riemannian manifolds are path metric spaces. 

A necessary and sufficient condition for a complete metric space
$(S,d)$ to be a path metric space is that for arbitrary points
$x, y \in X$ and for each $\epsilon > 0$, there is $z$ such that
\[
\max( d(x,z), d(z,y) ) \le \frac{1}{2} d(x,y) + \epsilon
.
\]
\cite{Gromov}*{Th.1.8 p.7}.
It may be possible
to generalize the example above to other path metric spaces.
A common situation is to have a subset $S$ (e.g. a manifold)
embedded into another metric space (e.g. $(\mathbb R^n,d)$ or
$(\mathbb P^n,d)$). The subset $S$ inherits the metric of the ambient space,
but we can also define a path metric along $S$.

In general, $d \le d_{\ell}$ but it is hard to obtain upper bounds
for $d_{\ell}$ ~\cite{Gromov}*{Sec.1.15$\frac{1}{2}_+$}.
The example above seems easier to generalize when such upper bounds are
available.
\end{remark}

\section{\NP-completeness}

NP-hardness will be defined through \changed{one-sided} Turing reductions.
Informally, a {\em Turing reduction} from a problem $(X,\mu)$ to
a problem $(Y, \eta)$ is a BSS machine for $(X, \mu)$ that is also 
allowed to repeatedly query a machine for $(Y, \eta)$. This reduction is
said to be a polynomial time reduction if and only if, for all $x \in X$
the machine for $(X, \mu)$ runs in polynomial time, and produces polynomially
many queries to the machine for $(Y, \eta)$, and the size of each query is
polynomially bounded \changed{in} the input size $\Size{x}$.
This definition ensures that given a polynomial time 
Turing reduction,  
$(Y,\eta) \in \realPone$ implies $(X,\mu) \in \realPone$.

\begin{remark}[Many-one vs Turing reductions]
A stricter notion of reducibility was used by \ocite{BCSS}*{Sections 5.3, 5.4} through the use of {\em p-morphisms}. In their definition, the reduction machine can call the machine for $(Y,\eta)$ only once. This is also called a {\em many-one reduction}. The reduction of a general problem $X \in \realNP$ to the canonical $\realNP$-complete problem goes by a reduction to the register equations/inequations up to time $T$. 
This value of $T$ can be bounded polynomially in terms of the input size, which is known in the BSS model. In this paper the input size depends on the condition which is not assumed known. Hence the p-morphism based argument 
fails, and we need to consider Turing reductions instead.
\end{remark}

\begin{definition}\label{oracle1}
A BSS machine with a black box for $(Y, \eta)$
({\em black box machine} for short)
is a BSS machine $M$ over $\mathbb R$
with an extra node $\nu_O$, the {\em black box} node. It has 
one outgoing edge $(\nu_O,\beta(\nu_O))$. 
\end{definition}

The {\em black box} 
can be thought as a subroutine 
to compute
a certain arbitrary function
$O: \mathbb R^{\infty} \rightarrow \mathbb R$. This subroutine will
have to satisfy certain properties regarding correctness and time
cost. In modern 
programming this sort of routine is called an {\em abstract method} 
while in traditional computer science it is called an {\em oracle}.

When the black box node is attained (say at time $t$), 
it interprets a fixed set
of the state variables $s_t$ as an input of the form 
$(S,y)$ to the `subroutine'. If $y \in Y$ and $\Size{y}\le S$ the black box node
will replace $S$ with a positive output $O(y)$. 
If the black box produces a positive
output given $(S,y)$, then $y \in Y$. The black box always produces an answer,
and if  $\Size{y}>S$ it may `time out' and replace $S$ by a 
negative value. A more precise description can be given through the
register equations.

\changed{In classical complexity theory, black boxes are assumed to give
an answer in unit time. The total computation time of a black box machine
with input $x$
includes the time for preparing each black box input $y$, 
so $\Length{y}$ is still a lower bound for the running time. Whence,
$\Length{y}$ is polynomially bounded with respect to $\Length{x}$.
If the black
box is replaced by a routine that runs in time polynomial \changedA{in} $\Length{y}$,
the total running time will be polynomial \changedA{in} $\Length{x}$.
In this
paper, $\Size{y} = \Length{y}(1+\log_2 \eta(y))$ is not known at the time of the query, so the same trick is not available any more.
This forces us to depart from the classical complexity theory
and to assume instead that black boxes answer in a certain prescribed
time $S$. There are no false positives, but the black box may fail
to accept $y \in Y$ if 
$\Size{y} > S$.}

\begin{definition} \label{oracle2}
The register equations for a machine $M$ with a black box
for the problem
	$(Y,\eta)$\label{ref11} 
are the same as the register equations for $M$, except when the black box 
node $\nu_O$ is attained.
If $\nu_O$ is first attained at $t=t_0$ and the machine is at internal state
	$s(t_0) = \changed{(z, S. y)}$, then
		\[
(\nu(\tau), s(\tau)) = 
\left\{
\begin{array}{ll}
	(\nu_0, \changed{(z,S. y)}) 
&
	\text{for $0 \le \tau - t_0 < \changed{S}$;}
\\
	(\beta(\nu_0), \changed{(z, r. \dots)}) & \text{If $\tau-t_0=\changed{S}$ with 
}\\
\multicolumn{2}{c}{
	\changedA{r=\left\{\begin{array}{ll}
	+1 & \text{if $y \in Y$ and $\Size{y} \le S$,}\\
	-1 & \text{if $y \not \in Y$,}\\
	\pm 1 & \text{in all the other cases.} \end{array}\right.
	}}
\end{array}
\right.
\]
\end{definition}

The {\em computation} for a machine $M$ with a black box and a given input is just the output associated to the register equations. While the size of each query to the black box is not assumed to be known, one strategy to build such a machine is to keep doubling the bound $S$ until the input is hopefully accepted.
\begin{definition}\label{Turing-reduction}
	A polynomial time \changed{(one sided)} 
Turing reduction from $(X,\mu)$ to $(Y,\eta)$ is a BSS machine $R$
over $\mathbb R$ with a black box for $(Y,\eta)$ such
that any computation for $R$ with input $x$  
satisfies: 
\begin{enumerate}[(a)]
\item There is a polynomial $p$ such that whenever $x \in X$ and
$\Size{x} < \infty$,
$T_{R}(x) < p({\Size{x}})$.
\item If $T_{R}(x) < \infty$, then 
\[
R(x) > 0 \Longleftrightarrow x \in X
.
\]
\end{enumerate}
\end{definition}

\begin{lemma} If there is a polynomial time \changed{one-sided} 
Turing reduction from $(X,\mu)$ into
$(Y, \eta)$ and if $(Y, \eta) \in \realP'$, then $(X,\mu) \in \realP'$.
\end{lemma}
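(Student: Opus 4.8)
The plan is to turn the reduction machine $R$ into an ordinary machine $M$ for $(X,\mu)$ by replacing each call to the black box for $(Y,\eta)$ with a \emph{timed} simulation of a machine that recognizes $Y$. Concretely, since $(Y,\eta)\in\realPone$, I would first fix a machine $M_Y$ and a polynomial $p_{\Arith}$ as in Definition~\ref{P}, a program $f_{M_Y}\in\mathbb R^\infty$ for $M_Y$, and the timed universal machine $U'$ of Proposition~\ref{timed}; we may assume $p_{\Arith}$ is nondecreasing. The machine $M$ is a copy of $R$ (put in canonical form) in which the black-box node is expanded into the following subroutine: when it is reached with the state encoding a query $(S,y)$, compute $w=U'\bigl(p_{\Arith}(S),f_{M_Y},y\bigr)$ and resume the computation of $R$ with the answer register set to $+1$ if $w>0$ and to $-1$ otherwise. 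By Proposition~\ref{timed} this subroutine runs in time polynomial in $\min\bigl(p_{\Arith}(S),T_{M_Y}(y)\bigr)\le p_{\Arith}(S)$.

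The crux, and the step I expect to be the only real obstacle, is to verify that this simulated black box is a \emph{legal} black box in the sense of Definition~\ref{oracle2}, since only then can one invoke the correctness clauses of Definition~\ref{Turing-reduction} for $R$. There are two things to check. No false positives: if the subroutine returns $+1$ then $w>0$, so by Proposition~\ref{timed} the machine $M_Y$ halts on $y$ with $M_Y(y)=w>0$, and Definition~\ref{P}(b) forces $y\in Y$. Acceptance of small instances: if $y\in Y$ and $\Size{y}\le S<\infty$, then Definition~\ref{P}(a) gives $T_{M_Y}(y)<p_{\Arith}(\Size{y})\le p_{\Arith}(S)$, so $M_Y$ halts within the simulation budget with positive output, whence $w>0$ and the subroutine returns $+1$. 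It is precisely this second point that forces the budget to be $p_{\Arith}(S)$ rather than $S$: Definition~\ref{P} only bounds the running time of $M_Y$ by a polynomial in $\Size{y}$, which can exceed $S$, so a naive simulation for $S$ steps would not be a legal black box. In the remaining cases ($y\in Y$ but $\Size{y}>S$, or $S$ not a positive integer) any answer is permitted, so the subroutine is legal.

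Finally, with legality established, I would argue that every run of $M$ on an input $x$, after contracting each simulation subroutine to a single block, is a computation of $R$ on $x$ using a legal black box; hence it obeys Definition~\ref{Turing-reduction}(a)--(b), $M(x)=R(x)$, and $M$ halts iff $R$ halts. Clause (b) then yields Definition~\ref{P}(b) for $M$: if $T_M(x)<\infty$ then $T_R(x)<\infty$ and $M(x)=R(x)>0\Leftrightarrow x\in X$. For the time bound, suppose $x\in X$ and $\Size{x}<\infty$; then $T_R(x)<p(\Size{x})$, so $R$ makes at most $p(\Size{x})$ black-box queries, each with a bound $S\le T_R(x)<p(\Size{x})$ (a query with bound $S$ consumes $S$ steps of $R$'s clock), and $M$ realizes each such query in time polynomial in $p_{\Arith}(p(\Size{x}))$ by Proposition~\ref{timed}. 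Adding the at most $p(\Size{x})$ remaining steps of $R$, one obtains $T_M(x)<q(\Size{x})$ for a polynomial $q$ depending only on $R$, $M_Y$, $p$ and $p_{\Arith}$. Hence $M$ witnesses $(X,\mu)\in\realPone$.
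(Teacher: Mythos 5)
Your proposal is correct and follows essentially the same route as the paper: replace the black-box node by the timed universal machine of Proposition~\ref{timed} simulating $M_Y$ up to the polynomial budget $p_{\Arith}(S)$, then check that the composite machine satisfies Definition~\ref{P}(a)--(b). You merely spell out the details the paper leaves implicit (legality of the simulated black box, the bound $S\le T_R(x)$, and the final polynomial accounting), including the key observation that the budget must be $p_{\Arith}(S)$ rather than $S$.
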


\begin{proof}
	Let $M_Y$ be the polynomial time machine of
	Definition~\ref{P} for $(Y, \eta)$. Let $q$ be the polynomial
time bound for $M_Y$. Then the black box node should be replaced by
The machine of Proposition ~\ref{timed} simulating $M_Y$ with input $y$, up to time
$q(S)$.  
	This simulation can be done in time polynomial
in $q(S)$. 
The composite machine satisfies
	conditions (a) and (b) of Definition ~\ref{P}. 
Therefore it satisfies the
requirements in Definition~\ref{P} for $(X, \mu)$.
\end{proof}

\begin{definition}
A problem $(Y, \eta)$ is $\realNPone$-hard if and only if
for any problem $(X, \mu) \in \realNPone$, there is a 
polynomial time Turing reduction
from $(X, \mu)$ to $(Y, \eta)$. A problem $(Y, \eta)$ is $\realNPone$-complete
if $(Y, \eta) \in \realNPone$ and $(Y, \eta)$ is $\realNPone$-hard.
\end{definition}

\begin{theorem}
If there is one $\realNPone$-complete problem in $\realPone$, then
$\realPone = \realNPone$. 
\end{theorem}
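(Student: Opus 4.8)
The plan is to mimic the classical argument that a complete problem in $\P$ collapses the hierarchy, but being careful about the fact that input size depends on the condition number and is not known to any machine.

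\medskip

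\noindent\textbf{Proof proposal.}
Suppose $(Y,\eta)$ is $\realNPone$-complete and $(Y,\eta)\in\realPone$. Let $(X,\mu)$ be an arbitrary problem in $\realNPone$; I must show $(X,\mu)\in\realPone$, which together with the trivial inclusion $\realPone\subseteq\realNPone$ gives the equality.

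First I would invoke $\realNPone$-hardness of $(Y,\eta)$: there is a polynomial time one-sided Turing reduction $R$ from $(X,\mu)$ to $(Y,\eta)$, i.e. a BSS machine with a black box for $(Y,\eta)$ satisfying conditions (a) and (b) of Definition~\ref{Turing-reduction}. Next, since $(Y,\eta)\in\realPone$, I would apply the Lemma proved just above (the one asserting that a one-sided Turing reduction from $(X,\mu)$ to $(Y,\eta)$ together with $(Y,\eta)\in\realPone$ yields $(X,\mu)\in\realPone$): concretely, replace each black box call in $R$ by a simulation of the polynomial time machine $M_Y$ for $(Y,\eta)$ via the timed universal machine $U'$ of Proposition~\ref{timed}, running $M_Y$ on the prepared query $y$ up to time $q(S)$, where $q$ is the polynomial time bound of $M_Y$ and $S$ is the time budget carried in the state at the black box node. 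This simulation costs time polynomial in $q(S)$, hence polynomial in $S$, and $S$ itself is bounded by a polynomial in $\Size{x}$ along any accepting computation, so the total running time of the composite machine is polynomial in $\Size{x}$ whenever $x\in X$ and $\Size{x}<\infty$. Moreover the composite machine never produces a false positive, because a positive black box answer forces $y\in Y$ and a positive $M_Y$-output forces $y\in Y$ as well, so condition (b) is preserved. Thus the composite machine witnesses $(X,\mu)\in\realPone$.

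\medskip

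\noindent\textbf{Main obstacle.}
The delicate point — and the reason Turing rather than many-one reductions are used here, as flagged in the remark preceding Definition~\ref{oracle1} — is that the size $\Size{y}=\Length{y}(1+\log_2\eta(y))$ of a black box query is not known to the reduction machine at query time, so one cannot simply pad the oracle input. This is handled inside the notion of black box machine itself: the reduction supplies a time budget $S$ and repeatedly doubles it until the query is accepted, and the definition of polynomial time Turing reduction already guarantees that the successful $S$ is polynomially bounded in $\Size{x}$. Consequently, once the Lemma above is in hand the present theorem is essentially immediate; the only thing to check is that the polynomial bounds compose correctly (polynomial in $\Size{x}$ $\to$ polynomial time of $R$ $\to$ polynomially many queries each of polynomially bounded budget $\to$ each simulation polynomial in that budget), which is a routine composition-of-polynomials argument. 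I would therefore present the proof as a short deduction: hardness gives the reduction, $(Y,\eta)\in\realPone$ plus the Lemma gives $(X,\mu)\in\realPone$, and since $(X,\mu)$ was arbitrary in $\realNPone$ we conclude $\realPone=\realNPone$.
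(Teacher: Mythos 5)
Your argument is correct and is exactly the route the paper intends: the theorem is stated without a written proof precisely because it is an immediate consequence of the preceding Lemma (replacing each black box call by the timed simulation via Proposition~\ref{timed}) applied, via $\realNPone$-hardness, to an arbitrary $(X,\mu)\in\realNPone$. Your additional remarks about the unknown query size and the doubling of the budget $S$ just restate the mechanism already built into Definitions~\ref{oracle1}--\ref{Turing-reduction} and the Lemma's proof, so there is no divergence from the paper.
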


Here is a striking result: there is a $\realNPone$-complete problem
with constant condition. 
\begin{theorem}\label{SAFeas-complete} $(\SAFeas,1)$ is $\realNPone$-complete.
\end{theorem}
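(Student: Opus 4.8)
The plan is to show that every problem $(X,\mu)\in\realNPone$ reduces in polynomial time (one-sided Turing) to $(\SAFeas,1)$. We already know from a previous example that $(\SAFeas,1)\in\realNPone$, so only $\realNPone$-hardness needs proof. Fix $(X,\mu)\in\realNPone$ with verifier machine $M=M(x,y)$ and polynomial time bound $p$ as in Definition~\ref{NP}. The key observation is that $M$ is a BSS machine over $\mathbb R$, so for any fixed running time $T$ the computation of $M$ on input $(x,y)$ up to time $T$ can be encoded, via the \emph{register equations} (the same device used in \ocite{BCSS}*{Ch.~3} to prove $\realNP$-completeness of the circuit/system feasibility problem), as a polynomial system $F_{M,T}(x;z_1,\dots,z_{m})>0$ in the certificate $y$ together with auxiliary register variables $z$, whose size is polynomial in $T$ and in $\Length{x}$. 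Here one uses the standard trick: each arithmetic node contributes one polynomial equation relating the new register to old ones, each branch node is handled by guessing the sign and imposing the corresponding strict inequality, and acceptance is the final inequality $s_1(T)>0$. Crucially the constants of $M$ are real numbers, which is allowed since $\SAFeas$ is a problem over $\mathbb R$ with arbitrary real coefficients permitted in the sparse encoding.

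The reduction machine $R$, on input $x$, does the following: it maintains a guess $T$ for the running time, initialized at (say) $1$; it builds the instance $(n', f)=(n'(x,T),\,F_{M,T}(x;\cdot))$ of $\SAFeas$ (this takes time polynomial in $T$ and $\Length{x}$); it queries the black box for $(\SAFeas,1)$ with bound $S$ equal to the length of that instance (since the condition number is the constant $1$, $\Size{\cdot}=\Length{\cdot}$ for $\SAFeas$, so any honest bound suffices); if the black box accepts, $R$ outputs $+1$; otherwise $R$ doubles $T$ and repeats. If $x\in X$ with $\Size{x}<\infty$, then there is a certificate $y$ with $T_M(x,y)<p(\Size{x})$; once $T$ reaches a power of two exceeding $p(\Size{x})$ — which happens after $O(\log p(\Size{x}))$ doublings, the last value of $T$ being at most $2p(\Size{x})$ — the system $F_{M,T}(x;\cdot)$ is feasible (take the register values along the accepting run), the query size $S$ is the honest length, so the black box must accept and $R$ halts with $+1$ in time polynomial in $\Size{x}$, establishing Definition~\ref{Turing-reduction}(a). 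Conversely, whenever $R$ halts with a positive output, the black box accepted some $F_{M,T}(x;\cdot)$, hence there is a genuine feasible point, hence an accepting computation of $M$ on some $(x,y)$, hence $x\in X$ by Definition~\ref{NP}(b); and if $x\notin X$ the black box never accepts, so $R$ either runs forever or could only halt negatively — in fact here $R$ simply runs forever, which is allowed. This gives Definition~\ref{Turing-reduction}(b).

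One detail to be careful about: Definition~\ref{Turing-reduction}(b) requires that \emph{if} $R$ halts then the answer is correct, for arbitrary (even infinite) input size. Since $R$ only ever halts by outputting $+1$ after a genuine black-box acceptance, and acceptance of $F_{M,T}(x;\cdot)$ implies (unconditionally, with no false positives by the black box contract of Definition~\ref{oracle2}) the existence of a feasible point and thus of an accepting run of $M$, correctness on halting holds regardless of $\mu(x)$; and $R$ never halts negatively, so there is no risk of a wrong NO. The main obstacle, and the place where this argument departs from \ocite{BCSS}, is exactly that $T$ — equivalently a polynomial bound on $\Size{x}$ — is \emph{not known} to $R$, since the machine cannot see $\mu(x)$. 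This is why a single many-one (p-morphism) reduction to one system $F_{M,T_0}$ does not work and we must iterate, doubling $T$ and re-querying, which is precisely what forces Turing (rather than many-one) reductions and the timed black-box model of Section~\ref{NP}. The polynomial bound on the number of doublings, and on the size of each query, follows because the honest running time $T_M(x,y)$ for the best $y$ is polynomial in $\Size{x}$, so overshooting it by at most a factor of two keeps everything polynomial.
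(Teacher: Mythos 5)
Your proposal is correct and follows essentially the same route as the paper's own proof: encode the time-$T$ computation of the verifier via the BCSS register equations, then run a Turing reduction that doubles the time guess $T$ and queries the $(\SAFeas,1)$ black box with an honest polynomial size bound until acceptance, with one-sided correctness because the black box admits no false positives. The only differences are cosmetic (you start the doubling at $T=1$ and use the exact instance length as the query bound, while the paper starts at $T=\Length{x}$ and uses the bound $T^r\ge\Size{\Phi_T}$), so nothing further is needed.
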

We will see in the proof that the 
length of the semi-algebraic
systems produced during the reduction is assumed to be polynomial
in the size of $x \in X$. Thus, 
\changedA{an input $x$ with a large condition number may give rise
to a long
semi-algebraic system, rather than to a system with large condition.}

\begin{proof}[Proof of Theorem~\ref{SAFeas-complete}]
We already know that $(\SAFeas,1) \in \realNPone$. 
Now, let $(X, \mu) \in \realNPone$ and
let $M$, $p$ be as in Definition~\ref{NP}. Without loss of generality
assume that $M(x,g)$ does not use coordinate $g_t$ before time $t$. 
Let  
$\Phi_T=\Phi_T(g_1, \dots, g_T)$
be the time-$T$ register equations 
for $M$ with input $(x,g_1, \dots, g_T)$, and add the extra requirements
that the computation terminated ($\nu(T)=N$)
and accepted the input ($s_0(T)>0$).
\par
Theorem 2(1) pp.78--79 in \cite{BCSS}
guarantees that
$\Phi_T$ is a system of at most $cT^2$ polynomial equations of 
degree $\le c$ in at most $2T+cT^2$ variables, plus at most
$2T$ linear inequalities. The constant $c$ depends only on $M$.
Therefore, the size of $\Phi_T$ is polynomially bounded in $T$.
Let $r \in \mathbb N$ be be large enough, so that
$\Size{\Phi_T} \le T^r$ for all
$T\ge 1$.
Consider now the following black box BSS machine:

\begin{trivlist}
\item {\bf Input} $x$.
\item $T \leftarrow \Length{x}$
\item {\bf Repeat}
\subitem $T \leftarrow 2T$.
\subitem {\tt Produce  
$\Phi_T=\Phi_T(x,g_1, \dots, g_T)$.}
\subitem {\bf Until} {\tt a query of the black box with 
input $(T^r,\Phi_T)$ succeeds.}
\item {\rm Return 1}.
\end{trivlist}

This machine will never accept $x \not \in X$. If $x \in X$
then $x$ will eventually be accepted. If $x \in X$ has finite
size, then eventually 
\[
p(\Size{x}) < T \le 2p(\Size{x}),
\]
so the size of the register equations is $\le T^r$. 
Therefore the black box machine above terminates for
$T < 2p(\Size{x})$. Each execution of the
loop takes at most $T^R$ steps for a certain $R \in \mathbb N$
so that the total
running time of the black box machine is bounded by
\[
2 (2p(\Size{x})^R.
\]
which is polynomial in the size of $x$.
\end{proof}

\begin{theorem}\label{toy-transfer} The following are equivalent:
\begin{enumerate}[(a)]
\item $\realP \ne \realNP$.
\item $\realPone \ne \realNPone$.
\end{enumerate}
\end{theorem}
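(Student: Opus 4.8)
The plan is to prove both implications, exploiting the fact that the classes $\realPone$ and $\realNPone$ are defined over $\mathbb{R}^\infty$ just like $\realP$ and $\realNP$, the only difference being that the running-time budget is measured against $\Size{x} = \Length{x}(1+\log_2\mu(x))$ rather than $\Length{x}$. The direction (b) $\Rightarrow$ (a) should be the easy one: a BSS problem $X \in \realP$ (resp.\ $\realNP$) corresponds via Example~\ref{ex1} (resp.\ its $\realNP$-analogue) to $(X,1) \in \realPtwo \subseteq \realPone$ (resp.\ $(X,1)\in\realNPone$), since with $\mu\equiv 1$ one has $\Size{x}=\Length{x}$, so the definitions of $\realPone$ and $\realNPone$ collapse to the classical BSS definitions. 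Hence if $\realP = \realNP$ were false but $\realPone = \realNPone$, then in particular $(X,1)\in\realNPone$ would give $(X,1)\in\realPone$ for every $X\in\realNP$; I would then need to descend from a $\realPone$-machine for $(X,1)$ back to a genuine $\realP$-machine for $X$. Since the input size is now exactly the length, the one-sided $\realPone$-machine halts in time $p(\Length{x})$ on all $x\in X$ and never lies; applying the timer construction of Proposition~\ref{timed} converts it into a two-sided polynomial-time machine, so $X\in\realP$. Contrapositively, $\realP\neq\realNP$ follows from $\realPone\neq\realNPone$ once we observe the contrapositive of what we just argued actually runs the other way — more cleanly: if $\realP=\realNP$ we will show $\realPone=\realNPone$, which is exactly direction (a) $\Rightarrow$ (b).

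So the substance is (a) $\Rightarrow$ (b): assume $\realP = \realNP$ and take an arbitrary $(X,\mu)\in\realNPone$; I must produce a machine witnessing $(X,\mu)\in\realPone$. The key idea is to route through the $\realNPone$-complete problem $(\SAFeas,1)$ of Theorem~\ref{SAFeas-complete}. By completeness there is a polynomial-time one-sided Turing reduction from $(X,\mu)$ to $(\SAFeas,1)$. Now $\SAFeas$ is a classical $\realNP$ set (it is in $\realNP$ and, being $\realNP$-complete in the BSS sense — this is the Blum--Cucker--Shub--Smale theorem — it is even $\realNP$-complete there). Under the hypothesis $\realP=\realNP$, we get $\SAFeas\in\realP$, hence $(\SAFeas,1)\in\realPtwo\subseteq\realPone$ by Example~\ref{ex1}. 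Finally, the Lemma just before Definition~\ref{oracle1} (closure of $\realPone$ under polynomial-time one-sided Turing reductions) gives $(X,\mu)\in\realPone$. Since $(X,\mu)$ was an arbitrary element of $\realNPone$ and $\realPone\subseteq\realNPone$ always, this yields $\realPone=\realNPone$.

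For the remaining direction I would make the symmetric argument explicit rather than relying on a bare contrapositive: assume $\realPone=\realNPone$ and let $X\in\realNP$. Then $(X,1)\in\realNPone=\realPone$, so there is a one-sided polynomial-time $\realPone$-machine $M$ for $(X,1)$; because $\mu\equiv1$ forces $\Size{x}=\Length{x}$, the time bound is $T_M(x)<p(\Length{x})$ for all $x\in X$, and $M$ never produces an incorrect answer. Feeding $M$ to the timed universal machine $U'$ of Proposition~\ref{timed} with time cap $p(\Length{x})$ produces a machine that always halts within a polynomial of $\Length{x}$ and decides $X$ correctly (it accepts exactly when $M$ accepts within the cap, which on $x\in X$ it does, and on $x\notin X$ the cap either truncates a non-halting computation or $M$ has already rejected). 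Hence $X\in\realP$, so $\realP=\realNP$.

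The main obstacle — and the place where the hypothesis $\realP=\realNP$ is genuinely used rather than trivially — is the (a) $\Rightarrow$ (b) step, specifically the interaction between the unknown condition number and the reduction: one must be sure that the polynomial-time one-sided Turing reduction to $(\SAFeas,1)$ really does only require the classical statement $\SAFeas\in\realP$, i.e.\ that the queries produced along the reduction are semi-algebraic systems whose \emph{length} (not condition, which is $1$) is polynomially controlled, so that a classical polynomial-time algorithm for $\SAFeas$ suffices inside the black box. This is precisely the point flagged in the discussion after Theorem~\ref{SAFeas-complete} — a large condition of $x$ shows up as a long system rather than an ill-conditioned one — so the reduction lemma preceding Definition~\ref{oracle1} applies cleanly, and no circularity or hidden dependence on the a-priori knowledge of $\Size{x}$ sneaks in.
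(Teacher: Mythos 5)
Your proposal is correct and takes essentially the same route as the paper: the implication $\realP=\realNP\Rightarrow\realPone=\realNPone$ is argued exactly as in the text, via $\SAFeas\in\realNP$, Example~\ref{ex1}, the $\realNPone$-completeness of $(\SAFeas,1)$ from Theorem~\ref{SAFeas-complete}, and closure of $\realPone$ under one-sided Turing reductions. The only (harmless) deviations are that in the converse direction you apply the condition-$1$ projection and the timer of Proposition~\ref{timed} directly to an arbitrary $X\in\realNP$, where the paper applies it only to $\SAFeas$ and then invokes the BSS $\realNP$-completeness of $\SAFeas$, and that in your opening paragraph the labels (a) and (b) of the two implications are interchanged relative to the implications you actually prove.
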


\begin{proof}[Proof of Theorem~\ref{toy-transfer}]
Assume first that $\realP = \realNP$. 
We know from \ocite{BCSS}*{Prop.3 p.103} that
$\SAFeas \subset \realNP$ so
$\SAFeas \in \realP$. Then example~\ref{ex1} 
implies that  $(\SAFeas, 1) \in \realPtwo
\subsetneq \realPone$.
From Theorem~\ref{SAFeas-complete}, $(\SAFeas, 1)$ is
$\realNPone$-complete. So $\realPone=\realNPone$.

Reciprocally, assume $\realP'=\realNP'$. 
According to \ocite{BCSS}*{Th.1(2a) pp.104--105} 
	$\SAFeas$ \changedA{is} $\realNP$-complete. 
The length of any instance of $\SAFeas$ is known, so the
machine to decide $(\SAFeas,1)$ can be simulated by another machine
which keeps also track of time. This machine will always stop and produce an
answer, either the answer of the original machine or a NO in case of timeout.
This way it will always answer correctly to the question $x \in \SAFeas$.
The simulation machine will therefore decide
$\SAFeas$ in polynomial time, and therefore $\SAFeas \in \realP$.
\end{proof}

Moreover, deciding feasibility for a semi-algebraic set is known to be in
exponential time \cite{BPR}*{Theorem 13.14 p.475}. 
Historical references and an \changedA{earlier} algorithm can be found in 
~\cite{Renegar}\label{ref12}.
Thus,
\begin{theorem}
\[
\realNPone \subseteq \realEXPone.
\]
\end{theorem}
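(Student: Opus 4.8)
The plan is to derive the inclusion from the $\realNPone$-completeness of $(\SAFeas,1)$ (Theorem~\ref{SAFeas-complete}) together with the exponential-time decidability of semi-algebraic feasibility quoted just above. The first ingredient is that $(\SAFeas,1)$ itself lies in $\realEXPone$: for an instance $x$ of $\SAFeas$ the condition number is constant, so $\Size{x}=\Length{x}$, and the algorithm of \cite{BPR}*{Theorem 13.14} always halts and correctly decides whether $x\in\SAFeas$ in time $2^{q(\Length{x})}=2^{q(\Size{x})}$ for some polynomial $q$; hence it satisfies the exponential-time analogue of conditions (a) and (b) of Definition~\ref{P}, and $(\SAFeas,1)\in\realEXPone$.

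The second ingredient is that a polynomial time one-sided Turing reduction into an $\realEXPone$ problem keeps the source problem in $\realEXPone$; I would prove this by repeating the argument of the Lemma stating the same fact for $\realPone$, with exponential bounds in place of polynomial ones. Let $R$ be the polynomial time one-sided Turing reduction from $(X,\mu)\in\realNPone$ to $(\SAFeas,1)$ produced in the proof of Theorem~\ref{SAFeas-complete}, and let $M_Y$ be the exponential-time machine for $(\SAFeas,1)$ above, with time bound $2^{q(\cdot)}$. Replace each black box node of $R$ by the machine of Proposition~\ref{timed} simulating $M_Y$ on the queried input $y$ up to time $2^{q(S)}$, where $S$ is the time budget presented to the black box; by Proposition~\ref{timed} this costs time polynomial in $2^{q(S)}$, i.e.\ $2^{O(q(S))}$, and, since $\Length{y}\le S$ in that reduction, $M_Y$ finishes within the allotted time and returns the correct answer. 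In the reduction of Theorem~\ref{SAFeas-complete} every budget satisfies $S=T^r$ with $T\le 2p(\Size{x})$, so $S$ is polynomially bounded in $\Size{x}$; summing the simulation cost over the at most $p(\Size{x})$ black box nodes visited, plus the at most $p(\Size{x})$ steps of $R$ itself, gives a total running time $\le p(\Size{x})\,2^{O(q(\mathrm{poly}(\Size{x})))}\le 2^{p'(\Size{x})}$ for a suitable polynomial $p'$, whenever $x\in X$ has finite size. Correctness is inherited verbatim: the composite machine never produces a false positive, and when $x\in X$ has finite size it accepts within the bound just computed. Combining the two ingredients, $(X,\mu)\in\realEXPone$ for every $(X,\mu)\in\realNPone$.

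The content here is entirely routine bookkeeping with exponents. The one point worth a remark is that, in contrast to the general situation with Turing reductions where the query size $\Size{y}$ is not known, here the target problem is $(\SAFeas,1)$ with constant condition, so $\Size{y}=\Length{y}$ is known from $y$ itself; this is precisely what lets the ``keep doubling $S$'' loop of Theorem~\ref{SAFeas-complete} collapse to a halting exponential-time computation once the black box is implemented by $M_Y$. Accordingly I do not expect a genuine obstacle in this proof.
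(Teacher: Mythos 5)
Your proposal is correct and follows exactly the route the paper intends: the theorem is stated as an immediate consequence ("Thus") of the $\realNPone$-completeness of $(\SAFeas,1)$ via polynomial time one-sided Turing reductions together with the exponential-time decidability of semi-algebraic feasibility from \cite{BPR}, which is precisely the composition you carry out. Your added bookkeeping (replacing the black box by the timed simulation of Proposition~\ref{timed}, and noting that the constant condition makes the query size known) just makes explicit what the paper leaves implicit.
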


\begin{definition}
An {\em algebraic decision circuit} $C$ with input $x \in \mathbb R^n$ and
constants $y \in \mathbb R^m$ is a labeled directed acyclic
graph of order $\tau > m+n$ 
with nodes of indegree $0$, $2$, $3$ and 
	exactly one node of outdegree $0$ (the {\em output} node, labelled $\tau$). 
To node $i$ is associated a real variable $w_i$ and a formula
of one of the following types:
\begin{enumerate}[(a)]
\item For nodes of indegree $0$, 
\begin{eqnarray*}
w_i \leftarrow x_i&& \text{for $1 \le i \le n$,}\\
w_i \leftarrow y_{i-n}&& \text{for $n < i \le m$,}
\end{eqnarray*}
and no node for $i>m+n$ has indegree $0$.
\item For nodes of indegree 2 and direct predecessors $j=j(i)<i$ 
and $k=k(i)<i$, 
\[
w_i \leftarrow w_j \circ_i w_k \hspace{2em} \text{where $\circ_i \in \{+,-,\times, /\}$.}
\]
\item For nodes of indegree 3 ({\em selectors}) and predecessors $j=j(i)<i$, 
$k=k(i)<i$, $l=l(i)<i$,
the expression:
\[
w_i \leftarrow S(w_j,w_k,w_l) = \left\{
\begin{array}{ll}
w_j & \text{if $w_l > 0$,} \\
w_k & \text{otherwise.} \\
\end{array}
\right.
\]
\end{enumerate}
\end{definition}
An {\em exact computation} for the algebraic decision circuit $C$
with input $x$ is a sequence $w_1, \dots, w_{\tau}$ satisfying
$w_i=x_i$ for $1 \le i \le n$, $w_i=y_{i-n}$ for $n<i\le m$
and then $w_i=w_{j(i)} \circ_i w_{k(i)}$ or 
$w_i=S(w_{j(i)},w_{k(i)},w_{l(i)})$. 
The {\em size} of a circuit
is the number $\tau$ of nodes. 
We are assuming implicitly
that exact computation admit no division by zero, so let 
$\Omega$ be the set of all $x$ admitting an exact computation. 
Given an input $x \in \Omega$, we denote by
$C(x)$ the value of $w_{\tau}$.
 
\begin{example}Let \CircFeas\ be the set of circuits $C$ such that there
is an input $x \in \Omega$ with $C(x)>0$. 
\end{example}
\begin{theorem}\label{CircFeasNPc}  $(\CircFeas,1)$ is $\realNPone$-complete.
\end{theorem}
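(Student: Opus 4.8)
The plan is to show $(\CircFeas,1) \in \realNPone$ and $(\CircFeas,1)$ is $\realNPone$-hard. Membership is the easy direction: a nondeterministic machine $M(C,x)$ guesses $x \in \mathbb R^n$ together with the (polynomially many) sign bits of the selector gates, then performs the straight-line evaluation of $C$ on $x$, checking along the way that no division by zero occurs and that the guessed signs are consistent with the values $w_l$ actually computed. If $C \in \CircFeas$ there is an $x \in \Omega$ with $C(x) > 0$, and the correct guess makes $M$ accept in time polynomial in $\Length{C}$; conversely any accepting computation exhibits a genuine $x \in \Omega$ with $C(x) > 0$, so $C \in \CircFeas$. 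Hence conditions (a) and (b) of Definition~\ref{NP} hold with $\mu \equiv 1$, giving $(\CircFeas,1) \in \realNPone$.

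For hardness I would not reduce an arbitrary $(X,\mu) \in \realNPone$ directly; instead I would invoke Theorem~\ref{SAFeas-complete} and exhibit a polynomial time (one-sided) Turing reduction from $(\SAFeas,1)$ to $(\CircFeas,1)$, so that completeness of $(\SAFeas,1)$ transfers. Given an instance $(n,f)$ of $\SAFeas$, where $f$ is a sparse system of polynomial inequalities in $n$ variables, one builds in polynomial time an algebraic decision circuit $C_f$ with input $x \in \mathbb R^n$ and no selector gates: evaluate each monomial of each $f_j$ by repeated multiplication, accumulate the $f_j(x)$, and then combine them into a single output that is positive exactly when all $f_j(x) > 0$ (e.g. output $\min_j f_j(x)$, realized with a constant number of selectors, or a product after clamping — either works since we only need the sign). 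Since $f$ has no divisions, $\Omega = \mathbb R^n$ for $C_f$, and $C_f(x) > 0 \Leftrightarrow f(x) > 0$ coordinatewise; thus $(n,f) \in \SAFeas \Leftrightarrow C_f \in \CircFeas$. The reduction machine prepares $C_f$, makes a single black-box query on $C_f$ with a time bound $S$ polynomial in $\Length{(n,f)}$ (doubling $S$ if needed), and returns the answer; its running time and query sizes are polynomial in $\Length{(n,f)} = \Size{(n,f)}$ since the condition is $1$. By the Lemma following Definition~\ref{Turing-reduction}, this reduction together with Theorem~\ref{SAFeas-complete} yields $\realNPone$-hardness of $(\CircFeas,1)$, and combined with membership, $\realNPone$-completeness.

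The main obstacle is a bookkeeping one rather than a conceptual one: making sure the circuit one produces is legal under the definition (every non-source node has indegree $2$ or $3$, the single output node, the topological order on the DAG) and, more delicately, that the translation between "system of polynomial inequalities $f(y) > 0$ coordinatewise" and "a single real output whose sign decides feasibility" is faithful on all of $\mathbb R^n$ — one must ensure no spurious division by zero is introduced and that ties ($f_j(x) = 0$) are handled so that $C_f(x) > 0$ really is equivalent to strict feasibility. The sparse encoding of $f$ also has to be unwound into a straight-line evaluation whose length is polynomial in the sparse input length, which is routine but must be stated. None of these obstructions are serious; the substance of the theorem is already contained in Theorem~\ref{SAFeas-complete}, and $\CircFeas$ is essentially a reformulation of $\SAFeas$ at the level of circuits.
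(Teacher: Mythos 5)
Your proposal is correct in substance but takes a different route from the paper's own proof. The paper proves hardness directly: given an arbitrary $(X,\mu)\in\realNPone$ recognized by a machine $M$, it encodes the time-$T$ computation of $M$ as an algebraic decision circuit (Lagrange interpolation for the next-node map, selectors for branching), and the reduction machine doubles $T$ and queries the black box on the resulting circuits, exactly as in the proof of Theorem~\ref{SAFeas-complete}; no detour through $\SAFeas$ is made. Your route --- reduce $(\SAFeas,1)$ to $(\CircFeas,1)$ by compiling a sparse system into a circuit and taking a selector-built minimum, then transfer hardness from Theorem~\ref{SAFeas-complete} --- is precisely the alternative the authors acknowledge in Remark~\ref{ref13} after the theorem, and it is indeed simpler at the circuit-construction level: your circuits contain only the arithmetic of $f$ plus a few selectors, whereas the paper's circuits must faithfully simulate a whole machine. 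What the paper's direct reduction buys is that it never needs reductions to compose: hardness of $(\CircFeas,1)$ is established from scratch for every $(X,\mu)\in\realNPone$.

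Two points in your write-up need attention. First, the lemma you cite (the one following Definition~\ref{Turing-reduction}) only says that a reduction to a problem in $\realPone$ puts the source problem in $\realPone$; it does not give transitivity of Turing reductions, which is what "completeness of $(\SAFeas,1)$ transfers" actually requires. Transitivity does hold here, but it needs a short argument in this paper's timed black-box model: when composing $R_1$ (from $(X,\mu)$ to $(\SAFeas,1)$) with your $R_2$, each black-box call $(S,\Phi)$ of $R_1$ must be replaced by a simulation of $R_2$ on $\Phi$ cut off after $p_2(S)$ steps (a black-box analogue of Proposition~\ref{timed}); one then checks that the simulated answers are among the behaviours allowed for the $(\SAFeas,1)$ black box, so correctness of $R_1$ is preserved and the total time stays polynomial since $\Size{\Phi}=\Length{\Phi}$ is known. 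Second, "evaluate each monomial by repeated multiplication" should be repeated squaring if exponents in the sparse encoding can be large relative to the input length; and since the condition of $\CircFeas$ is identically $1$, $\Size{C_f}=\Length{C_f}$ is known, so a single query with $S=\Length{C_f}$ suffices and no doubling is needed. With these repairs your argument is sound.
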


\begin{proof}
The proof that $(\CircFeas, 1)$ is in \realNP is to produce a circuit
simulator with input $(C,x,w)$. Then it will check the equations and
inequations for the associated straight line program. 

The proof of completeness loosely follows the argument by
\ocite{Cucker-Torrecillas}:
Let $M$ be a machine
as in definition~\ref{NP} with input $x=(y,z)$.
Recall from Definition~\ref{BSS} that $M$ is assumed to be in canonical
form. We will associate to $M$ a family of circuits.
Let $(\nu(t),s(t))$ be an exact computation for a machine $M$ with
input $x=(y,z)$. At time $t\le T$, only states 
$s_{-T}(t), \dots, s_{-1}(t),s_0(t),s_1(t),\dots,
s_T(t)$ are actually relevant.

It is possible to write
each of $\nu(t)$, $s_i(t)$ as the result of a circuit depending on the
$\changed{\nu(t-1)}$, $s_i(t-1)$. 
For instance, $\changed{\nu(t)}$ can be computed as follows. Both
$\beta^+(\nu)$ of $\beta^-(\nu)$ can be represented by
Lagrange's interpolating polynomials
\[
\beta^+(\nu) = \sum_{i=1}^N \prod_{j\ne i} \frac{\nu-i}{j-i} \beta^+(i)
\hspace{2em}
\text{and}
\hspace{2em}
\beta^-(\nu) = \sum_{i=1}^N \prod_{j\ne i} \frac{\nu-i}{j-i} \beta^-(i)
.
\]
The interpolating polynomials can be encoded as algebraic circuits 
of size $O(N^2)$. Then the value of $\nu(t)$ can be computed by a selector
with input $s_0(t-1)$, $\beta^+(\nu(t-1))$, $\beta^-(\nu(t-1))$.
Each state $s_k(t)$ can be written also as
\[
s_k(t) = \sum_{i=1}^N \prod_{j\ne i} \frac{\nu(t-1)-i}{j-i} g(\nu(t-1), s(t-1))
\]
where $g(\nu,s)$ is the associated map for node $\nu$ and state $s$.
	\changed{For each possible value of $\nu = \nu(t-1)$\label{ref14},
$s_k(t)=g_k(\nu, s)$ is a constant, or a monomial in one or two
variables $s_j=s_j(t-1)$.} Therefore $s_k(t)$ can also be computed by
a circuit of size polynomial in $T$.

Since there are $O(T^2)$ such circuits, the overall size of the
time-$T$ circuit is bounded by a polynomial in $T$. Thus, 
there is a polynomial
bound $q(T)$ on the time necessary to produce the time-$T$ circuit.
Now assume the notations of Definition~\ref{Turing-reduction}. Suppose
that $O$ is a black box deciding $(\CircFeas,1)$.
Then we set
\[
	p(T) = \sum_{t=1}^T \changed{t+q(t)}
\]
which is also polynomial in $T$.
\end{proof}

\changed{
\begin{remark}\label{ref13}
An alternative proof for the completeness of $(\CircFeas,1)$ can be
obtained by reduction from $(\SAFeas,1)$. 

\end{remark}
}

\part{Approximate computations}

\section{Floating point numbers}

Modern digital computers perform numerical calculations 
on a discrete, finite set of {\em floating point numbers} rather
than on general real numbers.
The current arithmetic standard is IEEE 754 \ycite{IEEE} by the Institute of Electrical
and Electronic Engineers. 
Real numbers are approximated
by floating point numbers of the form \changed{$m \beta^{e}$}\label{ref15}
where the mantissa $m$ and the exponent $e$ are integers 
with $|m|<\beta^{t+1}$, $t$ is the number
of digits and $\beta$ is the {\em radix}. 
We avoid in this paper certain technical details of the standard.
For instance under IEEE 754 the radix $\beta$ is supposed to be either 2 or 10, and
most digital computers use radix 2. The newer IEEE 854 standard
predicts arbitrary radix floating point numbers. For simplicity,
we will \changed{assume that $\beta=2$ for all the arithmetical lemmas.
Our complexity results remain valid
for every $\beta > 1$, but proofs are more elaborate}.
The IEEE 754
standard stipulates that $t$ should belong to a certain range. 
We will make no such assumption. As a consequence of this restriction on the exponent,
IEEE 754 allows for the representation
of {\em normal} and {\em subnormal} numbers. 
We assume that non-zero floating point numbers
are represented in {\em normal} form, that is $\beta^{t} \le |m|$.
Zero has a special representation. 
IEEE standards allow for special values like $\pm \infty$ and
NaN ({\em Not a Number}) to handle divisions by zero, but our simplified
model assumes a test for $x = 0$ before any division by $x$. 

\begin{definition}[Floating point numbers]
\label{def-float}
The set of floating point numbers with radix $\beta$ and $t$ 
digits of
mantissa is
\[
{\mathbf F}_{\beta, t} = \{ 0 \} \cup \{ m \beta^e: m,e \in \mathbb Z
\text{ and } \beta^{t} \le |m| < \beta^{t+1}\} 
\subset \mathbb R.
\]
\end{definition}
The set $F_{\beta, t}$ is not a \changed{subring}\label{ref16} of $\mathbb R$. The
union of all $F_{\beta, t}$ for $t \in \mathbb N$ is a \changed{subring}
of $\mathbb Q$ and is dense in $\mathbb R$.
The reader should check that the distance between two
consecutive non-zero floating point numbers satisfies
\begin{equation}\label{dist-next-float}
\beta^{-t-1} |x| \le |x-y| \le \beta^{-t} |x|
\end{equation}

For each $\epsilon > 0$ we abreviate
${\mathbf F}_{\epsilon} =
{\mathbf F}_{\beta, 1+\lfloor -\log_{\beta} (2\epsilon) \rfloor}$.
In particular, ${\mathbf F}_{2^{-t}}={\mathbf F}_{2, t}$.
With this definition, the distance between two consecutive 
non-zero floating
point numbers in ${\mathbf F}_{\epsilon}$ always satisfies
\[
2 \epsilon \beta^{-2} |x| < |x-y| \le 2 \epsilon |x|
\]

The computer arithmetic
model is specified through a 
family of rounding-off operators $(\fl_{\epsilon})_{\epsilon \in [0,1/4]}$,
\[
\function{\fl_{\epsilon}} 
{\mathbb R}{{\mathbf F}_{\epsilon}} 
{z}{\fl_{\epsilon}(z)}
\]
where $w=\fl_{\epsilon}(z)$ is such that $|z-w|$ is minimal. 
It may happen that some $z \in \mathbb R$ is of the form
$z=(x+y)/2$ for consecutive $x,y \in {\mathbf F}_{\epsilon}$.
In that case $w=x$ if the mantissa of $x$ is even, $w=y$ otherwise.
Also, we define $\fl_0(x)=x$.
In particular, $\fl_0(x)=x$ and ${\mathbf F}_0 = \mathbb R$. 
\medskip

By construction the rounding-off operators satisfy the following
properties:

\begin{eqnarray} 
\label{a-eps}
\forall x \in \mathbb R, && \fl_{\epsilon}(x) = x(1+\epsilon_1) 
\text{ with }|\epsilon_1| \le \epsilon.
\\
\label{a-mon}
\forall a, b \in \mathbb R,&& a \le b \Rightarrow \fl_{\epsilon}(a) \le \fl_{\epsilon}(b).
\\
\label{a-sym}
\forall a \in \mathbb R,&& \fl_{\epsilon}(-a) = -\fl_{\epsilon}(a).
\\
\label{a-rep}
\forall x \in \mathbb R,&&
\fl_{\epsilon}(\fl_{\epsilon}(x))=\fl_{\epsilon}(x)
\\
\label{a-ref}
\forall 0 \le \epsilon \le \delta < 1/4, \forall x \in \mathbb R,&&
x = \fl_{\delta}(x) \Rightarrow x = \fl_{\epsilon}(x).
\end{eqnarray}

Property \eqref{a-eps} is known as the {\em $1+\epsilon$-property}.
Properties \eqref{a-mon} and \eqref{a-sym} are known as {\em monotonicity}
and {\em symmetry}. Property \eqref{a-rep} states that there are
{\em representable} numbers, and property \eqref{a-ref} states that
representable numbers are still representable in higher precision.

Once the rounding-off operator is defined, it is assumed that elementary
operations $\circ \in \{+,-,\times,/\}$ are performed exactly and
then the result $z=x \circ y$ is replaced by $\fl_{\epsilon}(z)$ at
each step.  The exact result of some calculations is also a 
representable number. Moreover, it is possible to perform multiple precision
operations at a cost \changed{(see Remark \ref{distillation}).
\begin{lemma}\label{lemma-sum}
	\changed{Assume radix $\beta = 2$.}
Let $a, b \in {\mathbf F}_{\epsilon}$, $|a|\ge |b|$.
Let $c=\fl_{\epsilon}(a+b)$,
	$d=\fl_{\epsilon}(c-a)$, $e=\fl_{\epsilon}\changed{(b-d)}$. Then
$a+b = c+e$ exactly.
\end{lemma}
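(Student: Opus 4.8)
This is the classical Dekker/Kahan "2Sum" (or "Fast2Sum") algorithm, and I would prove it by a direct case analysis using the $1+\epsilon$-property \eqref{a-eps} together with the structure of $\mathbf{F}_{\epsilon} = \mathbf{F}_{2,t}$. The key observation is that, because $|a| \ge |b|$ and the radix is $2$, the exact sum $a+b$ and the exact difference $c - a$ are themselves \emph{representable} in $\mathbf{F}_{2,t}$ (so the outer $\fl_{\epsilon}$ on $d$ and $e$ do nothing), and the only rounding that actually occurs is the one producing $c = \fl_{\epsilon}(a+b)$. Once that is established, writing $\delta := (a+b) - c$ for the rounding error, one has $d = c - a = b - \delta$ exactly (representable), hence $b - d = \delta$ exactly and representable, so $e = \delta$, giving $c + e = c + \delta = a+b$.

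The plan is therefore: (1) Reduce to the case $b \ne 0$ and $a+b \ne 0$, since the degenerate cases are immediate. (2) Show $d = \fl_{\epsilon}(c-a) = c - a$ exactly, i.e.\ $c - a \in \mathbf{F}_{2,t} \cup \{0\}$. Here I would normalize by the exponent of $a$: write $a = 2^{e_a} \bar a$ with $2^t \le |\bar a| < 2^{t+1}$, $\bar a \in \mathbb Z$; then $b = 2^{e_a}\bar b$ where $\bar b$ is a multiple of a power of two and $|\bar b| \le |\bar a|$ (using $|b| \le |a|$), so $a + b = 2^{e_a}(\bar a + \bar b)$ with $\bar a + \bar b$ an integer (after clearing the possibly-negative power of two — here is where one must be a bit careful and split on whether $b$ has a smaller or larger exponent granularity than $2^{e_a}$, but in all cases $a+b$ lies on the grid $2^{e_a - k}\mathbb Z$ for the appropriate $k \ge 0$). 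Then $c = \fl_{\epsilon}(a+b)$ is the nearest element of $\mathbf{F}_{2,t}$, and by \eqref{a-eps}, $|c| \le 2|a|$, so $c$ lies on the grid $2^{e_a-1}\mathbb Z$ or coarser; consequently $c - a$ lies on the grid $2^{e_a - 1}\mathbb Z$ and has magnitude $|c - a| \le |(c - (a+b))| + |b| \le 2^{e_a}$ (using that the rounding error $|c - (a+b)|$ is at most half a ulp of $c$, hence at most $2^{e_a}$, and $|b| \le |a| < 2^{e_a + t+1}$ — I would sharpen this to show $c - a$ fits in $t$ bits). So $c - a$ is representable and $d = c - a$. (3) Then $b - d = b - (c - a) = (a+b) - c =: -\delta$ is exactly the rounding error of step one, which is representable (it is a grid point of magnitude at most half a ulp), so $e = \fl_{\epsilon}(b-d) = b - d$. (4) Conclude $c + e = c + (a+b) - c = a+b$.

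The step I expect to be the main obstacle is (2) — proving $c - a$ is exactly representable in $t$-bit precision. The subtlety is handling the relative alignment of the exponents of $a$, $b$, and $c$: when $|b|$ is much smaller than $|a|$, $b$ lives on a much finer grid than $a$, and one must argue that $c = \fl_{\epsilon}(a+b)$, while possibly differing from $a$ by a full ulp-of-$a$, cannot differ from $a$ by \emph{more} than what $t$ bits can hold, i.e.\ $|c - a| < 2^{e_a + t+1}$ and $c - a$ sits on the grid $2^{e_a}\mathbb Z$ (or, in the carry case where $|a+b|$ crosses a power of two, $c$ sits on grid $2^{e_a+1}\mathbb Z$ but then $|c-a|$ is correspondingly controlled). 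This is exactly the point where radix $2$ is essential — with radix $\beta$ the carry can push $c$ up by a factor close to $\beta$ and the "one ulp" bound becomes a "$\beta - 1$ ulp" bound that no longer fits, which is presumably why the paper remarks that assuming radix $2$ made this lemma "much simpler." I would organize the case analysis as: (i) $\operatorname{sgn}(a) = \operatorname{sgn}(b)$ with no exponent carry; (ii) same sign with carry ($|a+b|$ reaches the next binade); (iii) opposite signs with no cancellation of the leading bit; (iv) opposite signs with cancellation (here $|a+b|$ can be much smaller than $|a|$, but then $\fl_{\epsilon}$ is often exact, $c = a+b$, and the lemma is trivial) — and check representability of $c - a$ in each.
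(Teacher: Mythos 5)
Your plan is correct and is the standard Fast2Sum argument: show $d=c-a$ exactly, observe that $b-d=(a+b)-c$ is then the rounding error of the single rounding that occurred, show that error is representable, and conclude. The paper reaches the same two key facts by a shorter route, however. Instead of your exponent/grid analysis, it splits on whether $-b\ge a/2$ (taking $a\ge 0$): in that case Sterbenz's lemma \eqref{a-Sterbenz} gives $c=a+b$ exactly, so $d=b$, $e=0$ and there is nothing more to do; otherwise $a+b>a/2$, so by monotonicity \eqref{a-mon} and representability of $a/2$ (radix $2$) one gets $c\ge a/2$, while Priest's property \eqref{a-A2} gives $c\le 2a$, and then Sterbenz applied to the pair $(c,a)$ yields $c-a\in{\mathbf F}_{\epsilon}$ directly, i.e.\ $d=c-a$; finally the representability of the rounding error $(a+b)-\fl_{\epsilon}(a+b)$ is quoted as Priest's property \eqref{a-A1} rather than re-derived. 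So what you gain with your approach is self-containedness (you would in effect reprove A1 and the representability of $c-a$ from the grid structure of ${\mathbf F}_{2,t}$), at the cost of a four-way carry/cancellation case analysis whose arithmetic is delicate; note in particular that your intermediate bound $|c-a|\le |c-(a+b)|+|b|\le 2^{e_a}$ cannot be right as stated, since $|b|$ may be as large as $|a|\approx 2^{e_a+t}$ — the clean repair is exactly the paper's observation that $a/2\le c\le 2a$ (outside the cancellation case) so that Sterbenz settles representability of $c-a$ without any ulp bookkeeping. Your identification of where radix $2$ enters (representability of $a/2$, i.e.\ control of the carry, and the failure of \eqref{a-A2} in radix $10$) matches the paper's remarks.
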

Before going further, we point out that this model for floating point
computations is stronger than
the one by \ocite{Cucker}. He assumed only the $1+\epsilon$
property \eqref{a-eps} and deterministic rounding off.
Both floating point models allow to test for positivity of a number $x$ or to
compare two given numbers. Our model has the advantage to allow
the comparison of  
$c$ with $a-b$ for exactly representable $a$, $b$ and $c$.
To avoid complications we assume that
$a$, $b$ and $c$ are strictly positive. Also, we can swap $b$
and $c$ in the expression above so we assume $b \le c$. 

\changed{\begin{lemma}\label{abc} Assume radix $\beta=2$.
Let $a, b, c \in {\mathbf F}_{\epsilon}$
be strictly positive, with $b \ge c$. Let $d=\fl_{\epsilon}(a-b)$ and
$e=\fl_{\epsilon}(d-c)$. Then,
\[
\sgn(e)
=
\sgn(a-b-c)
\]
\end{lemma}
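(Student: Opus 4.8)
The goal is to show that the sign of the doubly-rounded quantity $e = \fl_{\epsilon}(\fl_{\epsilon}(a-b) - c)$ agrees with $\sgn(a-b-c)$ when $a,b,c \in {\mathbf F}_{\epsilon}$ are strictly positive with $b \ge c$. The plan is to reduce everything to the exactness statement of Lemma~\ref{lemma-sum}. The key observation is that $a - b$ is \emph{exactly representable} in slightly higher precision, and more precisely, by the Sterbenz-type phenomenon hiding inside Lemma~\ref{lemma-sum}, the rounding error $d - (a-b)$ can itself be recovered exactly. Concretely, first I would apply Lemma~\ref{lemma-sum} to the pair $a$ and $-b$ (note $|a| \ge |b|$, so the hypothesis holds after the sign flip, using symmetry \eqref{a-sym}): writing $d = \fl_{\epsilon}(a-b)$, $d' = \fl_{\epsilon}(d - a) = -\fl_{\epsilon}(a-d)$, and $f = \fl_{\epsilon}(-b - d')$, we get $a - b = d + f$ \emph{exactly}, and moreover $|f| \le |{-b}| = b$ and $f \in {\mathbf F}_{\epsilon}$.

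\textbf{Main step.} Now the problem becomes: determine $\sgn(\fl_{\epsilon}(d - c))$ versus $\sgn(d + f - c) = \sgn((d - c) + f)$. I would argue by cases on the size of $|d - c|$ relative to $|f|$. Since $d, c \in {\mathbf F}_{\epsilon}$ with $c > 0$, whenever $d - c \neq 0$ we have $|d - c| \ge 2\epsilon\beta^{-2}\max(|d|,c)$ from the spacing estimate \eqref{dist-next-float} (applied in ${\mathbf F}_{\epsilon}$), while $|f|$ is bounded by the rounding error in forming $d$, hence $|f| \le \epsilon |a - b| \le \epsilon(|d| + |f|)$, so $|f| \le \frac{\epsilon}{1-\epsilon}|d| \le 2\epsilon|d|$ for $\epsilon \le 1/4$. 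The delicate case is when $d - c$ and $d$ have comparable magnitude and the perturbation $f$ could flip a sign; here I would use that $e = \fl_{\epsilon}(d-c)$ satisfies $e = (d-c)(1+\epsilon_2)$ with $|\epsilon_2| \le \epsilon$, so $\sgn(e) = \sgn(d - c)$ whenever $d \neq c$, and then compare $\sgn(d-c)$ with $\sgn((d-c)+f)$ using the magnitude bounds above — the point being that $|f|$ is small enough compared to $|d-c|$ that it cannot reverse the sign unless $d - c$ is itself so small that $|d-c| \le |f|$, which the spacing bound forces to mean $d = c$. In that sub-case $d = c$, we have $a - b - c = f$, so $e = \fl_{\epsilon}(0) = 0$ and $\sgn(e) = 0 = \sgn(f) = \sgn(a-b-c)$ precisely when $f = 0$; if $f \neq 0$ one must check the rounding of $d - c = 0$ is genuinely $0$ and reconcile — here I'd observe that if $d = c$ exactly then Lemma~\ref{lemma-sum} still gives $a - b = d + f = c + f$, so $a-b-c = f$, and we need $\sgn(\fl_{\epsilon}(0)) = \sgn(f)$; since $\fl_\epsilon(0)=0$ this forces the claim only when $f = 0$. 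So the real content is ruling out $d = c$ with $f \neq 0$: but if $d = c$ then $d - c$ rounds to $0$ exactly and there is nothing to round, meaning $e = 0$; to get $\sgn(e) = \sgn(a-b-c)$ we genuinely need $f = 0$ in this configuration, which I would establish by noting $c = d = \fl_\epsilon(a-b)$ is at distance $< \epsilon|a-b|$ from $a-b$, and $a - b - c = f \in {\mathbf F}_\epsilon$ with $|f| < \epsilon|a-b| \le \epsilon(c + |f|)$, forcing $|f| < 2\epsilon c$; but any nonzero element of ${\mathbf F}_\epsilon$ within relative distance $2\epsilon$ of $0$ relative to $c$... actually the cleaner route is to re-examine: $a - b = c + f$ with $a,b,c,f$ all exactly representable, $|f| \le \epsilon|a-b|$; then $f = \fl_\epsilon(a - b - c)$ would have to be the exact rounding, and since the true spacing forces $a - b - c$ to round to $0$ exactly when it is that small relative to $c$, we get $f = 0$.

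\textbf{Expected obstacle.} The main obstacle is exactly this boundary analysis when $\fl_{\epsilon}(a-b)$ lands on or adjacent to $c$, where a naive argument would lose a sign. I expect the clean resolution is to not split into $d = c$ versus $d \neq c$ by hand, but rather to apply Lemma~\ref{lemma-sum} a \emph{second time} — to the pair $d$ and $-c$ — obtaining $e' = \fl_\epsilon(d - c)$, $e'' = \fl_\epsilon(e' - d)$, $g = \fl_\epsilon(-c - e'')$ with $d - c = e' + g$ exactly and $|g| \le c$. Combining with $a - b = d + f$ gives $a - b - c = e' + g + f = e + (g + f)$, and the final step is to bound $|g + f|$ strictly below the spacing of $e$ in ${\mathbf F}_\epsilon$ so that $\sgn(e) = \sgn(e + (g+f)) = \sgn(a-b-c)$; this last inequality, $|g + f| \le |g| + |f| \le c + 2\epsilon|d|$ versus the spacing $> 2\epsilon\beta^{-2}|e|$ of $e$, needs the radix-$2$ assumption and the positivity/ordering hypotheses ($a > 0$, $b \ge c > 0$, hence $|a - b| < a$ and $|d - c| < $ something controllable) to close — verifying that chain of inequalities with the right constants, using $\beta = 2$ and $\epsilon \le 1/4$, is the one genuinely computational part of the argument.
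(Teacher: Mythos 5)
Your proposal has a genuine gap in its central step, and it is located precisely where the hypothesis $b\ge c$ must enter. First, a smaller issue: you assert $|a|\ge|b|$ in order to invoke Lemma~\ref{lemma-sum}, but this is not a hypothesis of the lemma (only $a,b,c>0$ and $b\ge c$ are given); this is patchable by swapping the operands, but it is a symptom of not tracking which hypotheses are actually doing the work. The serious problem is the sign-preservation argument. Your bound $|f|\le 2\epsilon|d|$ on the rounding error of $d=\fl_{\epsilon}(a-b)$ must be compared with the minimum distance between distinct elements of ${\mathbf F}_{\epsilon}$ near $d$, which by \eqref{dist-next-float} can be as small as roughly $\epsilon|d|/2$ for $\beta=2$; so ``$|d-c|\le |f|$ forces $d=c$'' is false by a factor of about $4$, and the whole reduction collapses into the unresolved subcase $d=c$, $f\ne 0$, where your discussion goes in circles and never concludes. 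That subcase is not a technicality: if one drops $b\ge c$ the lemma is false exactly there (take $a=c=1$ and $b$ a positive float smaller than half the gap below $1$, so that $d=\fl_{\epsilon}(1-b)=1=c$, hence $e=0$, while $a-b-c=-b<0$); since your main-step argument never uses $b\ge c$ in that configuration, it cannot be completed as written. The closing chain in your last paragraph is also hopeless: $|g|\le c$ is a huge bound, so $|g+f|\le c+2\epsilon|d|$ cannot be below the spacing of $e$ (which is of order $\epsilon|e|$). What is missing is either the round-to-nearest geometry itself --- $d$ is a closest element of ${\mathbf F}_{\epsilon}$ to $a-b$, hence $|f|\le |a-b-c|$ for every $c\in{\mathbf F}_{\epsilon}$, which disposes of all cases with $d\ne c$ --- together with a separate argument, using $b\ge c$, showing $d=c$ can only occur when the subtraction is exact; or a different route altogether.

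The paper's proof is the latter and is much simpler: it splits on the size of $b$ relative to $a$. If $b>2a$, then $a-b<-a<0$, so $d<0$ and $e<0$, matching $\sgn(a-b-c)=-1$. If $a/2\le b\le 2a$, Sterbenz's property \eqref{a-Sterbenz} gives $d=a-b$ exactly, so $e=\fl_{\epsilon}(a-b-c)$ and the $1+\epsilon$ property preserves the sign. If $b<a/2$, then $a-b-c\ge a-2b>0$, and monotonicity \eqref{a-mon} plus exact representability of $a/2$ in radix $2$ give $d\ge a/2>b\ge c$, so $e=\fl_{\epsilon}(d-c)>0$. Note that $b\ge c$ is used exactly in this third case, and no error-term bookkeeping, spacing estimate, or appeal to Lemma~\ref{lemma-sum} is needed. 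If you want to salvage your decomposition approach, replace the spacing comparison by the nearest-rounding inequality $|f|\le|a-b-c|$ and handle $d=c$ via the paper's case analysis; otherwise the three-case argument is the shorter path.
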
}
\changed{The procedure in Lemma~\ref{lemma-sum}
is also known as `compensated sum' or `Fast2Sum'.
\ocite{Muller} present a proof using actual computer
arithmetic with subnormal numbers. They also give extensive
references.}
\changedA{For the sake of completeness, a proof of those well-known Lemmas in our simplified model can be found in Appendix~\ref{sec-technical}}.

}
\section{Weak and strong computations}

Recall from Definition~\ref{BSS} that an {\em exact computation} for
a BSS machine $\mathbb R$ with input $x$ is a sequence 
$((\nu(t), s(t))_{t \in \mathbb N_0}$ in $\{1,\dots, N\} \times \mathbb R_{\infty}$ 
satisfying $\nu(0)=1$, $s(0)=I(x)$ and for all $t \ge 0$,
\begin{eqnarray*}
\nu(t+1) &=& \beta(\nu(t), s(t)) \hspace{1em}
\\
s(t+1) &=& g(\nu(t), s(t)) 
\end{eqnarray*}

We will define approximate computations without changing the definition
of a {\em machine}. 
A machine $M$ to test $x \in X$ will take an expression 
$(x, \epsilon)$ as input. 
In this paper there are two types of {\em approximate computations}.
Definitions are similar to those for exact computations.

\begin{definition}\label{strong} Let $0 \le \epsilon < 1/4$. 
	The {\em strong $\epsilon$-computation} for $M$ with input $(x, \epsilon)$ 
	is the sequence $((\nu(t), s(t))_{t \in \mathbb N_0}$ 
in $\{1,\dots, N\} \times \mathbb R_{\infty}$ 
satisfying $\nu(0)=1$, $\nu(t) = \beta(\nu(t-1), s(t-1))$ and
\begin{enumerate}[(a)]
\item \[
s_i(0) = \fl_{\epsilon}(I_i(x)) \ \forall i.
\]
\item 
\[
s_{0}(t+1) = \fl_{\epsilon}(g_0 (\nu(t), s(t)))  
\]
\end{enumerate}
\end{definition}
The computation {\em terminates} if $\nu(t) = N$ eventually, and the 
execution time $T=T(x)$ is the smallest of such $t$. 
The terminating computation is said to {\em accept} 
input $x$ if $s_0(T)>0$ and to {\em reject} $x$ otherwise. 
It is said to {\em correctly decide whether $x \in X$} if it
accepts $x$ for $x \in X$ and rejects $x$ otherwise.

\begin{definition} \label{weak} Let $0 \le \epsilon < 1/4$. 
A {\em weak $\epsilon$-computation} for $M$ with input $(x, \epsilon)$ 
is a sequence $((\nu(t), s(t))_{t \in \mathbb N_0}$ 
in $\{1,\dots, N\} \times \mathbb R_{\infty}$ 
satisfying $\nu(0)=1$, $\nu(t) = \beta(\nu(t-1), s(t-1))$ and
\begin{enumerate}[(a)]
	\item \changedA{At time} $t=0$,
\[
|s_i(0) - I_i(x)| \le \epsilon |I_i(x)| \ \forall i.
\]
\item If $\nu(t)$ is a computation node not associated to a copy operation,
\[
|s_{0}(t+1)-(g_0 (\nu(t), s(t))| \le \epsilon |g_0 (\nu(t), s(t))| 
\]
\item 
Otherwise,
\[
s_{i}(t+1)=(g_i (\nu(t), s(t))).   
\]
\end{enumerate}
\end{definition}
The same terminology applies to terminating, accepting, rejecting and
deciding approximate computations. \changed{Notice that a strong
$\epsilon$-computation is always a weak $\epsilon$-computation, but
the reciprocal does not hold in general.
We} will use practically the same definitions for weak and strong
computations of algebraic decision circuits.

\begin{definition}
Let $0 \le \epsilon < 1/4$.
A {\em strong $\epsilon$-computation} for the algebraic decision circuit $C$
with input $x_1,\dots, x_{n-1},x_n=\epsilon$ and constants $y$ is a sequence 
$w_1, \dots, w_{\tau}$ satisfying
$w_i=\fl_{\epsilon}(x_i)$ for $1 \le i \le n$, $w_i=\fl_{\epsilon}(y_{i-n})$ 
for $n<i\le m$
and then $w_i=\fl_{\epsilon}(w_{j(i)} \circ_i w_{k(i)})$ or 
$w_i=S(w_{j(i)},w_{k(i)},w_{l(i)})$. 
\end{definition}

\begin{definition}
Let $0 \le \epsilon < 1/4$.
A {\em weak $\epsilon$-computation} for the algebraic decision circuit $C$
with input $x_1,\dots, x_{n-1},x_n=\epsilon$ and constants $y$ is a sequence 
$w_1, \dots, w_{\tau}$ satisfying
$|w_i-x_i| \le \epsilon |x_i|$ for $1 \le i \le n$, 
$|w_i-y_{i-n}| \le \epsilon |y_{i-n}|$ 
for $n<i\le m$
	and then $|w_i - (w_{j(i)} \circ_i w_{k(i)})| \changedA{\le} \epsilon |w_{j(i)} \circ_i w_{k(i)} |$ or $w_i=S(w_{j(i)},w_{k(i)},w_{l(i)})$. 
\end{definition}

\begin{remark}
The definitions above for strong and weak computations for machines
and circuits 
imply that any input or constant may be
subject to rounding-off. In the case of machines, a constant needed
more than once may be `saved' so that the error occurs only once.
\end{remark}

Lemma~\ref{abc} shows that a BSS machine can test
whether three real numbers $a,b,c$ already in the memory
satisfy $c \ge a-b$, in such a way that exact or strong
computations will always produce the correct branching.
There is no way to compare $c$ and $a-b$ correctly under
a weak computation of precision $\epsilon$, unless one assumes that 
$|c - (a-b)| > \epsilon |a-b|$. The example below also illustrates
the relative power of strong and weak computations. This example
is a {\em routine} rather than a {\em machine} because we assume
that the real number $x$ is already stored in the memory. Otherwise
there is no way to distinguish $x$ from any number 
in $\fl_{\epsilon}^{-1}(x)$. But if $x$ is supposed to be a floating
point number, its binary expansion can be computed by the algorithm below.
\begin{example}\label{ex-weak-strong}
\begin{trivlist}
\item {\tt Routine} {\sc BitExpansion} $(x)$.
\item {\bf If} $x=0$ {\bf then} {\tt Return 0}
\item {\bf If} $x>0$ 
\subitem {\bf then} $s \leftarrow 1$ 
\subitem {\bf else} $s \leftarrow -1$.
\item $y \leftarrow x s$.
\item $e \leftarrow 0$
\item $p \leftarrow 1$
\item {\bf While} $y < 1$ {\bf do}
\subitem $e \leftarrow e-1$.
\subitem $y \leftarrow 2y$.
\item {\bf While} $y \ge 2$ {\bf do}
\subitem $e \leftarrow e+1$.
\subitem $y \leftarrow y/2$.
\item \changedA{$d=0$}
\item {\bf While} $y > 0$ {\bf do}
\subitem {\bf if} $y \ge 1$ 
\subsubitem {\bf then} $f_d \leftarrow 1$
\subsubitem {\bf else} $f_d \leftarrow 0$
\subitem $y \leftarrow 2(y - f_d)$.
\subitem $d \leftarrow d+1$
\item {\bf If} $x - s 2^e \sum_{i=0}^{d-1} f_i 2^{-i} \ne 0$ {\tt return} $-1$.
\item {\tt Return} $d, s, e, f_0, \dots, f_{d-1}$.
\end{trivlist}
\end{example}
We assume that the radix $\beta$ is $2$.
Any strong computation with precision \changed {$\delta \le \epsilon$}\label{ref18}
will return the
correct `bit expansion' of $x \in {\mathbf F}_{\epsilon}$. This is due
to the fact that all the numbers produced during intermediate computations
are representable in ${\mathbf F}_{\delta}$, hence in ${\mathbf F}_{\epsilon}$. 
The last {\bf if} test is redundant in the strong setting.
Now suppose that a weak computation with precision $\epsilon$ terminates.
\begin{lemma} \label{weak-bits}
Suppose that the expression $\tilde x = s 2^e \sum_{i=0}^{d-1} f_i 2^{-i}$ 
	as above with $s=\pm 1$ and $f_i \in \{0,1\}$ is
evaluated by Horner's rule under a weak $\epsilon$-computation, and has
approximate result $x$. Then $|x - \tilde x|< O(d\epsilon)|x|$. 
\end{lemma}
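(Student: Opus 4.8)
The plan is to track the accumulated relative error as Horner's rule processes the bits of $\tilde x = s 2^e \sum_{i=0}^{d-1} f_i 2^{-i}$ from the most significant to the least significant digit. Write $H_0 = f_0$ and $H_{k+1} = H_k/2 + f_{k+1}$, so that after $d-1$ steps the exact running value is $H_{d-1} = \sum_{i=0}^{d-1} f_i 2^{-i}$ and $\tilde x = s 2^e H_{d-1}$. Under a weak $\epsilon$-computation each arithmetic step (a halving, an addition, and the final multiplications by $s$ and $2^e$) perturbs its output by a factor lying in $[1-\epsilon, 1+\epsilon]$. I would denote by $\hat H_k$ the approximate running value actually held in memory and show by induction that $|\hat H_k - H_k| \le c_k \epsilon\, H_k$ for a constant $c_k$ growing linearly in $k$.

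The key step is the inductive bound. First observe that because all $f_i \in \{0,1\}$ and $f_0 = 1$ (the mantissa is in normal form after the two normalizing loops), one has $1 \le H_k \le 2$ for every $k$; in particular the running value is bounded away from $0$ and from $\infty$ by absolute constants, so relative and absolute errors are interchangeable up to a factor of $2$. Now the transition from $\hat H_k$ to $\hat H_{k+1}$ consists of halving (one $(1+\epsilon_1)$ factor), possibly adding $f_{k+1}=1$ (one $(1+\epsilon_2)$ factor applied to the sum), with $f_{k+1}=0$ contributing nothing new. Expanding, $\hat H_{k+1} = \bigl(\hat H_k/2 + f_{k+1}\bigr)(1+\epsilon_1)(1+\epsilon_2)$ up to relabeling, so $|\hat H_{k+1} - H_{k+1}| \le \tfrac12|\hat H_k - H_k| + (|\epsilon_1| + |\epsilon_2| + O(\epsilon^2)) (H_k/2 + f_{k+1})$. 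Using $H_{k+1} = H_k/2 + f_{k+1} \ge 1$ and the inductive hypothesis $|\hat H_k - H_k| \le c_k\epsilon H_k \le 2c_k \epsilon H_{k+1}$, this gives $|\hat H_{k+1} - H_{k+1}| \le (c_k + 2)\epsilon H_{k+1}$, i.e. $c_{k+1} \le c_k + 2$, hence $c_{d-1} = O(d)$. Finally the two outer multiplications by $s$ (exact, since $s = \pm 1$, or at worst one more $(1+\epsilon)$ factor) and by $2^e$ (one more $(1+\epsilon)$ factor, as $2^e$ is exactly representable but the product still gets rounded) add only an extra $O(1)$ to the constant, so $|x - \tilde x| \le O(d)\epsilon\,|\tilde x|$, and since $|x|$ and $|\tilde x|$ differ by a factor $1 + O(d\epsilon)$ this is the same as $O(d\epsilon)|x|$ for $d\epsilon$ small.

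I expect the main obstacle to be bookkeeping rather than conceptual: one must be careful about exactly which operations in the evaluation of $\tilde x$ by Horner's rule are counted as arithmetic steps subject to the $(1+\epsilon)$ perturbation (the problem statement only promises the weak-computation bound for the operations the machine actually performs), and about the fact that the weak model rounds the \emph{result} of each $\circ_i$, so an addition $\hat H_k/2 + 1$ is rounded once as a unit rather than twice. A secondary subtlety is justifying $1 \le H_k \le 2$: this relies on the normalization performed by the two \textbf{while} loops in {\sc BitExpansion}, which guarantees $f_0 = 1$, and hence the running partial sums never collapse — without this, a relative-error bound would be false. Once these points are pinned down, the linear-in-$d$ growth of the error constant follows from the geometric $\tfrac12$-contraction of the already-accumulated error against the $O(\epsilon)$ fresh error injected at each of the $d$ steps, which is the standard mechanism behind backward-stability of Horner's rule.
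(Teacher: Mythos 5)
The paper offers no proof of Lemma \ref{weak-bits} at all --- it is invoked as the standard backward-stability property of Horner's rule --- so there is nothing to compare against except the folklore argument, and your proposal is exactly that argument: induct on the Horner steps, showing the relative error grows by $O(\epsilon)$ per step because every summand $f_i2^{-i}$ is nonnegative. In substance this is correct and is surely what the authors intended. Two points need repair, though. First, your recursion is mis-oriented: with $H_0=f_0$ and $H_{k+1}=H_k/2+f_{k+1}$ the final value is $\sum_i f_i 2^{\,i-(d-1)}$, the \emph{bit-reversed} quantity, not $\sum_i f_i2^{-i}$; Horner's rule for $\sum_i f_i t^i$ at $t=\tfrac12$ must start from the innermost coefficient, i.e.\ $H_0=f_{d-1}$, $H_{k+1}=H_k/2+f_{d-2-k}$ (the error analysis is unchanged under this relabeling, but the identity you assert for $H_{d-1}$ is false as written). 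Second, the claim $1\le H_k\le 2$ is false: a zero bit immediately gives an intermediate value below $1$ (e.g.\ $H_0=1$, $f_1=0$ yields $H_1=\tfrac12$), and normality only forces the first and last bits to be $1$. Fortunately your induction never uses this lower bound; the only inequality it needs is $H_k\le 2H_{k+1}$, which holds trivially because $H_{k+1}=H_k/2+f_{k+1}$ with $f_{k+1}\ge 0$ --- nonnegativity of all partial quantities is the real mechanism preventing amplification of the accumulated relative error, so you should delete the ``bounded away from $0$'' remark and cite this instead. With those corrections the induction gives $c_{d-1}=O(d)$ up to terms of order $c_k\epsilon$, hence the stated bound under the implicit (and intended, cf.\ the choice $\epsilon=\Omega(\delta/d)$ after the lemma) assumption that $d\epsilon$ is bounded, and your conversion between errors relative to $|\tilde x|$ and to $|x|$ is fine. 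One last caveat worth a sentence in a full write-up: you treat the scaling by $2^e$ as $O(1)$ rounded operations; if the machine instead built $2^e$ by $|e|$ successive doublings, each rounded, the bound would pick up an additional $O(|e|\epsilon)$, so one should say explicitly that the lemma counts only the $d$ Horner steps plus a bounded number of final scalings.
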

In other words, given $\delta$, there is $\epsilon \in \Omega (\delta/d)$
so that any weak accepting $\epsilon$-computation for the machine above is
guaranteed to produce a floating point number $\hat x \in {\mathbf F}_{\delta}$
in `bit representation' so that
$\hat x = x(1+\delta')$, $|\delta'|<\delta$.

\section{\P and \NP}

At this point we introduce our definition of polynomial time
problems for approximate computations. Recall that a problem
is a pair $(X, \mu)$ where $X$ is a set and $\mu: \mathbb R^{\infty}
\rightarrow \mathbb R$ is a {\em condition number}, $\mu(x) \ge 1$. 
We want to
decide whether $x \in X$ in time that is polynomial \changed{in}
$\Size{x} = \Length{x} (1+\log_2 \mu(x))$.

\begin{definition}[Deterministic polynomial time]\label{fpP}
The class \fpP of problems recognizable in polynomial time is the
set of all pairs 
$(X, \mu)$ so that  
there is a BSS machine $M$ over $\mathbb R$ with input $(x,\epsilon)$
\changed{and rational constants,} with the following properties:
\begin{enumerate}[(a)]
\item There are polynomials $p_{\Arith}$ and $p_{\Prec}$ such that whenever 
$x \in X$, $\Size{x} < \infty$ and $\epsilon < 2^{-p_{\Prec}(\Size{x})}$, 
		then the {\bf strong}
$\epsilon$-computation for $M$ with input $(x, \epsilon)$ terminates
in time $T< p_{\Arith}(\Size{x})$, 
and correctly decides whether $x \in X$. 
\item If there is a terminating {\bf weak} $\epsilon$-computation for $M$ 
with input $(x, \epsilon)$, $\epsilon < 1/4$ and output $z$,
then
\[
z > 0 \Longleftrightarrow x \in X
\]
\end{enumerate}
\end{definition}

\changed{
\begin{remark}\label{ref33}
An equivalent formulation would be to ask for a machine with 
floating point constants.
The more general definition would require all
the real constants of $M$ to be
efficiently computable in the classical
	sense: \changedA{for} every constant $x$, there is a BSS machine over
${\mathbb F}_2$ with input $k$
	that produces a floating point number with $x \in {\mathbf F}_{\epsilon}$,
$\epsilon = 2^{-k}$, in time polynomial in $k$. Restricting the constants
to $\mathbb Q$ simplifies the argument of Theorem~\ref{real-transfer} below
without changing the classes $\fpP$ and $\fpNP$.
\end{remark}
}
\changedA{
\begin{example}
Consider the set $X=\{x \in \mathbb R: x \le 1\}$ 
with condition $\mu(x) = 1/|x-1|$. 
	We claim that $(X, \mu) \in \fpP$. The machine $M$ from Definition~\ref{fpP} is given by the pseudo-code below
\begin{trivlist}
\item {\tt Input} $x, \epsilon$.
\item $a \leftarrow 1$
\item $b \leftarrow \max(x, 8 \epsilon)$
\item $c \leftarrow \min(x, 8 \epsilon)$
\item $d \leftarrow a-b$
\item $e \leftarrow d-c$
\item {\bf If} $c \le 0$ {\bf then} {\tt Output} $d$.
\item {\bf If} $e>0$ {\bf then} {\tt Output} $+1$.
\item {\tt Output} $-1$.
\end{trivlist}
Condition (a) is easy to check assuming radix $\beta=2$: 
suppose that $x \in X$ and $\epsilon < \frac{1}{11\mu(x)}$.
Assume first that $c > 0$. In particular $b>0$ and Lemma~\ref{abc}
says that 
\[
\sgn(e) = \sgn( 1 - \fl_{\epsilon}(8 \epsilon) - \fl_{\epsilon}(x) )
\]
using that $1$ is exactly representable. Also, $\fl_{\epsilon}(8\epsilon)	= 8 \fl_{\epsilon}(\epsilon)$ exactly so
\[
1 - \fl_{\epsilon}(8 \epsilon) - \fl_{\epsilon}(x)
\ge
	1 - (x + 8 \epsilon)(1+\epsilon)
\ge
	1-x - 8\epsilon - (x+8\epsilon) \epsilon 
\ge     \mu(x)^{-1} - 11 \epsilon > 0.
\]
Therefore the input is accepted.
If $c \le 0$, then either $\epsilon = 0$ (trivial case) or $x \le 0$
(another trivial case).	

Now we check condition (b). Values of $a,b,c,d,e$ will denote the
values of the states during the weak computations. Suppose that one
of those weak computations is accepted. There are two possibilities.
If $c \le 0$ and $d > 0$, 
then either $x \le 0$ or $0 = \epsilon < x$. So we assume
$b > c > 0$. From the fact that $e > 0$ we can infer that
$d>c$ hence $(a-b)(1+\epsilon) > c$. That is,
\[
	a - b - c + \epsilon (a-b) > 0
\]
The constant $1$ is approximated by
$a \le 1+\epsilon$. Also, reading constant $8$, input $\epsilon$ and 
computing the product under weak computation produces a quantity 
$c' \ge 8 \epsilon( 1-\epsilon)^3$. After reading $x$ we obtain a quantity
	$b' \ge x(1-\epsilon)$, and $b+c=b'+c'$. Therefore,
\[
b + c \ge x (1-\epsilon) + 8 \epsilon(1-\epsilon)^3
\]
and 
\[
		a-b \le 1 + \epsilon - \max(x (1-\epsilon), 8 \epsilon(1-\epsilon)^3)
\le 1 + \epsilon - x (1-\epsilon)
.
\]
Putting all together,
\[
0 < 1 + \epsilon - x(1-\epsilon) -8 \epsilon (1-\epsilon)^3
	+ \epsilon (1 + \epsilon - x (1-\epsilon))
.
\]
If furthermore $x \ge 1$,
\begin{eqnarray*}
	0 &<&  1 + \epsilon - (1-\epsilon) -8 \epsilon (1-\epsilon)^3
	+ \epsilon (1 + \epsilon - (1-\epsilon))
	\\
	&=& \epsilon (2- 8 (1-\epsilon)^3 + 2 \epsilon)
\end{eqnarray*}
which leads to a contradiction since $\epsilon < 1/4$ always.
\end{example}
}
\begin{example} The Cantor middle-thirds problem $(\mathbb R \setminus C, \mu_C)$ is in $\fpP$.
Recall that the {\em tent map}  was defined by
\[
T(x) = \left\{ 
\begin{array}{ll}
3 x & \text{for $x \le \frac{1}{2}$} \\
3-3 x & \text{otherwise.}
\end{array}
\right.
\]
and that the Cantor set is the set of points whose iterates by the tent map remain in the interval $[0,1]$. 
	Assume that $x \not \in C$. Let $x_i$ correspond to a weak $\epsilon$-computation, \changed{we claim that
\[
		|x_{i+1}-T(x_i)| <
		12 \epsilon
.
\]
Indeed, if $1/2 \le x_i \le 1$,
\[
	x_{i+1} = (3 - 3x (1+\epsilon_1) )(1+\epsilon_2),\hspace{2em} |\epsilon_j|<\epsilon.
\]
Thus
\[
	|x_{i+1}-T(x_i)| \le |-3x \epsilon_1 + 3 \epsilon_2 - 3x \epsilon_2 + 3x \epsilon_1 \epsilon_2|
< 12 \epsilon
\]
as claimed.\label{ref19}} 
	
\changed{In Example ~\ref{Cantor}, the condition number for the Cantor set
was defined as
\[
\mu_C(x) = \frac{1}{\min( d(x,C), 1)}.
\]
We also defined $C_0= [0,1]$ and inductively, $C_k=T^{-1}(C_{k-1})$. If
$x \in C_{k-1} \setminus C_k$, then $\mu_C(x) \ge 2 \times 3^k$.

	Now assume that  
	$\mu_C(x) < 2 \times 3^{k+1}$. Then
	$x \not \in C_{k-1}$ and in particular $T^k(x) \not \in [0,1]$. 
	It follows that $x \in C_{l-1} \setminus C_{l}$ for some $l \le k$
	so $T^l(x) > 1$.

	We take
	$\epsilon>0$ small enough. Successive iterates of
	$x=x_0$ satisfy $|x_i - T^i(x)| < 12 \epsilon (1+3+\cdots+ 3^{i-1})
< 6 \times 3^i \epsilon$. In particular,
	$|x_l - T^l(x)| < 6 \times 3^l \epsilon$. By construction
	$d(T^l(x),C)=T^l(x)-1 > 3^l / \mu_C(x)$.
A choice of $\epsilon < 1/(6 \mu_C(x))$ will ensure that 
\[
x_l - 1 > \frac{3^l}{\mu_C(x)} - 12 \times 3^l \epsilon > 0
\]
so any weak $\epsilon$-computation accepts $x_0=x$.}	
	The same argument is valid for the Koch snowflake and other fractals defined through a map with bounded Lipschitz constant.
\end{example}

\begin{remark}
In the definition of \Piter by \ocite{Cucker} 
the arithmetic cost
depends also on $\log \epsilon^{-1}$. Extra precision may be expensive
when not needed. The case $\epsilon=0$ is not allowed. Our point of
view is that a machine $\tilde M$ as in definition~\ref{fpP} where
we allow $p_{\Arith}=p_{\Arith(\Size{x},\epsilon)}$ can be
	simulated by a machine $M$ with input $(x, \delta)$\changed{, $\delta < \epsilon/2$}\label{ref20}, that will successively
simulate $\tilde M$ with input $(x,1/8)$, $(x,1/16)$, \dots, 
$(x, 2^{\lceil \log_2(\delta)\rceil})$, $(x,\delta)$. Then the
running time for this machine $M$ will still be polynomially bounded
in $\Size{x}$ for all $x \in X$, \changed{independently} of $\delta$. 
\end{remark}

\begin{remark}\label{distillation} Let $\epsilon < \delta < 1/4$. Assume the input $x$
is representable in precision $\delta$. It is still possible to
simulate a strong $\epsilon$-computation through a strong
$\delta$-computation,
at a cost in the arithmetic complexity (see Lemma~\ref{lemma-sum}). 
\changed{Extended precision floating point
arithmetic is explained in
\cites{Priest, Muller, Joldes}.}
Recall however that the
input may be non-representable with precision $\epsilon$. The
restriction $\epsilon <2^{-p_{\Prec}(\Size{x})}$
in Definition~\ref{fpP}(a) ensures
that strong computations start always from a correctly rounded input,
up to sufficient precision. 
\end{remark}

\begin{definition}[Non-deterministic polynomial time]\label{fpNP}
The class \fpNP of problems recognizable in non-deterministic
polynomial time is the set of all pairs 
$(X, \mu)$ so that  
there is a BSS machine $M$ over $\mathbb R$ with input $(x,y,\epsilon)$
	\changed{and rational constants}, with the following properties:
\begin{enumerate}[(a)]
\item There are polynomials $p_{\Arith}$ and $p_{\Prec}$ such that whenever 
$x \in X$, $\Size{x} < \infty$ 
there is $y \in \mathbb R_{\infty}$ so that
for all $\epsilon < 2^{-p_{\Prec}(\Size{x})}$, 
the {\bf strong}
$\epsilon$-computation for $M$ with input $(x,y,\epsilon)$ terminates
in time $T< p_{\Arith}(\Size{x})$ and correctly decides whether $x \in X$. 
\item If there is some $y \in \mathbb R_{\infty}$ for which there is
a terminating {\bf weak} accepting
$\epsilon$-computation for $M$ 
with input $(x, y, \epsilon)$, $\epsilon < 1/4$, then  
$x \in X$.
\end{enumerate}
\end{definition}


\begin{remark} If one replaces the word
{\bf strong} by {\bf weak} in Definitions~\ref{fpP} and~\ref{fpNP},
then one respectively obtains
the classes \Piter and $\NPiterU$ as defined by
\cite{Cucker}.
\end{remark}

\begin{remark}
	\changedA{One reason for} using {\em weak} computations in 
Definitions~\ref{fpP}(b) and~\ref{fpNP}(b) is to allow for natural \NP
problems that may be defined by \changedA{algebraic equalities
and inequalities}. 
More precisely, if $(X,\mu) \in \fpNP$, then $X \cap \{x : \mu(x)< \infty\}$
is a countable union of semi-algebraic sets.
Eventually
the condition number may look less natural as for $\CircPseudoFeas$ next
	section. It is an open question whether replacing {\em weak} by
	{\em strong} in those definitions would actually enlarge the
	complexity classes $\fpP$ and $\fpNP$.
\end{remark}

\begin{example}[Feasibility for non-linear programming] {\em Given $(f_1, \dots, f_l)$ a $l$-tuple of $n$-variate real polynomials of degree $\le D$, decide if there is a point $x \in \mathbb R^n$ with $f_1(x), \dots, f_l(x) > 0$.} One should specify the input format: each polynomial is a finite list $(f_{ia}, a)_{a \in A_i}$, $f_{ia} \in \mathbb R$ and $A_i$ a finite set 
	of $n$-tuples of non-negative integers. It is assumed that each $A_i$ contains the origin. 
		We will use the $1$-norm on the space of polynomial systems with the same structure: 
\[
	\|f-g\|_{1} = 
		\max_{i} \sum_{a\in A_i} |f_{ia}-g_{ia}|.
\]
		
		In order to specify the input size we must attribute a condition number $\mu(f)$ to each system. If there are no values of $x$ satisfying the simultaneous inequations, then $\mu(f)=\infty$. Otherwise,
\[
	\mu(f) = \frac{1}{\max\{\delta: \exists x \in \mathbb R^n, \|f-g\|_{1} \max(1,\|x\|_{\infty})^D <\delta \Rightarrow g_1(x), \dots, g_l(x) > 0\}}
\]
We claim that this problem is in $\fpNP$. The key observation is that a weak $\epsilon$-computation for $f(x)$ actually produces exactly some $g(x)$, where
\[
	|f_{ia}-g_{ia}| \le |f_{ia}| ((1+\epsilon)^{D+\#A_i}-1)
\]
The total forward error in the computation of $f_i(x)$ 
is therefore bounded above by
\[
	\|f\|_1 \max(1,\|x\|_{\infty})^D ((1+\epsilon)^{D+\#A_i}-1)
\]
Therefore, the machine has just to guess $x$ and guarantee that $\epsilon$ is small enough, namely 
\[
	g_i(x) > \|f\|_1  \max(1,\|x\|_{\infty})^D  ((1+\epsilon)^{D+\#A_i}-1)
\]
All this can be done with weak $\epsilon$ computations, with $\epsilon$ small enough. In order to guarantee that the tests succeed for a near-optimal $x$, one can set 
\[
\|f\|_1  \max(1,\|x\|_{\infty})^D ((1+\epsilon)^{D+\#A_i}-1) < \delta/3
\]
with $\delta = 1/(2\mu(f))$.
Since $f_i(x)-\delta > 0$, $g_i(x)> 2\delta/3$ and the test will succeed.
\end{example}

\changed{
	\begin{remark}\label{ref21}
	The example above uses only weak computations. Example \ref{pseudo-accepting} below will use weak
and strong computations.
\end{remark}
}

\section{$\NP$-completeness}
Black box machines appeared in Definition~\ref{oracle1} as a
tool for defining reductions.
In the context of numerical computations we assume that
a black box node for a problem $(Y,\eta)$
receives input of the form $(S,y,\epsilon)$ where $\epsilon$ is
the current precision. 
\changed{We can define
strong and weak computations almost as before.
If $\Size{y}<S$ and $\epsilon < 2^{-\Size{y}}$ 
then}
the strong $\epsilon$-computation 
will always return the correct answer whether $y \in Y$. A weak
computation will never return a wrong answer. The running time of
the composite machine is exactly as in Definition~\ref{oracle2}.
\begin{definition} \label{oracle3}
	The strong computations for a machine $M$ with a black box
for the problem
$(Y,\eta)$ 
	are the \changedA{same as} in Definition \ref{strong}, 
except when the black box 
node $\nu_O$ is attained.
If $\nu_O$ is first attained at $t=t_0$ and the machine is at internal state
	$s(t_0) = \changed{(z,S. y)}$, then
\[
(\nu(\tau), s(\tau)) = 
\left\{
\begin{array}{ll}
(\nu_0, \changed{(z,r. y)}) & \text{for $0 \le \tau - t_0 < S$;} \\
(\beta(\nu_0),\changed{(z,r. \dots)}) & \text{if $\tau-t_0=S$ with 
}
\\
\multicolumn{2}{c}{
	\changedA{r=\left\{\begin{array}{ll}
	+1 & \text{if $y \in Y$ and $\Size{y} \le S$,}\\
	-1 & \text{if $y \not \in Y$,}\\
	\pm 1 & \text{in all the other cases.} \end{array}\right.
	}}
\end{array}
\right.
\]
\end{definition}

\changed{
\begin{definition} \label{oracle4}
	The weak computations for a machine $M$ with a black box
for the problem
$(Y,\eta)$ 
	are the same as the same as in Definition \ref{weak}, except when the black box 
node $\nu_O$ is attained.
If $\nu_O$ is first attained at $t=t_0$ and the machine is at internal state
	$s(t_0) = \changed{(z,S. y)}$, then
\[
(\nu(\tau), s(\tau)) = 
\left\{
\begin{array}{ll}
	(\nu_0, \changed{(z,r. y)}) 
&
\text{for $0 \le \tau - t_0 < S$;}
\\
	(\beta(\nu_0),\changed{(z,r. \dots)}) & \text{if $\tau-t_0=S$ with
}
\\
\multicolumn{2}{c}{
	\changedA{
   r=\left\{\begin{array}{ll}
	\pm 1 & \text{if $y \in Y$,}\\
	-1 & \text{if $y \not \in Y$.}
   \end{array}\right.}}
\end{array}
\right.
\]
\end{definition}
}

\begin{definition}\label{fpTuring-reduction}
A polynomial time 
Turing reduction from $(X,\mu)$ to $(Y,\eta)$ is a BSS machine $R$
over $\mathbb R$ with \changed{rational constants} and a black box node for $(Y,\eta)$ such
that:
\begin{enumerate}[(a)]
\item There are polynomials $p_{\Arith}$ and $p_{\Prec}$ such that 
whenever $x \in X$, 
$\Size{x} < \infty$ and $\epsilon<2^{-p_{\Prec}(\Size{x})}$, 
		the {\bf strong} $\epsilon$-computation
for $R$ with input $(x, \epsilon)$ terminates in time
$T< p_{\Arith}(\Size{x})$ and accepts $x$. 
\item If there is accepting {\bf weak} $\epsilon$-computation for $R$ with 
input $(x,\epsilon)$
$\epsilon < 1/4$,
then $x \in X$.
\end{enumerate}
\end{definition}

\begin{definition}
A problem $(Y, \eta)$ is $\fpNP$-hard if and only if
for any problem $(X, \mu) \in \fpNP$, there is a polynomial time 
Turing reduction
from $(X, \mu)$ to $(Y, \eta)$. A problem $(Y, \eta)$ is $\fpNP$-complete
if $(Y, \eta) \in \fpNP$ and $(Y, \eta)$ is $\fpNP$-hard.
\end{definition}

\begin{theorem}\label{fp-complete}
If there is a $\fpNP$-complete problem in $\fpP$, then $\fpP=\fpNP$.
\end{theorem}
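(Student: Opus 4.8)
The plan is to prove the strong/weak analogue of the reduction lemma following Definition~\ref{Turing-reduction}: if there is a polynomial time Turing reduction from $(X,\mu)$ to $(Y,\eta)$ in the sense of Definition~\ref{fpTuring-reduction} and $(Y,\eta)\in\fpP$, then $(X,\mu)\in\fpP$. Granting this, the theorem is immediate. If $(Y,\eta)$ is $\fpNP$-complete and lies in $\fpP$, then every $(X,\mu)\in\fpNP$ reduces to it in polynomial time, hence $(X,\mu)\in\fpP$, so $\fpNP\subseteq\fpP$; the reverse inclusion $\fpP\subseteq\fpNP$ is obtained by adjoining to any $\fpP$-machine an ignored guess variable $y$, whereupon clauses (a) and (b) of Definition~\ref{fpNP} reduce verbatim to clauses (a) and (b) of Definition~\ref{fpP}. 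Hence $\fpP=\fpNP$.

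To prove the reduction lemma, let $R$ be the black box machine of Definition~\ref{fpTuring-reduction} and let $M_Y$ witness $(Y,\eta)\in\fpP$, with nondecreasing time and precision bounds $p^Y_{\Arith}$, $p^Y_{\Prec}$. Form the composite machine $M$ by replacing each black box node of $R$ by an inlined copy of $M_Y$ together with a step counter that aborts that copy after $p^Y_{\Arith}(S)$ steps, where $(S,y,\epsilon)$ is the current query; the value $r$ returned to $R$ is $+1$ if the truncated run of $M_Y$ halts with a positive output and $-1$ otherwise. Inlining a fixed machine and building a rational-coefficient step counter and polynomial evaluator is routine (compare Proposition~\ref{timed}); the point to observe is that the arithmetic nodes of the inlined $M_Y$ are subjected by a strong (resp.\ weak) $\epsilon$-computation of $M$ to exactly the roundings they would undergo in a strong (resp.\ weak) $\epsilon$-computation of $M_Y$ itself.

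For clause (b) of Definition~\ref{fpP}: in any terminating weak accepting $\epsilon$-computation of $M$ with $\epsilon<1/4$, each feedback $r=+1$ occurs only when a weak $\epsilon$-computation of $M_Y$ on some $(y,\epsilon)$ halts with a positive output, whence $y\in Y$ by Definition~\ref{fpP}(b) applied to $M_Y$; and $r=-1$ is always an admissible weak black box answer in the sense of Definition~\ref{oracle4}. So the computation projects to an admissible accepting weak $\epsilon$-computation of $R$, and Definition~\ref{fpTuring-reduction}(b) gives $x\in X$. For clause (a): fix $x\in X$ with $\Size{x}<\infty$, set $p_{\Prec}(n)=\max\bigl(p^R_{\Prec}(n),\,p^Y_{\Prec}(p^R_{\Arith}(n))\bigr)$, and take $\epsilon<2^{-p_{\Prec}(\Size{x})}$. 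By Definition~\ref{fpTuring-reduction}(a) every admissible strong $\epsilon$-computation of $R$ halts within $p^R_{\Arith}(\Size{x})$ steps, and since the black box consumes its bound $S$ as literal time, every query bound along such a computation satisfies $S<p^R_{\Arith}(\Size{x})$. Arguing by induction on successive queries, the choice of $p_{\Prec}$ forces $\epsilon<2^{-p^Y_{\Prec}(S)}$ at each query, so Definition~\ref{fpP}(a),(b) for $M_Y$ make $r$ an admissible strong black box answer (Definition~\ref{oracle3}): if $y\in Y$ and $\Size{y}\le S$ then $M_Y$ halts within $p^Y_{\Arith}(\Size{y})\le p^Y_{\Arith}(S)$ steps with positive output so $r=+1$, if $y\notin Y$ then $M_Y$ cannot halt positively so $r=-1$, and otherwise either sign is admissible. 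Hence the strong $\epsilon$-computation of $M$ is an admissible strong $\epsilon$-computation of $R$, which therefore halts within $p^R_{\Arith}(\Size{x})$ invocations, each of cost $\mathrm{poly}\bigl(p^Y_{\Arith}(p^R_{\Arith}(\Size{x}))\bigr)$, hence in time polynomial in $\Size{x}$, and accepts $x$; that is, $M$ correctly decides $x\in X$. Thus $(X,\mu)\in\fpP$.

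The main obstacle is the precision accounting in clause (a): one must guarantee that a single threshold $2^{-p_{\Prec}(\Size{x})}$ on the outer computation simultaneously undercuts $2^{-p^Y_{\Prec}(\Size{y})}$ at every relevant black box query, together with the mild circularity that the query bounds $S$ are needed to be small in order for $M_Y$'s answers to be admissible, which in turn is needed to bound $R$'s running time. This is resolved exactly because $S$ is consumed as running time of $R$ and so the induction on successive queries closes. Everything else is the exact, one-sided argument of the lemma after Definition~\ref{Turing-reduction} with strong and weak computations substituted throughout.
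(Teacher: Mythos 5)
Your proposal is correct and follows essentially the same route as the paper's proof: replace the black box node by a time-bounded simulation of the polynomial-time machine for $(Y,\eta)$ (the paper invokes the timed universal machine $U'$ of Proposition~\ref{timed} where you inline $M_Y$ with a step counter), use Definition~\ref{fpP}(b) for $M_Y$ to show weak accepting computations never certify a wrong query answer, and choose the composite precision polynomial by composing $p^Y_{\Prec}$ with the reduction's arithmetic bound, using that the query bounds $S$ are charged as running time of $R$ and hence polynomially bounded in $\Size{x}$. Your explicit treatment of $\fpP\subseteq\fpNP$ and of the admissibility bookkeeping at each query is a slightly more careful rendering of the same argument, not a different one.
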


\changed{
\begin{proof}

Assume that $(Y, \eta)$ is $\fpNP$-complete, and that it is in $\fpP$.
	Let $M_Y$ be the machine solving $Y$ \changedA{in polynomial time}. We have to prove that for any
problem $(X, \mu) \in \fpNP$, $(X, \mu) \in \fpP$. 

Because $(Y, \eta)$ is $\fpNP$-complete, there is a Turing reduction
from $(X, \mu)$ to $(Y, \eta)$. Let $M$ be the machine with black box
node $\nu_O$ for the Turing reduction.

	We produce a machine $M_X=M_X(x,\epsilon)$ by replacing each call to $\nu_O$ at state
	$(z, S.y, \epsilon)$ 
by a call to the machine
$U'$ from Proposition~\ref{timed} with input $(S, f_{M_Y}, y, \epsilon)$.
Properly speaking, one should modify $U'$ in order to make sure that the 
contents of $z$ are preserved.

There is a polynomial $p_1$ with the following property: if $x \in X$
and $\epsilon < 2^{-p_1(\Size{x})}$ then any strong 
$\epsilon$-computation of the machine $M$ succeeds, in time 
	at most	$\changedA{q(\Size{x})}$ for some polynomial $q$. 

We need to show that for $\epsilon$ small enough, any strong computation
of $M_X$ with input $(x, \epsilon)$ will succeed. Assume for the time being
	that $\epsilon < 2^{-p_1\changedA{(\Size{x})}}$. Then the strong $\epsilon$-computation
of $M_X$ with input $(x, \epsilon)$ restricts to a strong 
$\epsilon$-computation of $M$ with input $(x, \epsilon)$ 
and $t \ge 0$ calls to the black box routine.

Let $S_1, \dots, S_t$ be the size bounds and $S = \max S_i$.
	The polynomial time bound $q(\Size{x})$ for the black box machine $M$
with input $(x, \epsilon)$, is also a polynomial time bound for
$\sum S_i$, hence for both $S$ and $t$. 
In $M_X$, each call of the black box is replaced by a call to the
machine $U'$ with input some input $S_i, y, \epsilon$. 

The machine $U'$ is actually simulating the machine $M_Y$ with
input $(y, \epsilon)$ up to $S_i$ steps.
Because $(Y, \eta) \in \fpP$, there is a polynomial $p_2$ so that
if $y \in Y$ and $\epsilon < 2^{-p_2(\Size{y})}$ then any
strong $\epsilon$-computation of $M_Y$ will succeed.

In particular, if $y \in Y$, $\Size{y} \le S_i$ and $\epsilon < 2^{-p_2(S)}$, 
then the simulation will succeed. We can now set $p_{\Arith}(t) =
p_1(t)+p_2(q(t))$ and assume that $\epsilon < 2^{-p_{\Arith}(\Size{x})}$.
A strong $\epsilon$-computation for the machine $U'$ will accept the 
input $(S,y, \epsilon)$ whenever $y \in Y$ and
$\Size{y} \le S$. Because strong computations are also weak computations, 
a strong $\epsilon$-computation for the machine $U'$ with input $(S,
y, \epsilon)$ and $y \not \in Y$ will always reject the input. The other
cases are irrelevant: the machine $U'$ is behaving as the black box in
the register equations of $M$, with a larger running time.

The running time for each call to $U'$ is polynomially
bounded in $S$, and hence in $\Size{x}$. It follows that if $x \in X$ then
the
running time of $M_X$ with input $(x,\epsilon)$ is bounded above
by some polynomial $p_{\Arith}(\Size{x})$.

Finally, we need to check that no weak $\epsilon$-computation of $M_X$
can accept some input $(x, \epsilon)$ with $x \not \in X$. Because no
$\epsilon$-computation of $M$ can do that, we need to assume that there
is an $\epsilon$-computation for $U'$ in one of the queries that wrongly
returns $y \in Y$ (see definition~\ref{oracle4}). Therefore, there is
a weak $\epsilon$-computation for $M_Y$ that accepts some $y \not \in Y$,
contradiction.

Thus, the machine $M_X$ satisfies definition~\ref{fpP} and 
the problem $(X,\mu)$ is in $\fpP$ as claimed.
\end{proof}
}

\begin{example}\label{pseudo-accepting}
The pseudo-accepting set of an algebraic decision circuit $C$ is
\[
\begin{split}
	\mathcal A(C) = \left\{ (x,w):\right. &\text{ $w$ is an accepting weak $\delta$-computation} \\
& \left. \text{of $C$ with input $(x, \delta)$, for some $\delta < 1/4$}\right\}
.\end{split}
\]
We define $\CircPseudoFeas$ (Circuit Pseudo-Feasibility) as the set of 
	algebraic circuits $C$ with \changedA{rational constants that have pseudo-accepting set $\mathcal A(C) \ne \emptyset$}. 
	To show that a given $C \in \mathcal A(C)$ one may want to exhibit a witness $(x,w,\delta)$ so that $(x,w)$ is an accepting weak $\delta$-computation. Establishing that an input is an \changed{actual}\label{ref22} weak $\delta$-computation using approximate computations is a thorny matter. We will settle for a test that passes when $(x,w)$ is a $\delta/2$ weak computation, $\delta < 1/8$. Success of the test will imply that $(x,w) \in \mathcal A(C)$. Not all elements of $\mathcal A(C)$ will admit such a witness, and those not admitting such a witness will be deemed to be unstable and ill-posed.
The condition number for $\CircPseudoFeas$ will measure the amount of information necessary to check a witness with a machine with precision $\epsilon$. The smaller is $\epsilon$, the larger will be the condition. Unstable witnesses will have infinite condition. Formally, we define the condition
	number $\mu(C)=\rho(C)^{-1}$, where
	\begin{equation}\label{def-rho}
\begin{split}
	\rho(C) = \sup \{& \epsilon>0:  
\exists (x,w) \in \mathcal A(C),
	w \in {\mathbf F}_{\epsilon}^{\infty},\\& \ \exists \ \delta \changed{\text{ with }} \epsilon<\delta<1/8,
	\text{and $w$ is an accepting}\\& \text{weak $\delta/2$-computation 
of $C$ with input $(x, \delta)$} \}.
\end{split}
\end{equation}
We make the convention that $\rho(C) = 0$ if the set above is empty. Notice that
$\rho(C) > 0$ implies that $C \in \CircPseudoFeas$.
\end{example}
This definition makes ill-posed the circuits that admit an exact
accepting computation but no accepting non-exact weak 
$\delta$-computations for
$\delta > 0$. 
There is no reason for some $C \in \CircPseudoFeas$ to actually accept
any exact computation. Moreover, $\rho(C) \le 1/8$ by construction.

\begin{theorem}\label{th-main} $(\CircPseudoFeas,\rho^{-1})$ is $\fpNP$-complete.
\end{theorem}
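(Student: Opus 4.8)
The statement has two parts: that $(\CircPseudoFeas,\rho^{-1})$ lies in $\fpNP$, and that every $(X,\mu)\in\fpNP$ reduces to it in polynomial time. For membership I would build a nondeterministic machine $M_V(C,y,\epsilon)$ which reads $y$ as a claimed witness $(x,w,\delta)$ — a finite list $w$ of real numbers together with a $\delta<1/8$ — and runs the verification test underlying Example~\ref{pseudo-accepting}: scan the gates of $C$ in topological order, and for each arithmetic gate $i$ recompute $u\approx w_{j(i)}\circ_i w_{k(i)}$ in the machine's own floating point arithmetic and check that $w_i$ agrees with $u$ up to relative error roughly $\delta$ (for selector gates, simply check the case distinction), finally accepting iff $w_\tau>0$. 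The no-false-positives requirement of Definition~\ref{fpNP}(b) is where the $\tfrac{\delta}{2}$-versus-$\delta$ slack in the definition of $\rho$ is used: since the round-off of $M_V$ is bounded by $\epsilon<\tfrac14$, acceptance of any weak $\epsilon$-computation of $M_V$ forces the (possibly perturbed) trace it actually worked with to be a genuine accepting weak $\delta$-computation of $C$, hence an element of $\mathcal A(C)$; the delicate point is to carry out the sign tests robustly, for which I would reduce each comparison to a sign determination on explicitly representable quantities via Lemmas~\ref{lemma-sum} and~\ref{abc}. Condition~\ref{fpNP}(a) is exactly what the choice $\mu=\rho^{-1}$ buys: when $\rho(C)>0$ there is by definition a witness with $w\in{\mathbf F}_{\epsilon_0}^{\infty}$ for some $\epsilon_0$ arbitrarily close to $\rho(C)$, and after enlarging $\delta$ (which may be taken bounded away from $0$ by an absolute constant), any strong $\epsilon$-computation with $\epsilon<\epsilon_0$ reads $w$ exactly and passes every test in time linear in $\Length C$; since $\log_2(1/\epsilon_0)\le 1+\log_2\mu(C)\le 1+\Size C$, the precision budget $\epsilon<2^{-p_{\Prec}(\Size C)}$ is met with a linear $p_{\Prec}$.

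For hardness I would follow the circuit-simulation template of Theorem~\ref{CircFeasNPc}, combined with the repeated-doubling device from the proof of Theorem~\ref{SAFeas-complete}, since the reduction does not know $\Size x$. Fix $(X,\mu)\in\fpNP$ with machine $M$ and polynomials $p_{\Arith},p_{\Prec}$. For each $T$, I would construct an algebraic decision circuit $\Psi_{T,x}$ of size polynomial in $T$ that simulates the time-$T$ floating point computation of $M$ on input $(x,g,\cdot)$: as in Theorem~\ref{CircFeasNPc}, encode the transition $(\nu(t),s(t))\mapsto(\nu(t+1),s(t+1))$ by Lagrange-interpolating the next-node and next-state maps and cascade $O(T^2)$ such blocks, with the guess $g$ and a precision parameter occupying input slots and $x$ supplied through further input slots or, after rationalization to a dyadic approximation, as the circuit's constants — using that a strong $\epsilon$-computation only ever sees $\fl_\epsilon$ of its data. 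The reduction machine then reads $x$, starts with $T=\Length x$, repeatedly doubles $T$, builds $\Psi_{T,x}$, and queries the black box for $\CircPseudoFeas$ with input $(S_T,\Psi_{T,x},\epsilon)$ where $S_T$ is polynomial in $T$, returning ``yes'' as soon as a query succeeds. Correctness has the two expected sides: for $x\notin X$ every $\Psi_{T,x}$ is outside $\CircPseudoFeas$, because an accepting weak computation of $\Psi_{T,x}$ would, by construction, be an accepting weak $\epsilon$-computation of $M$ on $(x,g,\epsilon)$ and contradict Definition~\ref{fpNP}(b); while for $x\in X$ with $\Size x<\infty$ and $T\ge p_{\Arith}(\Size x)$, the strong computation guaranteed by Definition~\ref{fpNP}(a) supplies a representable witness showing $\rho(\Psi_{T,x})>0$, so once $T$ is also large enough that $\Size{\Psi_{T,x}}\le S_T$ some query succeeds — which happens after $O(\log\Size x)$ doublings and in total time polynomial in $\Size x$.

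The main obstacle I anticipate is the precision bookkeeping. In $\mathcal A(C)$ the computation precision equals the $n$-th input $\delta$, whereas in $\rho(C)$ it is $\delta/2$, while Definition~\ref{fpNP}(a) and (b) constrain $M$ only for strong (resp.\ weak) $\epsilon$-computations whose third input equals $\epsilon$. Thus the simulating circuit must feed $M$ a precision parameter that coincides exactly with the circuit's own computation precision on the witness used in the forward direction, yet is still an upper bound for it in the backward direction; reconciling these two regimes — presumably by letting the circuit carry an auxiliary guessed precision and verify its consistency with half the $n$-th input through the robust comparisons of Lemmas~\ref{lemma-sum} and~\ref{abc} — and simultaneously checking that the simulation commutes with rounding step for step, so that weak and strong $\epsilon$-computations of $\Psi_{T,x}$ correspond to weak and strong $\epsilon$-computations of $M$, is the part that I expect to need the most care and to be the crux of the proof.
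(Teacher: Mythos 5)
Your membership argument follows the paper's own route (it is essentially Proposition~\ref{prop-technical} and its verifier $U$: gate-by-gate checks of the witness $(x,w,\delta)$ with tolerance governed by the $\delta/2$-versus-$\delta$ slack, robust comparisons via Lemmas~\ref{lemma-sum} and~\ref{abc}, and $p_{\Prec}$ linear because $\log_2\rho(C)^{-1}$ is absorbed into $\Size{C}$), though two of your glosses need care: the verifier must itself robustly check $\delta\le 1/8$ and $\epsilon\le\delta/32$ (with only $\epsilon<1/4$ the verifier's own relative errors swamp the $\delta/2$ slack, which is what Lemmas~\ref{lemma-delta}--\ref{lemma-C1C2} are for), and ``enlarging $\delta$'' is neither needed nor innocuous, since $\delta$ is an input of the circuit and changing it changes the computation being certified. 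The genuine gap is in the hardness half, and it is exactly the point you flag at the end without resolving. Using the Lagrange cascade of Theorem~\ref{CircFeasNPc} verbatim does not make weak $\epsilon$-computations of $\Psi_{T,x}$ correspond to weak $\epsilon$-computations of $M$: under a weak computation the real-number encoding of $\nu(t)$ produced by a masked sum is only approximately an integer, so the masks $L_i(\nu(t-1))$ need not vanish exactly at the wrong indices, and multiplying the masks into $\sum_i L_i(\nu(t-1))g_k(\nu(t-1),s(t-1))$ inserts extra rounded operations into the data path, so each simulated step of $M$ suffers several perturbations where Definition~\ref{weak} allows one. Consequently an accepting weak $\delta$-computation of $\Psi_{T,x}$ need not yield an accepting weak $\delta$-computation of $M$, and you cannot invoke Definition~\ref{fpNP}(b); this is precisely the no-false-positives property the whole theory is built around. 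The completeness side is also affected, since the bound on the black-box parameter rests on turning the accepting strong $\delta_0$-computation of $M$ into an accepting strong computation of the circuit at the same precision, which is what witnesses $\rho(C)\ge\delta_0/2$ and hence $\Size{C}\le (2+T)r(T)$.

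The paper closes this gap by separating discrete control from the floating-point data path: $\nu(t)$ is carried in bits as $0/1$ variables, the interpolating polynomials (or, in Appendix~\ref{alt-reduction}, a binary tree of selectors) are used only to produce selector controls, and the state update is routed through a chain of selectors $T_i=S(g_k(\nu(t-1),s(t-1)),T_{i-1},L_i(\nu(t-1)))$, so that the circuit contains no floating-point operation beyond those of $M$. Since selectors are exact in weak and strong computations and relative-error perturbations preserve zeros and signs, weak (resp.\ strong, exact) $\delta$-computations of $C_{M,T,x}$ with input $(y,\delta)$ then coincide with weak (resp.\ strong, exact) time-$T$ $\delta$-computations of $M$ with input $(x,y,\delta)$. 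Once this correspondence is exact, the precision reconciliation you anticipate as the crux disappears: soundness only needs $\mathcal A(C_{M,T,x})\ne\emptyset$ to produce an accepting weak $\delta<1/4$ computation of $M$ (so $x\in X$ by Definition~\ref{fpNP}(b)), while the $\delta/2$ and the ${\mathbf F}_{\epsilon}$-representability in the definition of $\rho$ are used only on the membership and size-bounding side; no auxiliary guessed precision checked against half the last input is required.
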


\changedA{If a circuit has rational constants, its length is assumed to be the number of nodes plus the bit length of the constants. The proof of Theorem~\ref{th-main} will use the} proposition below:

\begin{proposition}\label{prop-technical}\label{ref25}
There is a machine $U$ with input $(C,x,w,\delta,\epsilon)$
such that:
\begin{enumerate}[(a)]
\item If the following conditions hold:
\begin{enumerate}[i.]
\item  $0 \le \delta < 1/8$, 
\item  $0 \le \epsilon < \delta/32$,
\item  $C$ is \changed{an
	algebraic circuit with \changedA{rational} constants}, and 
\item  $w$ is an accepting weak
$\delta/2$-computation for
$C$ with input $(x, \delta)$, 
with each $w_i \in {\mathbf F}_{\epsilon}$.
\end{enumerate}
		then the strong $\epsilon$-computation for
$U$ with input $(C,x,w,\delta,\epsilon)$ accepts the 
input in $\Length{C}^{O(1)}$ steps.
\item If for some $\epsilon < 1/4$ 
there is a weak $\epsilon$-computation for $U$ accepting the input
$(C,x,w,\delta,\epsilon)$, then there is a weak 
		accepting \changed{$\delta < 1/7$}-computation for $C$ with input $(x, \delta)$.
\end{enumerate}
\end{proposition}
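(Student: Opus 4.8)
The plan is to build $U$ as an explicit node-by-node \emph{verifier}. On input $(C,x,w,\delta,\epsilon)$ it first parses the encoding of $C$ (the DAG together with the bit-expansions of its rational constants), checks that $C$ is a syntactically valid algebraic decision circuit, and rejects unless $\delta<1/8$ and $\epsilon<\delta/16$; the second pre-test is a harmless relaxation of hypothesis (a)(ii), and is what keeps $\epsilon$ small relative to $\delta$ in part (b). It then sweeps the nodes $i=1,\dots,\tau$ in topological order, keeping the claimed values $w_1,\dots,w_\tau$ in memory, and runs one test per node: at an input or constant node, $|w_i-u_i|\le\alpha\,\delta\,|u_i|$, where $u_i$ is $x_i$ (resp.\ the constant $y_{i-n}$) as read by $U$; at an arithmetic node, $|w_i-P|\le\alpha\,\delta\,|P|$, where $P:=\fl_\epsilon(w_{j(i)}\circ_i w_{k(i)})$ is obtained by one rounding of the exact operation on the two stored values; at a selector, the exact identity $w_i=w_{j(i)}$ if $w_{l(i)}>0$ and $w_i=w_{k(i)}$ otherwise; and at the output node, $w_\tau>0$. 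Here $\alpha$ is a fixed rational constant of $U$ chosen in a narrow window around $3/5$ — a single parameter which will reconcile parts (a) and (b). Reading each predecessor register costs $O(\tau)$ shifts, so $U$ runs in $\Length{C}^{O(1)}$ steps.

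The tolerance comparisons and the identities are implemented so that, under a strong $\epsilon$-computation with $w\in{\mathbf F}_\epsilon^\infty$, they are decided \emph{exactly}: every stored quantity is then an exact element of ${\mathbf F}_\epsilon$, the difference $w_i-P$ of two floats is written exactly as a sum $c+e$ of two floats by Lemma~\ref{lemma-sum}, iterated use of Lemma~\ref{abc} decides the sign of $\fl_\epsilon\!\bigl(\alpha\,\fl_\epsilon(\delta)\,|P|\bigr)-|w_i-P|$ correctly, and a selector test reduces to deciding $\fl_\epsilon(w_i-w_{j(i)})=0$, which is exact since $0$ is the only float rounding to $0$. Part (a) follows: if $w$ is an accepting weak $\delta/2$-computation of $C$ on $(x,\delta)$ with $w\in{\mathbf F}_\epsilon^\infty$ and $\epsilon<\delta/32$, then at an arithmetic node $|w_i-P|\le|w_i-(w_{j(i)}\circ_i w_{k(i)})|+\epsilon|w_{j(i)}\circ_i w_{k(i)}|\le(\delta/2+\epsilon)|w_{j(i)}\circ_i w_{k(i)}|$, which is below $\alpha\delta|P|$ by the choice of $\alpha$ (and similarly at input and constant nodes, using $\fl_\epsilon(u_i)=u_i(1\pm\epsilon)$), while the selector identities and $w_\tau>0$ pass verbatim; so $U$ accepts in $\Length{C}^{O(1)}$ steps.

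For part (b) the key point is to certify, not the witness $w$ itself, but the perturbed copies $\tilde w_i=w_i(1+\eta_i)$, $|\eta_i|\le\epsilon$, that $U$ actually holds after reading its input under a weak $\epsilon$-computation (fixed once and for all, since copies are exact by Definition~\ref{weak}(c)). Suppose such a computation accepts, $\epsilon<1/4$. Unwinding the bounded-length evaluations of the two pre-tests forces $\epsilon<\tfrac{\delta}{16}\cdot\tfrac{1+\epsilon}{1-\epsilon}$ and $\delta<\tfrac{1/8}{1-\epsilon}$, hence $\epsilon=O(\delta)$ and $\delta<1/7$. At an arithmetic node $U$ used $P=\fl_\epsilon(\tilde w_{j(i)}\circ_i\tilde w_{k(i)})=(\tilde w_{j(i)}\circ_i\tilde w_{k(i)})(1+\theta)$, $|\theta|\le\epsilon$ — one clean rounding, so no cancellation — and its test having come out positive gives, after propagating the $O(\epsilon)$ errors of the bounded computation, $|\tilde w_i-(\tilde w_{j(i)}\circ_i\tilde w_{k(i)})|\le\alpha\delta(1+O(\epsilon))\,|\tilde w_{j(i)}\circ_i\tilde w_{k(i)}|$; the input and constant tests give $|\tilde w_i-u_i|\le\bigl(\alpha\delta(1+O(\epsilon))+\epsilon\bigr)|u_i|$; and the selector identities and $\tilde w_\tau>0$ transfer verbatim, as $U$ decided them by exact sign and zero tests on copied values. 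With $\delta^*:=\alpha\delta(1+O(\epsilon))+\epsilon$ and $\alpha<1$, $\epsilon<\delta/16$, $\delta<1/7$, one checks $\delta^*<1/7$, so $\tilde w$ is a weak accepting $\delta^*$-computation of $C$ on $(x,\delta)$ with $\delta^*<1/7$ — the desired conclusion.

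The main obstacle is precisely this calibration of $\alpha$: it must be loose enough relative to $\delta/2$ that a genuine $\delta/2$-witness survives the single rounding $U$ introduces when $\epsilon<\delta/32$ (forcing $\alpha$ above roughly $0.57$), yet tight enough relative to $1/7$, using only $\delta<1/8$ and $\epsilon<\delta/16$, that no accepting weak computation of $U$ can certify more than a weak $1/7$-computation of $C$ (forcing $\alpha$ bounded safely below $8/7$). Keeping this window non-empty is what dictates the architecture — a single rounding $P$ of each operation rather than a re-evaluation with compensated summation, whose weak realizations would reintroduce an uncontrollable additive error on cancelling sums, together with an $\epsilon$-pre-test no stricter than hypothesis (a)(ii) — while Lemmas~\ref{lemma-sum} and~\ref{abc} serve only to keep the strong-computation decisions exact. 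The rest is routine propagation of $(1+O(\epsilon))$ and $O(\epsilon)$ error terms through the bounded per-node computations.
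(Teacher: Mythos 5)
Your verifier has the same architecture as the paper's: a node-by-node certification of an intermediate tolerance strictly between the hypothesis $\delta/2$ and the target $1/7$ (your $\alpha\approx 3/5$ plays the role of the paper's $3/4$ in $C_1=1+\tfrac34\delta$, $C_2=1-\tfrac34\delta$), preceded by pre-tests on $\delta$ and $\epsilon$ whose acceptance under a weak computation is bootstrapped to force $\delta<1/7$ and $\epsilon=O(\delta)$, and with the once-perturbed stored values $\tilde w$ serving as the part (b) witness; since your $\delta^*\le\delta$, your $\delta^*$-computation is a fortiori the required weak $\delta$-computation with $\delta<1/7$, so the argument is sound. Where you genuinely differ is the per-node test and the tool deciding it: the paper never forms the difference $w_i-P$, but compares the single rounded quantity $w_{j}\circ_i w_{k}$ two-sidedly against the products $C_1w_i$ and $C_2w_i$ (tolerance measured relative to the claimed value $w_i$), so that on the strong side only monotonicity \eqref{a-mon}, the $1+\epsilon$ property and the explicit bounds of Lemma~\ref{lemma-C1C2} are needed — Lemmas~\ref{lemma-sum} and~\ref{abc} are motivation only and never enter the verifier. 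You instead recompute $P=\fl_{\epsilon}(w_{j}\circ_i w_{k})$ and test $|w_i-P|\le\alpha\delta|P|$, invoking Lemma~\ref{lemma-sum} and iterated Lemma~\ref{abc} to decide this exactly under strong computation; that works (with the sign case analysis you gesture at), but it is dispensable: every rounding in your test is a relative perturbation, so the sign of the computed comparison already agrees with the sign of the comparison of the stored quantities, and your own margin estimate $\tfrac{\delta/2+\epsilon}{1-\epsilon}<\alpha\delta(1-O(\epsilon))$ is what makes the strong computation accept — exactly the paper's mechanism, and dropping the compensated-sum machinery simplifies your part (a). One caution: the constants of $C$ (like $x$ and $\delta$) must be read as real coordinates of $U$'s input, perturbed once by a relative $\epsilon$, which is what your part (b) error propagation assumes and what the paper does; if the verifier instead reconstructed a constant from the bit expansion in the encoding by floating-point arithmetic, a weak computation could drift by a relative $O(d\epsilon)$ in the bit length $d$ (cf.\ Lemma~\ref{weak-bits}), which is not controlled in part (b) and would destroy the bound $\delta^*<1/7$ at constant nodes.
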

The reason to require $w \in {\mathbf F}_{\epsilon}$ at input in (a) is to address the problem of deciding queries such as $w_1 \ge w_2$ or $w_1 - w_2 > w_3$ for inputs $w_i$ known up to precision $\epsilon$. This is impossible using weak computations, but it is possible using strong computations with the hypothesis that the inputs are in $\mathcal F_{\epsilon}$. See Lemmas~\ref{lemma-sum} and \ref{abc}.
The proof of Proposition~\ref{prop-technical} is postponed to the next section. 
We prove Theorem~\ref{th-main} in two steps.

\begin{proof}[Proof that the problem $(\CircPseudoFeas,\rho^{-1})$ is in $\fpNP$]
We claim that the machine $U=U(C,y,\epsilon)$, $y = (x,w,\delta)$ 
from Proposition~\ref{prop-technical}
has the properties required by Definition~\ref{fpNP}. 
\par
\noindent
{\bf Property (a):} {\em 
There are polynomials $p_{\Arith}$ and $p_{\Prec}$ such that whenever 
$C \in \CircPseudoFeas$, $\Size{C} < \infty$ 
there is $y=(x,w,\delta) \in \mathbb R_{\infty}$ so that
for all $\epsilon < 2^{-p_{\Prec}(\Size{C})}$, 
the {\bf strong}
$\epsilon$-computation for $U$ with input $(x,y,\epsilon)$ terminates
in time $T< p_{\Arith}(\Size{x})$ and correctly decides whether $x \in X$.}

Let $p_{\Prec}(S) = S+1$. By definition of $\rho(C)$,
there are some $\rho(C)/2 < \epsilon' \le \rho(C)$ and 
$(x,w) \in \mathcal A(C)$ so that $w$ is an accepting weak 
$\delta/2$-computation of $C$ with input $(x,\delta)$ for  
$\epsilon' < \delta < 1/8$, and $w \in {\mathbf F}_{\epsilon'}$.
We pick $\epsilon$ small enough, 
\[
\epsilon < 2^{-p_{\Prec}(\Size{C})} 
=
2^{- \Length{C} (1 - \log_2( \rho(C)) ) -1}
.
\]
The bounds $\Length{C} \ge 2$, 
	\changedA{$\rho(C) < 2 \epsilon'$} and $\rho(C) \le 1/8$ imply that
\[
\epsilon < 2^{-3 +\log_2(\rho(C)^2)} 
	\changedA{=\frac{\rho(C)^2}{8} < \frac{ \epsilon'}{32}	
	< \frac{\delta}{32}.}
\]
Since $\epsilon < \epsilon'$,
${\mathbf F}_{\epsilon} \supseteq {\mathbf F}_{\epsilon'}$ and we are
in the hypotheses of Proposition~\ref{prop-technical}(a). It follows
that the strong $\epsilon$-computation of $U$ with input $(C,y,\epsilon)$,
$y=(x,w,\delta)$ terminates in time $\Length{C}^{O(1)} < \Size{C}^{O(1)}$.

{\bf Property (b):} {\em 
If there is some $y \in \mathbb R_{\infty}$ for which there is
a terminating {\bf weak} accepting
$\epsilon$-computation for $U$ 
with input $(C, y, \epsilon)$, $\epsilon < 1/4$, then  
$C \in \CircPseudoFeas$.}

This follows directly from Proposition ~\ref{prop-technical}(b).
\end{proof}

\begin{proof}[Proof that $(\CircPseudoFeas,\rho^{-1})$ is \fpNP-hard]
Let $(X, \mu) \in \fpNP$ be recognized by the non-deterministic
polynomial time machine $M=M(x,y,\delta)$ of  Definition~\ref{fpNP}.
Let $p_{\Arith}$, $p_{\Prec}$ be the polynomial bounds for the definition.
We will now produce a reduction from $(X,\mu)$ to the problem
$(\CircPseudoFeas,\rho^{-1})$.

For every $T>0$, $C_{M,T,x}=C_{M,T,x}(y,\delta)$ is the 
time-$T$ directed algebraic circuit
for the time-$T$ computation of $M$ with input $(x,y,\delta)$. The
$x_i$'s and the constants of $M$ 
are treated as constants of the circuit $C_{M,T,x}$.

Unlike the proof of Theorem~\ref{CircFeasNPc}, we need to make a distinction
between discrete and continuous variables when building the circuit.
Recall that in this paper discrete variables are represented by a real 
number $x$, where $x>0$ is interpreted as `Yes' or $1$ and $x \le 0$
is interpreted as `No' or $0$. Discrete or boolean operations during exact, 
strong or weak $\epsilon$-computations are always correct. This will allow to
reduce time-$T$ computations to an algebraic decision circuit where the
discrete computations are correct and the floating point computations are
exactly the floating point computations from the machine $M$ with input
$x,y,\delta$.

We give two constructions for this circuit. One can keep the original
Cucker-Torrecillas, as in Theorem~\ref{CircFeasNPc} above. 
Let $(\nu(t), s(t))_{t \in \mathbb N_0} \in \{1,2, \dots, N \} 
\times \mathbb R_{\infty}$
denote the {\em node} and {\em state} at time $t$.
The node $\nu(t)$ will be represented by $d=\lceil \log_2 N \rceil$
`discrete' variables in $\{0,1\}$, that is $d$-bits.

All the Lagrange interpolating polynomials 
\[
L_{i}(\nu)= \prod_{j\ne i} \frac{\nu-i}{j-i} 
\]
must be computed with discrete variables. In order to compute
\[
	s_k(t) = \sum_{i=1}^N L_i(\nu(t-1)) \changed{g_k}(\nu(t-1), s(t-1))
\]
one should use the fact that $L_i(\nu(t-1)) \in \{0,1\}$ and insert
	a selector with input $T_i=\changed{g_k}(\nu(t-1), s(t-1)), T_{i-1}, L_i(\nu(t-1))$,\label{ref26}
assuming also $T_{-1}=0$. Then,
\[
s_k(t) = T_N
\]
without introducing extra floating point operations.
The second reduction using selectors and no Lagrange interpolation is
given in Appendix~\ref{alt-reduction}.

Both reductions provide  
a circuit $C_{M,T,x}$ with length bounded by a polynomial $r(T)=r_M(T)$.
Given $M$, $x$, there is a routine that produces a representation of
the circuit $C = C_{M,T,x}$ in time polynomial in $T$. No floating
point operation appears in that routine, except possibly for a copy
operation \changed{which}\label{ref27} is always exact.

Let $\nu_O$ be the black box associated to the problem 
$(\CircPseudoFeas, \rho^{-1})$.
The Turing reduction from $(X,\mu)$ to $(\CircPseudoFeas, \rho^{-1})$
is the BSS machine $R$ with 
black box $\nu_O$ given by the 
\changed{pseudo-code below. Recall that if the black box $\nu_O$ is 
attained when the machine is in some state $(z,S.y)$, then it takes
$S$ as a tentative upper bound for $\Size{y}$. If $y$ is not too large
and $y$ belongs to $\CircPseudoFeas$, then it eventually switches to a state of
the form $(z,1.\dots)$. 
We say in this case that `$\nu_O$ accepts $(S,y)$'.
The state variables in $z$ are preserved, so the machine can save whatever
is important for further computations.
The black box machine may also switch to state $(z,-1.\dots)$ if
$y \not \in Y$ or if \changedA{$\Size{y}$} is too large, in that case the input is rejected.
}
\begin{trivlist}
\item {\tt Input} $(x,\delta)$.
\item $T \leftarrow \Length{x}$.
\item {\bf Repeat}
\subitem $T \leftarrow 2T$
\subitem {\tt Produce $C=C_{M,T,x}$.}
	\subitem {\bf If }{\tt the black box $\nu_O$} accepts $(\changed{1}+(T+2)r(T),\changed{(C,\delta)})$\label{ref28}
{\bf then} {\tt output} 1.
\end{trivlist}

\changed{Later on, we will establish that $\changed{1+}(T+2)r(T)$ will eventually be
an upper bound for the size of $(C,\delta)$ for $T$ large enough.}
Recall from \changed{Definition~\ref{oracle3}}\label{ref29} that 
\changed{if $\Size{C} < (T+2)r(T)$ and $\delta < 2^{-\Size{C}}$ then the strong $\delta$-computation answers in time bounded by $\Size{C}$}
whether the circuit $C \in \CircPseudoFeas$, that is whether $C$ 
has an accepting weak $\delta$-computation with input $(y,\delta)$,
$\delta < 1/4$.
Reciprocally, \changed{the pseudo-code above admits no accepting 
weak $\delta$-computation when}
$C \not \in \CircPseudoFeas$.

We check now that $R$ is a polynomial time 
Turing reduction (Def.\ref{fpTuring-reduction}).
Let $x \in X$, $\Size{x}<\infty$. 
Assume that $\delta < \delta_0 = 2^{-T_1}$ with
\[
T_1=
p_{\Prec}
\changed{
	\left(\Size{x} (1+r(p_{\Arith}(\Size{x})))\right)+4.
}
\]
In part. $\delta_0 \le 1/16 < 1/8$.
Definition~\ref{fpTuring-reduction}(a) requires that the strong $\delta$-computation 
for the
black box machine $R$ with input $(x,\delta)$ accepts the input,
and terminates in time polynomial \changed{in} $\Size{x}$.

Because $\delta < 2^{-p_{\Prec}(\Size{x})}$, there is $y$ so that
the strong $\delta$-computation
for $M$ with input $(x,y,\delta)$ accepts the input in time
\[
T_0 \le p_{\Arith}(\Size{x})
.
\]
Eventually the variable $T$ from machine $R$ will be larger than
$T_0$.
At this point the strong $\delta$-computation for 
the circuit $C=C_{M,T,x}$ with input $(y,\delta)$ will be accepting. 
Therefore $C \in \CircPseudoFeas$.

To actually show that $x$ is accepted in polynomial time, we assume
that $\changedA{T_0 \le}$ $T_1 \le T < 2T_1$. We bound $\Size{C}$ as follows:
let $\epsilon = \delta_0/2$ $\changedA{=2^{-T_1-1}}$. Then a strong $\epsilon$-computation for
$C$ with input $(y,\delta_0)$ is also accepting, and we have by construction
$\epsilon < \delta_0 < 1/8$. Therefore $\rho(C) > \epsilon$, hence
\[
\log_2 \rho(C)^{-1} < T_1+1 \le T+1.
\]
Recall that the circuit $C=C_{M,x,T}$ does not depend on $\delta$.
Therefore for any input $(x,\delta)$ to $R$, once $T_1 \le T < 2T_1$,
\[
\Size{C} = 
(1+ \log_2\rho(C)^{-1}) 
\Length{C} 
\le
(2+T) r(T)
\]
and the input $(\changed{1+}(2+T)r(T),\changed{(C,\epsilon)})$ gets accepted by the black box node.
Since $T \le 2T_1$ and $T_1$ is polynomially bounded by $\Size{x}$, any strong
$\delta$-computation with $\delta < \delta_0$ will accept $x$ within time
polynomial \changed{in} $\Size{x}$.

It remains to prove that no weak $\delta$-computation for $R$ with input
$(x, \delta)$, $\delta < 1/4$, will accept $x \not \in X$. Indeed,
the only accepting possibility is for the black box to accept an input
$(\changed{1+}(2+T)r(T),\changed{(C,\delta)})$ for $C=C_{M,x,T}$ and this can only happen
for $C \in \CircPseudoFeas$. There should therefore exist an accepting
weak time-$T$ computation for $M$ with input $x,y,\delta$, $\delta < 1/4$.
This can only happen if $x \in X$.
\end{proof}

\section{Proof Proposition~\ref{prop-technical}}

The objective of this section is to prove Proposition~\ref{prop-technical}.
For the sake of concision, the command 
\begin{trivlist}
\item 
\hspace{2em}
{\bf check }$\langle${\em expression}$\rangle$ 
\end{trivlist}
means
\begin{trivlist}
\item 
\hspace{2em}
{\bf If } $\neg$ $\langle${\em expression}$\rangle$ {\bf then }{\tt output -1}.
\end{trivlist}
Recall that this is the pseudo-code for a BSS-machine. Strong and weak
$\epsilon$-computations may approximate or not the quantities in the
expression. The pseudo-code for $U$ is:

\begin{enumerate}[1]
\item \hspace{2em} {\tt Input} $(C,x,w,\delta,\epsilon)$.
\item \hspace{2em} {\bf Check} $\delta \le \frac{1}{8}$.
\item \hspace{2em} {\bf Check} $\epsilon \le \frac{1}{32}\delta$.
\item \hspace{2em} $C_1 \leftarrow 1+\frac{3}{4} \delta$.
\item \hspace{2em} $C_2 \leftarrow 1-\frac{3}{4} \delta$.
\item \hspace{2em} {\bf For each} $i \in \{1, \dots, \Length{w}\}$,
\item \hspace{4em} {\bf If} vertex $i$ is input or constant with value $c$
\item \hspace{6em} {\bf then if }$c \ge 0$ {\bf then check} $C_2 w_i \le c \le C_1 w_i$.
\item \hspace{6em} {\bf else check} $C_1 w_i \le c \le C_2 w_i$.
\item \hspace{4em} {\bf If} vertex $i$ is an arithmetic operation $\circ_i$ and incident edges from $j$ and $k$
\item \hspace{6em} {\bf then if }$w_i \ge 0$ {\bf then check} 
$C_2 w_i \le w_j \circ_i w_k \le C_1 w_i$
\item \hspace{6em} {\bf else check} 
$C_1 w_i \le w_j \circ_i w_k \le C_2 w_i$
\item \hspace{4em} {\bf If} vertex $i$ is a selector with incident edges from $j$, $k$ and $l$
\item \hspace{6em} {\bf then check} $w_i = S(w_j,w_k,w_l)$.
\item \hspace{2em} {\tt Output 1}.
\end{enumerate}

\begin{lemma}\label{lemma-delta}
Assume that $\epsilon < 1/4$.
\begin{enumerate} 
\item If $\delta < 1/8$,
	then the strong $\epsilon$-computation for $U$ passes the test
at line 2.
\item Assume that a weak $\epsilon$-computation for $U$ passes
the test of line 2. Then $\delta \le 5/24$.
\end{enumerate}
\end{lemma}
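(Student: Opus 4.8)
The plan is to analyze the single branch produced by line~2, which in canonical form amounts to computing an approximation of $\tfrac{1}{8}-\delta$ and testing its sign, the \textbf{check} being passed exactly when the stored approximation of $\tfrac{1}{8}-\delta$ is nonnegative. The two items are independent and rest, respectively, on the monotonicity of $\fl_\epsilon$ together with the exact representability of $\tfrac{1}{8}$, and on the $1+\epsilon$-property \eqref{a-eps} applied twice.

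For item~1, I would first observe that $\tfrac{1}{8}=2^{-3}$ belongs to every ${\mathbf F}_{2,t}$ (take mantissa $2^t$ and exponent $-3-t$), so $\fl_\epsilon(\tfrac{1}{8})=\tfrac{1}{8}$. In the strong $\epsilon$-computation the input $\delta$ is loaded as $\fl_\epsilon(\delta)$, and from $\delta<\tfrac{1}{8}$ and monotonicity \eqref{a-mon} we get $\fl_\epsilon(\delta)\le\fl_\epsilon(\tfrac{1}{8})=\tfrac{1}{8}$. Hence the exact value $\tfrac{1}{8}-\fl_\epsilon(\delta)$ is $\ge 0$, and since $\fl_\epsilon$ is monotone with $\fl_\epsilon(0)=0$ the rounded intermediate is again $\ge 0$; the branch goes to the continuing side, so the strong $\epsilon$-computation passes line~2.

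For item~2, I would work with the numbers actually carried by a weak $\epsilon$-computation that passes line~2. Writing $\tilde\delta$ for the stored input and $\tilde c$ for the stored value of the constant $\tfrac{1}{8}$, the $1+\epsilon$-property \eqref{a-eps} gives $\tilde\delta\ge(1-\epsilon)\delta$ (we may assume $\delta>0$, else there is nothing to prove) and $\tilde c\le(1+\epsilon)\tfrac{1}{8}$. The weak computation of $\tfrac{1}{8}-\delta$ produces some $r$ with $|r-(\tilde c-\tilde\delta)|\le\epsilon\,|\tilde c-\tilde\delta|$, and passing the test forces $r\ge 0$. Here is the only delicate point: if $\tilde c-\tilde\delta<0$, then $r\ge 0$ would force $|\tilde c-\tilde\delta|\le|r-(\tilde c-\tilde\delta)|\le\epsilon\,|\tilde c-\tilde\delta|$, which is impossible; so $\tilde\delta\le\tilde c$. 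Combining, $(1-\epsilon)\delta\le\tilde\delta\le\tilde c\le(1+\epsilon)\tfrac{1}{8}$, whence $\delta\le\frac{1+\epsilon}{8(1-\epsilon)}$, and since $\epsilon<\tfrac{1}{4}$ and the right-hand side is increasing in $\epsilon$ this gives $\delta\le\frac{5}{24}$, as claimed. The main obstacle, such as it is, is precisely this sign bookkeeping under weak rounding: one cannot read the sign of the exact intermediate off the sign of the rounded result, and ruling out the bad case is exactly what makes the bound go through.
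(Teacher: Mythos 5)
Your proof is correct and takes essentially the same route as the paper: item (1) by monotonicity of $\fl_{\epsilon}$ (with $1/8$ exactly representable), and item (2) by the $1+\epsilon$ bounds on the stored input and constant, yielding $\delta \le \frac{1+\epsilon}{8(1-\epsilon)} \le \frac{5}{24}$ for $\epsilon < 1/4$. Your explicit sign-preservation argument for the weakly rounded subtraction only spells out a step the paper's proof leaves implicit when it says one is "actually comparing" $\delta(1+\epsilon_1)$ with $\frac{1}{8}(1+\epsilon_2)$.
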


\begin{proof}\label{ref31}
\changed{Because of monotonicity \eqref{a-mon}, if $\delta < 1/8$ then 
$\fl_{\epsilon}(\delta) \le \fl_{\epsilon}(1/8)$,
	so the strong $\epsilon$-computation\label{ref32} for $U$ passes the test in line 2.
}
	\changed{
Now, assume that there is a weak $\epsilon$-computation for $U$
that passes the test of line 2. Reading input and constants
introduces error, so 
\changedA{we are actually comparing $\delta(1+\epsilon_1)$
with $\frac{1}{8} (1 + \epsilon_2)$, $|\epsilon_i|< \epsilon$.
	We can conclude that}
\[
	\delta (1-\epsilon) < \frac{1+\epsilon}{8}
\]
so $\delta \le \frac{5}{24}$.}
\end{proof}

\begin{lemma}\label{lemma-epsilon}\label{ref30}
\begin{enumerate} 
\item If $\delta < 1/8$ and $\epsilon < \delta/32$,
	then the strong $\epsilon$-computation for $U$ passes the test
at line 3.
\item 
Assume that $\epsilon < 1/4$ and 
that a weak $\epsilon$-computation for $U$ passes
the tests of lines 2 and 3. Then $\epsilon < 1/250$, $\delta<1/7$ and
		$\epsilon < \changed{\frac{\delta}{31}}$.
\end{enumerate}
\end{lemma}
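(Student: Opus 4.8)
The plan is to reduce everything to the single branch performed at line~3, tracking rounding errors in exactly the style of the proof of Lemma~\ref{lemma-delta}.

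\emph{Part (1).} First I would dispose of the strong case. Note that $\tfrac1{32}=2^{-5}$ lies in ${\mathbf F}_\epsilon$ for every $\epsilon<1/4$, so in a strong $\epsilon$-computation the constant $\tfrac1{32}$ is loaded exactly, and multiplying the representable number $\fl_\epsilon(\delta)$ by $2^{-5}$ only shifts its exponent, so the product equals $\fl_\epsilon(\delta)/32=\fl_\epsilon(\delta/32)$ with no further error. Since $\epsilon<\delta/32$, monotonicity \eqref{a-mon} gives $\fl_\epsilon(\epsilon)\le\fl_\epsilon(\delta/32)$; hence the quantity $\epsilon-\tfrac1{32}\delta$ that $U$ forms is, before its concluding rounding, $\le 0$, and by symmetry \eqref{a-sym} its rounding stays $\le 0$, so the branch at line~3 is taken in the passing direction. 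The hypothesis $\delta<1/8$ enters only through Lemma~\ref{lemma-delta}(1), to guarantee that line~2 is passed and hence that line~3 is reached.

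\emph{Part (2).} For the weak case I would first invoke Lemma~\ref{lemma-delta}(2): a weak $\epsilon$-computation that passes line~2 satisfies $\delta\le 5/24$, and, sharpening the very same estimate, $\delta\le\tfrac18\cdot\tfrac{1+\epsilon}{1-\epsilon}$. Next I would unwind line~3 in the same manner: reading the inputs $\delta,\epsilon$ and performing the boundedly many arithmetic nodes needed to form $\tfrac1{32}\delta$ each contribute a factor $1+\theta$ with $|\theta|<\epsilon$, and the rounding of the final subtraction does not change the sign of the branched quantity, so passing line~3 forces
\[
\epsilon(1-\epsilon)\ \le\ \tfrac1{32}\,\delta\,(1+\epsilon)^{m}
\]
for a fixed small exponent $m$ (the number of error-contributing nodes in the evaluation of $\tfrac1{32}\delta$). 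Substituting the bound on $\delta$ yields a self-referential estimate
\[
\epsilon\ \le\ \frac{(1+\epsilon)^{\,m+1}}{256\,(1-\epsilon)^{2}}\ =:\ F(\epsilon),
\]
with $F$ increasing on $[0,1/4)$. Since $\epsilon<1/4$ one bootstraps $\epsilon\le F(1/4)$, $\epsilon\le F(F(1/4))$, \dots; the iterates decrease to a value below $1/250$ — this last step being legitimate precisely because one checks $F(1/250)<1/250$ — so after a couple of rounds $\epsilon<1/250$. Feeding this back into $\delta\le\tfrac18\cdot\tfrac{1+\epsilon}{1-\epsilon}$ gives $\delta\le\tfrac18\cdot\tfrac{251}{249}<\tfrac17$, and feeding it into the displayed inequality gives $\epsilon\le\tfrac{\delta}{32}\cdot\tfrac{(1+\epsilon)^{m}}{1-\epsilon}<\tfrac{\delta}{31}$, since $31(1+\epsilon)^{m}<32(1-\epsilon)$ once $\epsilon<1/250$.

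The main obstacle is not any one inequality but the error bookkeeping: one must count precisely how many factors $1+\epsilon$ enter the quantities the machine branches on at lines~2 and~3, because the fixed point of the amplification map $F$ sits very close to the target constant $1/250$, so a single spurious factor would break the claimed strict bound. Everything else is the elementary iteration above, together with the two standing facts that scaling by a power of $2$ is exact on ${\mathbf F}_\epsilon$ and that $\fl_\epsilon$ preserves sign.
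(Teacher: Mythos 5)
Your proposal follows the paper's own route almost step for step: part (1) is the same argument (the constant $\tfrac1{32}=2^{-5}$ and the scaling by it are exact in radix $2$, then monotonicity \eqref{a-mon}), and part (2) has the same skeleton — from line 2 one gets $\delta(1-\epsilon)\le\tfrac18(1+\epsilon)$, from line 3 one gets $\epsilon(1-\epsilon)\le\tfrac{\delta}{32}(1+\epsilon)^m$, these are combined into a self-referential bound $\epsilon\le F(\epsilon)$ that is bootstrapped starting from $\epsilon<1/4$, and the conclusions $\delta<1/7$, $\epsilon<\delta/31$ are read off at the end.

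The gap is exactly the bookkeeping you deferred. Under the paper's conventions every input read, every constant load and every arithmetic node of a weak computation carries its own relative error, so the computed value of $\tfrac1{32}\delta$ picks up three factors ($m=3$: reading $\delta$, loading $\tfrac1{32}$, rounding the product), giving $F(\epsilon)=\frac{(1+\epsilon)^4}{256(1-\epsilon)^2}$ as in the paper. But then your pivotal check $F(1/250)<1/250$ is false: $F(0.004)\approx 0.0040015>0.004$, and the relevant fixed point of $F$ is $\approx 0.0040012>1/250$, so the bootstrap stalls just above $1/250$ and cannot deliver the stated constant — your own warning that ``a single spurious factor would break the claimed strict bound'' is precisely what happens here. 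What the bootstrap does give is, say, $\epsilon<0.0041$, which still yields $\delta<\tfrac18\cdot\tfrac{1+\epsilon}{1-\epsilon}<\tfrac17$ and $\epsilon\le\tfrac{\delta}{32}\cdot\tfrac{(1+\epsilon)^3}{1-\epsilon}<\tfrac{\delta}{31}$, i.e.\ the two facts actually used later (Lemma~\ref{lemma-C1C2}, Proposition~\ref{prop-technical}). For what it is worth, the paper's quoted iterate $0.003970\ldots$ corresponds to iterating $\frac{(1+\epsilon)^3}{256(1-\epsilon)}$ rather than its own $f$, so it trips over the same hair's-breadth issue; a complete write-up along your lines must therefore either justify a smaller factor count or weaken the constant $1/250$.
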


\begin{proof}
	\changed{As in Lemma~\ref{lemma-delta},
if $\delta < 1/8$ then 
	the strong $\epsilon$-computation for $U$ passes the test in line 2.}
Under the hypotheses of item (1), $\epsilon < 2^{-8}$.
Assuming radix $2$ computations, $32=2^8$ is exactly representable and
dividing by $32$ is an exact operation. By monotonicity,
\[
\fl(\epsilon) \le 
\fl( \fl(\delta)/32 ) 
=
\fl(\delta)/32 
\]
and the test in line 3 passes.
\medskip
\par

Now, assume that a weak $\epsilon$\changed{-computation} 
passes the tests
at lines 2 and 3. \changed{Reading $\epsilon$ produces
a number in the open interval
$(\epsilon (1-\epsilon), \epsilon (1+\epsilon))$. 
Reading $\delta$, the constant $1/32$ and multiplying together
produces a number in the interval
$( (1-\epsilon)^3 \delta/32, (1+\epsilon)^3 \delta/32)$.
Because the test in line 3 passed,
}
\[
\epsilon (1-\epsilon) \le \frac{\delta}{32}(1+\epsilon)^3
\]
and hence
\[
\epsilon \le \frac{\delta}{32}\frac{(1+\epsilon)^3}{1-\epsilon}
\]
for $\delta < \frac{1}{8} \frac{1+\epsilon}{1-\epsilon}$.
We deduce that $\epsilon \le f(\epsilon)$ where
\[
f(\epsilon) = \frac{1}{256} \frac{(1+\epsilon)^4}{(1-\epsilon)^2}
.
\]
We know that $\epsilon < \epsilon_0 = 1/4$. Iterating $\epsilon_{i+1} \le f(\epsilon_i)$, we obtain $\epsilon < \epsilon_3 \simeq 0.003970515\cdots < 1/250$. 

If a weak computation passes line 3, then  
\[
\delta (1-\epsilon) \le \frac{1}{8} (1+\epsilon)
\]
and hence
\[
\delta < \frac{1}{8} \frac{1+\epsilon_3}{1-\epsilon_3} < 1/7
.
\]
Finally,
\[
\epsilon \le \frac{\delta}{32}\frac{(1+\epsilon)^3}{1-\epsilon} <
	\changed{1.016,112\cdots \times \frac{\delta}{32}
	< \frac{\delta}{31}}
\]
\end{proof}

\begin{lemma}\label{lemma-C1C2}
	Let $\epsilon < \changed{\delta/31}$ with $\delta \le 1/7$. 
The values of $C_1$ and $C_2$ computed by any weak 
$\epsilon$-computation for the machine $U$ satisfy the inequalities below:
\[
\frac{1+\epsilon}{1-\delta/2} < C_1 < \frac{1-\epsilon}{1+\epsilon}
\ \frac{1}{1-\delta}
\]
and
\[
\frac{1+\epsilon}{1-\epsilon}\ \frac{1}{1+\delta} < C_2 < 
\frac{1-\epsilon}{1+\delta/2}
.\]
\end{lemma}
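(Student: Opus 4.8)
The plan is to make explicit how a weak $\epsilon$-computation of $U$ evaluates lines~4 and~5, and then to derive the four displayed inequalities by elementary estimates, using only the hypotheses $\delta\le 1/7$ and $\epsilon<\delta/31$ (so in particular $r\epsilon<1$ for every fixed small integer $r$). First I would expand line~4: the assignment $C_1\leftarrow 1+\tfrac34\delta$ is carried out by reading the constants $1$ and $\tfrac34$ and the value of $\delta$ stored from the input node (itself read with relative error at most $\epsilon$), and combining them with a bounded number of arithmetic nodes. By the $1+\epsilon$-property of weak computations (\eqref{a-eps} and Definition~\ref{weak}) there are quantities $\eta_0,\dots,\eta_r$ with $|\eta_i|\le\epsilon$, where $r$ is a fixed small integer, such that
\[
C_1=\Bigl((1+\eta_0)+\tfrac34\,\delta\,(1+\eta_1)\cdots(1+\eta_{r-1})\Bigr)(1+\eta_r),
\]
and similarly, with possibly different error factors of absolute value at most $\epsilon$,
\[
C_2=\Bigl((1+\eta_0')-\tfrac34\,\delta\,(1+\eta_1')\cdots(1+\eta_{r-1}')\Bigr)(1+\eta_r').
\]
Since $\tfrac34\delta(1+\epsilon)^{r-1}<1$ for $\delta\le 1/7$, the bracket in the formula for $C_2$ is positive, which will matter when multiplying by $(1+\eta_r')$.

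Next I would extremize. Using $(1-\epsilon)^k\ge 1-k\epsilon$ and $(1+\epsilon)^k\le e^{k\epsilon}\le\frac1{1-k\epsilon}$ (valid since $r\epsilon<\delta/31<1$), the four bounds reduce to
\[
(1-\epsilon)^2+\tfrac34\delta(1-\epsilon)^{r}\;\le\;C_1\;\le\;(1+\epsilon)^2+\tfrac34\delta(1+\epsilon)^{r}
\]
and
\[
(1-\epsilon)^2-\tfrac34\delta(1+\epsilon)^{r}\;\le\;C_2\;\le\;(1+\epsilon)^2-\tfrac34\delta(1-\epsilon)^{r}(1+\epsilon).
\]
Each of the four inequalities in the statement then follows by cross-multiplying against the (positive) denominators $1-\delta/2$, $(1+\epsilon)(1-\delta)$, $(1-\epsilon)(1+\delta)$, $1+\delta/2$ and reducing to a polynomial inequality in $\epsilon$ and $\delta$ valid on the box $0\le\epsilon\le\delta/31$, $0<\delta\le 1/7$. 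In every case the decisive quantity is the same gap: the two-sided bounds place $C_1$ at distance $\tfrac14\delta+O(\epsilon)+O(\delta^2)$ from both $\frac{1+\epsilon}{1-\delta/2}\approx 1+\tfrac12\delta$ and $\frac{1-\epsilon}{(1+\epsilon)(1-\delta)}\approx 1+\delta$, and symmetrically for $C_2$, while the error terms amount to $3\epsilon+O(\delta\epsilon)<\tfrac{3\delta}{31}+O(\delta^2/31)$, which is comfortably below $\tfrac14\delta$.

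The only point requiring care is the bookkeeping of the higher–order corrections: one must be sure that the $O(\delta\epsilon)$ and $O(\delta^2)$ terms never consume the $\tfrac14\delta$ margin. This is routine because $\epsilon<\delta/31$ forces every such correction to be at most $\delta^2/31\le\delta/217$ and $r$ is a fixed constant; concretely I would bound $\frac1{1-\delta/2}$, $\frac1{1-\delta}$, $\frac1{1+\delta/2}$, $\frac1{1+\delta}$ by their first–order Taylor polynomials with an explicit remainder valid for $\delta\le 1/7$ (e.g.\ $\frac1{1-\delta}\le 1+\delta+\tfrac{7}{6}\delta^2$), substitute, and verify the resulting inequalities term by term. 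No idea is needed beyond this and the already-established Lemmas~\ref{lemma-delta} and~\ref{lemma-epsilon}; indeed, a slightly less precise bound (with $r$ unspecified but finite) would already suffice, since the margin $\tfrac14\delta$ dominates $c\,r\,\delta\epsilon$ for any reasonable constant $c$ once $\epsilon<\delta/31$.
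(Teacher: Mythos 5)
Your route is the same as the paper's up to the last step: write worst--case $1+\epsilon$ interval bounds for $C_1$ and $C_2$, clear denominators, and verify the resulting polynomial inequalities on the box $\{\delta\le 1/7,\ \epsilon<\delta/31\}$. The gap is precisely in the step you declare routine. The second--order corrections coming from $1/(1+\delta)$, $1/(1-\delta)$, $1/(1\pm\delta/2)$ are of size $\delta^2$, not $\delta^2/31$: they do not carry a factor $\epsilon$, and with $\delta\le 1/7$ they can be as large as about $0.143\,\delta$, so your claim that ``every such correction is at most $\delta^2/31\le\delta/217$'' is false. For the lower bound on $C_2$ this is fatal: your own expansion reduces it to (essentially) $\tfrac14\delta>\delta^2+4\epsilon+O(\delta\epsilon)$, and at $\delta=1/7$, $\epsilon=\delta/31$ the right--hand side is about $0.28\,\delta>\tfrac14\delta$. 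This is not slack in the bookkeeping: taking $\delta=1/7$ and $\epsilon=1/217$, the extremal weak value $C_2=\bigl(1-\epsilon-\tfrac34\delta(1+\epsilon)^3\bigr)(1-\epsilon)\approx 0.8827$, while $\tfrac{1+\epsilon}{1-\epsilon}\cdot\tfrac{1}{1+\delta}\approx 0.8831$, so the required inequality genuinely fails at that corner of the stated parameter box. Hence the ``comfortable margin'' argument cannot close all four inequalities as written.

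The paper's own proof reflects this tightness: after arriving at exactly the same four cleared--denominator inequalities, it observes they are monotone in $\epsilon$, substitutes a single extreme value of $\epsilon$ (the printed proof uses $\epsilon=\delta/48$), and verifies the resulting univariate polynomial inequalities in $\delta$ numerically on $[0,1/7]$; no first--order margin count is attempted, because the constants are too tight for one (indeed the fourth inequality passes at $\epsilon=\delta/48$ but not at $\epsilon=\delta/31$ when $\delta=1/7$ --- a mismatch between the revised hypothesis and the printed proof that is the paper's problem, but it does not rescue your argument). To repair your write--up you would need either a tighter hypothesis on $\epsilon$, or a restriction such as $\delta\le 1/8$ (which is in effect what reaching lines 4--5 of $U$ imposes, since passing the test in line 2 forces $\delta(1-\epsilon)\le\tfrac18(1+\epsilon)$; at $\delta\le 1/8$, $\epsilon=\delta/31$ the critical corner has positive margin), or simply an explicit verification of the four polynomial inequalities as the paper does, rather than the soft $O(\cdot)$ count.
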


The proof of this Lemma uses $1+\epsilon$ inequalities. It is postponed to Appendix~\ref{sec-technical}.

\begin{proof}[Proof of Proposition~\ref{prop-technical}]
Assume the hypotheses of item (1). Lemmas~\ref{lemma-delta}
and~\ref{lemma-epsilon} guarantee that the tests in lines
2 and 3 pass, and in particular that the hypotheses of Lemma~
\ref{lemma-C1C2} are satisfied.
\bigskip
\par
\noindent
{\bf Item (a):} We assumed that $w$ was an accepting weak 
$\delta/2$-computation for $C$ with
input $(x, \delta)$, $\delta<1/8$, 
and that $w_i \in {\mathbf F}_{\epsilon} \subseteq {\mathbf F}_{\delta}$.
Property \eqref{a-ref}
guarantees that a stong computation for $U$ reads 
$w_i = \fl_{\epsilon}(w_i)$ exactly. 
\medskip
\par
For each coordinate $w_i >0$ of $w$ corresponding to reading input or
constant $c$, we have
\[
\frac{1}{1+\delta/2} w_i \le c \le \frac{1}{1-\delta/2} w_i
.
\]
Using the fact that $c(1-\epsilon) \le \fl_{\epsilon}(c) \le c(1+\epsilon)$,
we deduce that
\[
\frac{1-\epsilon}{1+\delta/2} w_i \le \fl_{\epsilon}(c) 
\le \frac{1+\epsilon}{1-\delta/2} w_i
.
\]
From Lemma~\ref{lemma-C1C2},
\[
C_2 w_i \le \fl_{\epsilon}(c) \le C_1 w_i
.
\]
Recall that $C_1$, $C_2$ and the $w_i$ are in ${\mathbf F}_{\epsilon}$.
By monotonicity,
\[
\fl_{\epsilon}(C_2 w_i) \le \fl_{\epsilon}(c) \le \fl_{\epsilon}(C_1 w_i) 
\]
so the test in line 8 passes. The same argument is valid for negative
$w_i$, reversing inequalities where necessary.
\medskip
\par
For each positive 
coordinate of $w_i$ of $w$ corresponding to an arithmetic operation,
we also have:
\[
\frac{1}{1+\delta/2} w_i \le w_j \circ_i w_k \le \frac{1}{1-\delta/2} w_i
\]
and hence from Lemma~\ref{lemma-C1C2},
\[
C_2 w_i \le w_j \circ_i w_k \le C_1 w_i
\]
Recall that a number $x$ is representable if and only if $x=\fl_{\epsilon}(x)$.
All the numbers $w_r$ in the inequation above are representable.
By monotonicity \eqref{a-mon}, we obtain that
\[
\fl_{\epsilon}( C_2 w_i ) \le \fl_{\epsilon}( w_j \circ_i w_k )
\le \fl_{\epsilon}( C_1 w_i).
\]
Therefore all the tests in line 11 pass. Reversing inequalities, all the
tests for negative values of $w_i$ also pass. 
\par
There is no rounding-off for selectors, so the tests in
line 14 pass.
Therefore the input is accepted. Since there are at most $2 \tau$ inequalities
to check, the machine terminates in time $\tau^{O(1)}$.

\bigskip
\noindent
\par
{\bf Item(b):} Assume that the input is accepted by a weak 
$\epsilon$-computation. Part 2 of Lemmas
~\ref{lemma-delta} and \ref{lemma-epsilon} guarantess that the hypotheses
of Lemma~\ref{lemma-C1C2} hold. In particular, $\delta < 1/7$.
\medskip
\par
It is convenient now to distinguish between the real values of
inputs $x,w,\delta,\epsilon$ and the values that are actually 
stored in memory during the weak $\epsilon$-computation.
Those will be denoted by $(\hat x, \hat w, \hat \delta, \hat \epsilon)$.
If $y$ are the constants of $C$, let $\hat y$ be the values stored in
memory during the weak $\epsilon$-computation. The symbols $C_1$ and $C_2$
stand for the values computed during the weak $\epsilon$-computation.
\par
We claim that $\hat w$ is a weak $\delta < \changedA{1/7}$ computation for the circuit
$C$ with input $x, \delta$. When $w_i>0$ corresponds to an input variable
or a constant (say $c$), line 8 guarantees that
\[
C_2 \hat w_i (1-\epsilon) \le \hat c \le C_1 \hat w_i (1+\epsilon)
\] 
so
\[
C_2 \hat w_i \frac{1-\epsilon}{1+\epsilon} 
\le c \le C_1 \hat w_i \frac{1+\epsilon}{1-\epsilon} .
\] 
and by Lemma~\ref{lemma-C1C2},
\[
\frac{1}{1+\delta} \hat w_i \le c \le 
\frac{1}{1-\delta} \hat w_i
.
\]
The same argument holds for $w_i < 0$ (line 9) and for variables corresponding
to a computation (lines 11--12). 
Finally, since $\delta < 1/7$, $\hat w$ is a weak accepting $\delta < \changedA{1/7}$-computation.
\end{proof}

\section{Transfer principle}

\begin{theorem}\label{real-transfer} If
$\P \ne \NP$,
then $\fpP \ne \fpNP$.
\end{theorem}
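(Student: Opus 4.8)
The plan is to prove the contrapositive: assuming $\fpP=\fpNP$, I will exhibit a polynomial time Turing algorithm for classical (Boolean) circuit satisfiability — decide whether a given Boolean circuit $B$ admits a satisfying assignment — which is $\NP$-complete in the Turing (equivalently, $\mathbf{F}_2$-BSS) sense, so that $\P=\NP$ follows. The starting point is that $(\CircPseudoFeas,\rho^{-1})\in\fpNP$ by Theorem~\ref{th-main}, hence $(\CircPseudoFeas,\rho^{-1})\in\fpP$ under the assumption; I fix a polynomial time machine $M=M(C,\epsilon)$ with rational constants witnessing this, with polynomial bounds $p_{\Arith}$ and $p_{\Prec}$ as in Definition~\ref{fpP}.

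Next I would set up a reduction $B\mapsto C_B$ from Boolean circuits to algebraic decision circuits with rational constants drawn from $\{0,\tfrac12,1\}$: for each input I prepend a selector node $\hat a_i\leftarrow S(1,0,\,x_i-\tfrac12)$ that ``clamps'' the real input $x_i$ to a bit, and I simulate each Boolean gate by one selector, e.g.\ $\mathrm{AND}(u,v)=S(v,0,\,u-\tfrac12)$, $\mathrm{OR}(u,v)=S(1,v,\,u-\tfrac12)$, $\mathrm{NOT}(u)=S(0,1,\,u-\tfrac12)$. The facts to verify are that selectors incur no round-off in a weak computation, and that whenever $u\in\{0,1\}$ the sign of the (perturbed) value of $u-\tfrac12$ is correct for every precision $\delta<1/4$, since $|u-\tfrac12|=\tfrac12$ while perturbations are relative of size $<\tfrac14$. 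Granting these, every weak $\delta$-computation of $C_B$ on an input $(x,\delta)$ outputs \emph{exactly} $B(\tilde a)$, where $\tilde a\in\{0,1\}^n$ is the clamped assignment; hence $C_B\in\CircPseudoFeas$ if and only if $B$ is satisfiable. If $B(a)=1$ for some $a\in\{0,1\}^n$, the exact computation of $C_B$ on the input $x=a$ is an accepting weak $\delta$-computation for every $\delta$, and all its intermediate values lie in $\mathbf{F}_{\epsilon}$ for every $\epsilon\le1/4$, so $\rho(C_B)$ is bounded below by a positive absolute constant; if $B$ is unsatisfiable, no weak computation of $C_B$ on any input can accept. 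I then check that $B\mapsto C_B$ is computable by a Turing machine in time $\mathrm{poly}(|B|)$, that $\Length{C_B}=O(|B|)$, and that the lower bound on $\rho(C_B)$ forces $\mu(C_B)=\rho(C_B)^{-1}=O(1)$, whence $\Size{C_B}=O(|B|)$.

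Finally, given $B$, I compute $C_B$, set $\epsilon=2^{-p_{\Prec}(\Size{C_B})-1}$ (a dyadic rational with $\mathrm{poly}(|B|)$ bits), and simulate the strong $\epsilon$-computation of $M$ on input $(C_B,\epsilon)$ for up to $p_{\Arith}(\Size{C_B})+1$ steps. Every number occurring in this computation lies in $\mathbf{F}_{\epsilon}$, hence is a floating point number $m2^e$ whose mantissa has $\mathrm{poly}(|B|)$ bits and whose exponent grows at most geometrically in the number of steps, hence also has $\mathrm{poly}(|B|)$ bits; each arithmetic operation followed by $\fl_{\epsilon}$, each sign test, and the loading of each (fixed) rational constant of $M$ costs $\mathrm{poly}(|B|)$ on a Turing machine, and there are $\mathrm{poly}(|B|)$ steps. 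By Definition~\ref{fpP}(a), if $C_B\in\CircPseudoFeas$ the computation halts within $p_{\Arith}(\Size{C_B})$ steps with a positive output; if it has not halted within that bound, or halts with a nonpositive output, then by Definitions~\ref{fpP}(a),(b) necessarily $C_B\notin\CircPseudoFeas$. Either way I decide in time $\mathrm{poly}(|B|)$ whether $B$ is satisfiable, so circuit satisfiability is in $\P$ and $\P=\NP$.

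The hard part will be the correctness of the reduction $B\mapsto C_B$: I must ensure that \emph{no} weak $\delta$-computation of $C_B$ — for any precision $\delta<1/4$, any real input, and any admissible choice of the round-off errors — can accept when $B$ is unsatisfiable, and that in the satisfiable case the witnessing computation keeps $\rho(C_B)^{-1}$ bounded by an absolute constant so that $\Size{C_B}$ stays polynomial. Both reduce to the exactness of selectors together with the observation that the only arithmetic test used is ``$u-\tfrac12>0$'' with $u\in\{0,1\}$, whose outcome is unaffected by relative perturbations of size $<1/4$. A secondary, routine matter is the bit-size bookkeeping needed to simulate a strong $\epsilon$-computation of the rational-constant machine $M$ on a Turing machine in polynomial time — essentially a polynomial bound on the exponents of the floating point numbers that appear.
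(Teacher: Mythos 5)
Your proposal is correct, but it reaches the contrapositive by a genuinely different route than the paper. The paper never goes through $\CircPseudoFeas$: it takes an arbitrary $X\in\NP$ with a verifier over ${\mathbb F}_2$, encodes it by signs as a real problem $(X',1)$ with condition identically $1$, notes that $(X',1)\in\fpNP$ because discrete operations are exact under weak computations, invokes $\fpP=\fpNP$ to get a deterministic machine $P'$ with rational constants, and simulates $P'$ at precision $2^{-p_{\Prec}(\Length{x})}$ on a machine over ${\mathbb F}_2$ --- exactly the final simulation step you perform, whose bit-size bookkeeping (mantissas of length $O(p_{\Prec})$, exponents at most doubling per multiplication, hence polynomially many bits) you make more explicit than the paper does. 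You instead rely on Theorem~\ref{th-main} to place $(\CircPseudoFeas,\rho^{-1})$ in $\fpP$ and build a new reduction from Boolean circuit satisfiability via selector gadgets; your robustness analysis is sound (selectors incur no round-off, the only tests are signs of $u-\tfrac12$ with $u\in\{0\}\cup[1-\delta,1+\delta]$, and relative errors below $1/4$ cannot flip them), as is the bound $\rho(C_B)\ge 1/16$ obtained from the exact computation with a dyadic $\delta$, all of whose values lie in every ${\mathbf F}_{\epsilon}$. Your route buys a reusable fact --- SAT many-one reduces to $\CircPseudoFeas$ with condition $O(1)$ --- at the cost of depending on the completeness theorem and the gadget analysis; the paper's route is shorter and self-contained, using only the trivial embedding of discrete problems with condition $1$. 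Two small repairs: when $B$ is unsatisfiable one has $\rho(C_B)=0$ and $\Size{C_B}=\infty$, so you cannot literally set $\epsilon=2^{-p_{\Prec}(\Size{C_B})-1}$; use the a priori bound $\Size{C_B}\le\Length{C_B}(1+\log_2 16)$, which is all that is needed since Definition~\ref{fpP}(a) is only invoked in the satisfiable case and (b) needs no size bound. Also, the input node carrying $\delta$ must not have outdegree $0$ (the circuit has a unique sink), so feed it into a harmless selector such as $S(g,g,\delta)$ at the output.
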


\begin{proof}[Proof of Theorem~\ref{real-transfer}]\label{ref23}
Assume that $\fpP = \fpNP$. Let $X$ be a problem in $\NP$. This means that
there are a machine $M = M(x,y)$ over ${\mathbb F}_2$ and a polynomial $p(t)$ 
so that
\begin{itemize}
\item [(a)] If $x \in X$, then there is $y \in \{0,1\}^{\infty}$ so that $M$ accepts the input $(x,y)$ within time $p(\Length{x})$.
\item [(b)] If there is $y \in \{0,1\}^{\infty}$ so that $M$ accepts the input $(x,y)$, then $x \in X$.
\end{itemize}
Define $X' \subseteq \mathbb R^{\infty}$ by $x' \in X'$ if and only if
$x \in X$, where $x_i = 1$ for $x_i' > 0$ and $x_i = 0$ for $x_i' \le 0$.
We claim that the problem $(X',1)$ is in $\fpNP$. Indeed, it is recognized
by the non-deterministic machine $M'$ that simulates the machine $M$.

But we assumed $\fpP = \fpNP$. Therefore $(X',1)$ is recognized by a deterministic polynomial time machine $P'$. Let $p_{\Prec}$ and $p_{\Arith}$ be the two polynomials associated to $P'$. 

The machine $P'$ operates with discrete and
continuous variables. But its internal constants are rational numbers. Let
	$\changedA{M'}$ be the machine over ${\mathbb F}_2$ that given an input 
$x \in \{0,1\}^{\infty}$, simulates
	the machine $\changedA{P'}$ with input $(x, 2^{-p_{\Prec}(\Length{x}}))$.
	$x \in \{0,1\}^{\infty}$. Then $\changed{M'}$ will accept $x \in X$ within
	time polynomial \changed{in} 
\[
p_{\Prec}(\Length{x}) p_{\Arith}(\Length{x}),
\]
and reject $x \not \in X$. Thus, $\P=\NP$.
\end{proof}

\begin{example}\label{real-encoding}\label{ref36a}
There is another possible encoding of discrete problems inside a continuous
problem.
	Let $M$ be an arbitrary \changed{BSS machine over ${\mathbb F}_2$}. Encode any arbitrary input $(x_1, \dots, x_N)$, $x_i \in \{0,1\}$ by 
	\changed{$x = \sum_{i=1}^N x_i 2^{-i}$} as explained in Definition~\ref{BSS}:
\[
x = (1,1,\dots,1,0.x_1, \dots, x_N).
\]
\changed{We choose\label{ref35} to define the condition number for the continuous
problem as $2^N$, so that the input size is mantained}.
	Let {\sc BitExpansion} be the routine from Example~\ref{ex-weak-strong}. Now consider the machine below:
\begin{trivlist}
\item {\tt Input}$(x,\epsilon)$.
\item $L \leftarrow$ {\sc BitExpansion}$(x)$.
\item {\bf If} $L<0$ {\tt Return} $-1$.
\item $(d,s,e,f_0,\dots, f_d) \leftarrow L$.
\item {\bf If} $e \ne 0$ or $s<0$ or $\epsilon > 2^{-d-1}$ {\tt Return} $-1$.
\item {\bf If} $f$ is not in the format above {\bf then} {\tt Return} $-1$.
\item Simulate the \changed{machine $M$} with input $x_1, \dots, x_N$.
Return $+1$ if the input is accepted, $-1$ otherwise.
\end{trivlist} 
	The pseudo-code above simulates the \changed{BSS machine $M$ over ${\mathbb F}_2$} in polynomial time
with respect to its own running time. It may fail if $\epsilon$ is too large,
but this is within the definition of $\fpP$. After obvious tweaks it turns
out that any set decidable in polynomial time within the Turing model and encoded as in Example~\ref{real-encoding} can
be recognized or decided in polynomial time in $\fpP$ (set the condition 
equal to $1$). 
\end{example}

\begin{remark} Does a reciprocal of Theorem~\ref{real-transfer} above hold?
Assume now that $\P=\NP$. We would like to claim that 
$(\CircPseudoFeas, \rho^{-1})$ can be decided in polynomial time.
\changedA{The idea would be to guess $w$ from \eqref{def-rho}.
The hypothesis $\P=\NP$ would allow us to conclude that $\fpP = \fpNP$.
Unfortunately, we do not know how to conveniently 
bound the bit size of $w$ in terms
of $\rho^{-1}$. This is because we do not know how to bound the bit size
of the {\em exponent} of each $w_i \in {\mathbf F}_{\epsilon}$ polynomially
	in $\Size{C}$.}
\end{remark}

\changedA{
\begin{remark}
	It is easy to embed every instance of $\CircPseudoFeas$ into an instance of $\SAFeas$. Indeed, let the circuit $C$ have input $x_1, \dots, x_n, \delta$ and constants $y_1, \dots, y_m$. For simplicity let us make $x_{n+1}=\delta$.
The defining equations for a $\delta$-pseudo orbit are:
\begin{align*}
	0 \le \delta &< 1/4 \\
	|w_i - x_i| &\le \delta |x_i| && i=1, \dots, n+1\\
	|w_i - y_i| &\le \delta |y_i| && i=n+2, \dots, m\\
	|w_i - w_{j(i)} \circ_i w_{k(i)}| &\le \delta |w_{j(i)} \circ_i w_{k(i)}| 
	&& \text{if $i$ is a computation node} \\
	w_i &= S(w_{j(i)}, w_{k(i)}, w_{l(i)}) && \text{if $i$ is a selector node.}
\end{align*}
Inequalities with absolute value can be squared to obtain algebraic inequalities. The inequalities with a selector can be replaced by a system of algebraic equalities
\begin{eqnarray*}
	s_i &=& w_{l(i)} u_i^2 \\
	(s_i - 1)(s_i+1) w_{l(i)} &=& 0 \\
	2t_i &=& s_i (s_i+1) \\
	w_i &=& t_i w_{j(i)} + (1-t_i) w_{k(i)}.
\end{eqnarray*}
	Above, $u_i, t_i, s_i$ are new variables. Notice that $s_i \in \{-1,0,1\}$ by construction and $t_i \in \{0,1\}$.
This reduction makes $(\SAFeas, 1)$ $\fpNP$-hard, but not necessarily
	$\fpNP$-complete. We could also define $\mu(C) = 2^{\min (\text{Bitsize}(w))}$,
	where the minimum is taken over all the values $w \in {\mathbf F}_{\epsilon}^{\Length{C}}$ that satisfy the equations above. With this definition we
would obtain a problem $(X, \mu)$ in $\fpNP$ for a subset $X \subset
\SAFeas$, but we do not know if the reduction is polynomial time (See the previous remark).
\end{remark}
}
\section{Conclusions and further comments}

In this paper we developed a complexity theory for numerical computations
satisfying the wish list of the introduction. However, we needed to give up on other
reasonable ‘wishes’ for a good theory of numerical computations.
\changed{
New candidates for `polynomial time' and
`nondeterministic polynomial time' were obtained here.
	
While it makes no sense to compare complexity classes from different models
of computation, one should certainly try to establish connections between
them. In particular, the `polynomial time' classes are supposed to model
problems considered easy by practitionners.
Different people may rightfully disagree about the most appropriate
model. 
We contend that the one-sided class $\realPone$ is a sensitive
choice for numerical mathematics as long as numerical 
stability is not an issue. This applies in particular to people plotting
Julia sets or solving some geometrical problems.
People doing algebraic computations may prefer the BSS model over
a ring $R$, so they will prefer some $\P_{\mathrm R}$, maybe $\realP$.
Of course classes $\realP$ and $\realPone$ are not comparable, but Main Theorem~A established a connection between the two theories.

The class $\fpP$ is an attempt to introduce numerical stability into the
picture. The condition on weak $\epsilon$-computations guarantees that a
machine recognizing some problem $(X, \mu)$ will never return a wrong answer
in the $1+\epsilon$ model of computation. This is one of the preferred 
models for numerical analysts yet it is too weak to establish many
useful and widely used properties of computer arithmetic. In particular
we are unable to check if $a>b+c$ for approximations $a$, $b$ and $c$, 
or to build \changedA{a} universal machine.

Then there are strong $\epsilon$-computations that model the IEEE floating
point standard. Here, classical complexity theory (BSS over ${\mathbb F}_2$
or equivalently Turing complexity) hits back. If one defines 
a class of
polynomial time computations in the IEEE or in the strong computation model
(\ocite{Priest}'s model), then one produces essentially the class $\P$ from
classical complexity for input in a very special encoding. Numerical
stability is no more an issue, it is bypassed by the fact that IEEE
computations are discrete computations. The definition of $\fpP$ in this
paper was a reasonable compromise, fulfilling wishes (d), (e) and (f)
from our list.

It is important to mention that classes such as $\fpP$ and $\fpNP$ 
may be relevant to people solving \changedA{seemingly intractable 
computational problems.
Most numerical
analysts deal with `tractable' problems, where concepts such 
as `polynomial time'
are too crude: reducing the exponent of a polynomial time
bound, or even reducing the constant are more practically important matters.
They may also disregard the worst case. But once the problem is intractable
enough (e.g. non-linear optimization) the polynomial hierarchy becomes
relevant.}

The known connection between classes $\fpP$, $\fpNP$ and their classical 
counterpart is Main Theorem B. There seems to be no close connection 
between $\realPone$ and $\fpP$.

Another theoretical issue is whether problems in $\fpNP$ can be solved in
exponential time. 
}
While it was easy to check that $\realNPone \subset \realEXPone$,
we do not know if \changed{$\fpNP \subset \fpEXP$}\label{ref36b}.


Also, it would be desirable to have a cleaner, more natural $\fpNP$-complete 
problem such as the traveling salesman problem in classical complexity or 
the Nullstellensatz in BSS complexity. 

The proof that $\CircPseudoFeas$ is $\fpNP$-complete seems to largely rely on the fact that the input of a floating point number in floating point with same or more precision is exact and on relative error estimates. We expect that a simpler theory relying on relative error (weak computations) alone not on the other more subtle properties of computations may be developed with $\NP$-complete problems. One approach would be \changed{to assume}\label{ref36c} that the input is always exact. The condition number of \CircPseudoFeas \ might need to be adjusted.

Those and other issues are left for future investigations.

\appendix
\section{Proof of technical results}
\label{sec-technical}
\renewcommand{\thetheorem}{\ref{lemma-sum}}
\begin{lemma}
\changed{Assume radix $\beta = 2$.}
Let $a, b \in {\mathbf F}_{\epsilon}$, $|a|\ge |b|$.
Let $c=\fl_{\epsilon}(a+b)$,
	$d=\fl_{\epsilon}(c-a)$, $e=\fl_{\epsilon}\changed{(b-d)}$. Then
$a+b = c+e$ exactly.
\end{lemma}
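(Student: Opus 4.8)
\emph{Approach.} The plan is to reduce the claimed identity to two exactness statements and let them telescope. It suffices to show (A) that $c-a\in{\mathbf F}_{\epsilon}$, so that the rounding in $d=\fl_{\epsilon}(c-a)$ is vacuous and $d=c-a$; and (B) that $b-d=b-(c-a)=(a+b)-c\in{\mathbf F}_{\epsilon}$, so that likewise $e=(a+b)-c$. Granting (A) and (B), $c+e=c+\bigl((a+b)-c\bigr)=a+b$, which is the assertion. By the symmetry property \eqref{a-sym} we may replace $(a,b)$ by $(-a,-b)$, which negates $c$, $d$ and $e$, so it is enough to treat $a\ge 0$; if $a=0$ then $|b|\le|a|=0$ forces $b=0$ and everything vanishes, and if $b=0$ then $c=a$, $d=\fl_{\epsilon}(0)=0$, $e=0$. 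So from now on $a>0$, $b\ne 0$, and we write $a=M2^{p}$ with $M,p\in\mathbb Z$ and $2^{t}\le M<2^{t+1}$.

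\emph{The only tools.} Apart from \eqref{a-sym}, the argument uses only monotonicity \eqref{a-mon} and the following elementary remark, which I would prove first and which subsumes the usual statement of Sterbenz's lemma: for every integer $N$ with $1\le|N|\le 2^{t+1}$ and every $q\in\mathbb Z$ one has $N2^{q}\in{\mathbf F}_{\epsilon}={\mathbf F}_{2,t}$ (if $|N|\ge 2^{t}$ this is the definition; otherwise there is a unique $s\ge 1$ with $2^{t}\le|N|2^{s}<2^{t+1}$, and $N2^{q}=(N2^{s})2^{q-s}$ is in normal form). From \eqref{dist-next-float} one also reads off the ``granularity'' fact that a nonzero floating point number lying in a binade $[2^{h},2^{h+1})$ is an integer multiple of $2^{h-t}$; combined with half‑ulp error bounds for $\fl_{\epsilon}$, this is the bookkeeping the rest of the proof runs on.

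\emph{Step (A).} Split on the sign of $b$. If $b>0$, monotonicity gives $a=\fl_{\epsilon}(a)\le c\le\fl_{\epsilon}(2a)=2a$ (using $2a=M2^{p+1}\in{\mathbf F}_{\epsilon}$), so $c\in[2^{t+p},2^{t+p+2})$ is a multiple of $2^{p}$; since $a$ is too, $c-a=k2^{p}$ with $0\le k\le M<2^{t+1}$, whence $c-a\in{\mathbf F}_{\epsilon}$. If $b<0$, put $b'=-b\in(0,a]$, so $0\le a+b=a-b'\le a$ and monotonicity gives $0\le c\le a$; here distinguish $b'\le a/2$ from $b'>a/2$. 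When $b'\le a/2$ one has $a/2\le a-b'\le a$, hence (using $a/2=M2^{p-1}\in{\mathbf F}_{\epsilon}$) $a/2\le c\le a$; then $a$ and $c$ are both multiples of $2^{p-1}$ and $|a-c|\le a/2$, so the same granularity argument gives $c-a\in{\mathbf F}_{\epsilon}$. When $b'>a/2$, writing $b'=M'2^{q}$ with $2^{t}\le M'<2^{t+1}$, the inequalities $a/2<b'\le a$ force $q\in\{p-1,p\}$, so $b'$ is an integer multiple of $2^{p-1}$; then $a-b'=k2^{p-1}$ with $0\le k<M<2^{t+1}$, so in fact $a-b'\in{\mathbf F}_{\epsilon}$, i.e.\ $c=\fl_{\epsilon}(a-b')=a-b'$ and $d=\fl_{\epsilon}(c-a)=\fl_{\epsilon}(b)=b$.

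\emph{Step (B) and the main obstacle.} In the sub‑case $b'>a/2$ just handled we already have $d=b$, so $b-d=0$, $e=0$, $c+e=a-b'=a+b$. In the remaining sub‑cases $d=c-a$, and we must see that $b-d=(a+b)-c$ --- the rounding error of the addition --- is representable. Since $a+b$ lies in $[a/2,2a)$, its binade exponent $h$ satisfies $p-1\le h-t\le p+1$, so $|(a+b)-c|\le\tfrac12 2^{h-t}\le 2^{p}$ by \eqref{dist-next-float}; moreover $(a+b)-c$ is an integer multiple of a common power of $2$ dividing both the granularity of $c$ and $a$ (hence of $d$) and the granularity $2^{q_{b}}$ of $b$, where $q_{b}\le p$ because $|b|\le|a|$. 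If $b$ is so small that $a+b$ rounds to $a$ (i.e.\ $c=a$), then $d=0$, $e=\fl_{\epsilon}(b)=b$ and we are done directly; otherwise $c\ne a$ forces $|b|$ (hence $q_{b}$) to be not too small compared with $2^{p}$ (resp.\ $p$), and combining the half‑ulp bound with the granularity then writes $(a+b)-c$ as $j2^{q_{b}}$ with $|j|\le 2^{t+1}$, so the representability remark applies. The main obstacle is precisely this last bookkeeping in (B): one has to carve out the ``small $b$'' regime where the addition is exact, and one has to check the round‑to‑even tie case, but there the half‑ulp bound still holds and the granularity argument is untouched, so nothing changes.
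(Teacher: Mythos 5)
Your skeleton is the same as the paper's: split off the case where the subtraction is exact (your sub-case $b'>a/2$, the paper's case $a/2\le -b$), and otherwise show first that $d=c-a$ exactly and then that $e=(a+b)-c$ exactly, so the identity telescopes. The difference is that the paper imports its three ingredients from Priest as black boxes — Sterbenz's lemma \eqref{a-Sterbenz}, the representability of the addition error \eqref{a-A1}, and the bound \eqref{a-A2} — whereas you re-derive them from binade granularity. Your step (A) and the sub-case $b'>a/2$ do this correctly; they are in effect inline proofs of \eqref{a-Sterbenz} and \eqref{a-A2} in the range needed (modulo the harmless edge case $|N|=2^{t+1}$ in your representability remark, which needs the rescaling clause rather than ``the definition'').

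The genuine gap is in step (B), i.e.\ precisely where the paper invokes Priest's property \eqref{a-A1}, and it is the step you yourself defer as ``bookkeeping''. As stated, your argument is: $c\ne a$ forces $q_b$ (the exponent of $b$) to be not too small compared with $p$, and then the half-ulp bound plus granularity writes $(a+b)-c=j2^{q_b}$ with $|j|\le 2^{t+1}$. If you combine with your worst-case bound $|(a+b)-c|\le 2^{p}$, you need $q_b\ge p-t-1$; but this can fail: take $a=2^{t+p}$ (a power of two) and $b<0$ with $2^{p-2}<|b|<2^{p-1}$, so $q_b=p-t-2$ and yet $c=a-2^{p-1}\ne a$. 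The lemma is of course still true there, but only because the sum has dropped into the lower binade, so the half-ulp bound improves to $2^{p-2}\le 2^{t}2^{q_b}$ — i.e.\ the quantitative link between ``$c\ne a$'', the gap structure around $a$, and the actual binade $h$ of $a+b$ has to be argued case by case, and that verification is exactly what is missing from your sketch. A cleaner way to close (B) uniformly, with tools you already have, is to note that $a$ itself is a representable candidate for rounding $a+b$, so $|(a+b)-c|\le|(a+b)-a|=|b|<2^{t+1}2^{q_b}$; combined with the fact that $a$, $b$, $c$ are all multiples of $2^{\min(q_b,p-1)}$ (the case $q_b=p$ being immediate from the half-ulp bound $|(a+b)-c|\le 2^{p}=2\cdot 2^{p-1}$), the representability remark applies at once, with no need to separate the regime $c=a$ or to worry about ties.
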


The proof needs some fine properties of the floating point number system.
Those are present in most computer arithmetics, but in this paper they
follow from the construction of floating point numbers.
\begin{eqnarray}
\label{a-Sterbenz}
\forall a,b \in {\mathbf F}_{\epsilon}, a \ge 0.
&&
\frac{1}{2}a \le b\le 2a \Rightarrow a-b \in {\mathbf F}_{\epsilon}
\\
\label{a-A1}
\forall a,b \in {\mathbf F}_{\epsilon},
&& a+b-\fl_{\epsilon}(a+b)
\in {\mathbf F}_{\epsilon}.\\
\label{a-A2}
\forall a,b \in {\mathbf F}_{\epsilon},
&&
|b|\le |a|
\Rightarrow |\fl_{\epsilon}(a+b)| \le 2|a|.
\end{eqnarray}
Those are respectively
Sterbenz's Lemma p.12 and Properties A1, A2 p.17
in \cite{Priest}. \changed{Notice that equation \eqref{a-A2} does not hold
with radix $\beta=10$, take for instance $a,b=0.999$ and three digits of
mantissa, so $\fl(a+b)=2.00 > 1.998$.}
Also by construction if $a \in {\mathbf F}_{\epsilon}$
and the radix $\beta$ is $2$ then $a/2 \in {\mathbf F}_{\epsilon}$. If 
$a \in {\mathbf F}_{\epsilon}$,
$\beta \ne 2$ 
and $a/2 \not \in {\mathbf F}_{\epsilon}$ then $a/2$ is equidistant to
two representable numbers.

\begin{proof}[Proof of Lemma~\ref{lemma-sum}]
Assume withouth loss of generality that $a \ge 0$.
There are two cases. 
\\
	{\bf Case 1:} Suppose that $a/2 \le -b$.
Property
\eqref{a-Sterbenz} guarantees that $c$ was computed exactly,
so $d=b$ and $e=0$. 
\\
{\bf Case 2:} \changed{Suppose now that $a/2 > -b$. We claim
that $a/2 \le c$. Indeed, $-a/2 < b < 0$ and hence
$a+b > a-a/2=a/2$.
The monotonicity property \eqref{a-mon} gives us the inequality
$c = \fl_{\epsilon}(a+b) \ge \fl_{\epsilon}(a/2)$.
Because the radix is $\beta=2$ and $a \in {\mathbf F}_{\epsilon}$,
$a/2$ is representable whence $a/2 \le c$.}
	
From \eqref{a-A2} and the hypothesis $|b| \le |a|$ we conclude
that $c=|c|=\fl_{\epsilon}(a+b) \le 2 |a|=2a$. 
	Therefore \changed{$0 \le c/2 \le a \le 2c$ and
Property \eqref{a-Sterbenz} implies that  
	$c-a \in {\mathbf F}_{\epsilon}$. Therefore,} 
$d=c-a$ exactly. Then \eqref{a-A1} 
	\changed{implies that
		$b-d=(a+b)-c= (a+b)-\fl_{\epsilon}(a+b) \in {\mathbf F}_{\epsilon}$
	so $e=(a+b)-c$} exactly.
\end{proof}

\changed{
\renewcommand{\thetheorem}{\ref{abc}}
\begin{lemma} Assume radix $\beta=2$.
Let $a, b, c \in {\mathbf F}_{\epsilon}$
be strictly positive, with $b \ge c$. Let $d=\fl_{\epsilon}(a-b)$ and
$e=\fl_{\epsilon}(d-c)$. Then,
\[
\sgn(e)
=
\sgn(a-b-c)
\]
\end{lemma}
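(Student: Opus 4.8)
\end{lemma}

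\begin{proof}[Proof plan]
The plan is to reduce the statement to a pure magnitude comparison, using the radix-$2$ hypothesis to guarantee that the first subtraction $a-b$ is \emph{either} computed exactly \emph{or} large compared with $c$, so that the small second rounding cannot change the outcome.

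The first step I would record is that $\fl_{\epsilon}$ preserves sign: by the $1+\epsilon$-property~\eqref{a-eps}, $\fl_{\epsilon}(z)=z(1+\epsilon_1)$ with $|\epsilon_1|\le\epsilon<1/4$, so $1+\epsilon_1>0$ and hence $\sgn(\fl_{\epsilon}(z))=\sgn(z)$ for every $z\in\mathbb R$. Applying this with $z=d-c$ gives $\sgn(e)=\sgn(d-c)$, so it suffices to prove $\sgn(d-c)=\sgn(a-b-c)$; applying it with $z=a-b$ also shows $d\le 0$ whenever $a\le b$.

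Then I would split into cases. If $a\le b$, then $a-b\le 0$ forces $d\le 0$, and since $c>0$ both $d-c$ and $a-b-c$ are strictly negative, so their signs agree. If $a>b$, I distinguish two sub-cases. When $a/2\le b\le a$, Sterbenz's Lemma~\eqref{a-Sterbenz} applies (since $a\ge b$ and $b\ge a/2$) and gives $a-b\in{\mathbf F}_{\epsilon}$, hence $d=\fl_{\epsilon}(a-b)=a-b$ and $d-c=a-b-c$ exactly, so the signs agree trivially. When $b<a/2$, I use that $\beta=2$ and $a\in{\mathbf F}_{\epsilon}$ imply $a/2\in{\mathbf F}_{\epsilon}$, whence $\fl_{\epsilon}(a/2)=a/2$; since $a-b>a/2$, monotonicity~\eqref{a-mon} gives $d=\fl_{\epsilon}(a-b)\ge\fl_{\epsilon}(a/2)=a/2$, while the hypothesis $b\ge c$ gives $c\le b<a/2$. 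Therefore $d-c>0$, and likewise $a-b-c>a/2-a/2=0$, so both signs are positive. In every case $\sgn(d-c)=\sgn(a-b-c)$, proving the lemma.

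I do not expect a serious obstacle: once one spots the dichotomy ``$a-b$ exact versus $a-b$ dominant'', the argument is elementary. The points that need care are the boundary configurations ($a=b$, $b=a/2$) and the correct invocation of the auxiliary facts — that $\fl_{\epsilon}$ preserves sign, that representable numbers are fixed points of $\fl_{\epsilon}$, and (crucially using radix $2$) that $a/2\in{\mathbf F}_{\epsilon}$ whenever $a\in{\mathbf F}_{\epsilon}$; this last fact is exactly where the radix-$2$ assumption simplifies the earlier, more elaborate proof.
\end{proof}
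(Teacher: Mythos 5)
Your proof is correct and follows essentially the same route as the paper's: a three-way case split with Sterbenz's Lemma \eqref{a-Sterbenz} handling the middle regime and the radix-$2$ fact $a/2\in{\mathbf F}_{\epsilon}$ together with monotonicity \eqref{a-mon} handling $b<a/2$. The only difference is cosmetic — you take the first case to be $a\le b$ (settled by sign preservation of $\fl_{\epsilon}$) where the paper takes $b>2a$, and you make the sign-preservation step explicit where the paper uses it implicitly.
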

\begin{proof}
The proof is divided into three cases.
\\
{\bf Case 1:} If $b > 2a$ then $d=\fl_{\epsilon}(a-b)<0$ and hence $e<0$. 
\\
{\bf Case 2:} If $a/2 \le b \le 2a$, then Sterbenz property \eqref{a-Sterbenz}
implies that $a-b \in {\mathbf F}_{\epsilon}$, hence $d=a-b$ exactly.
Then $e = \fl_{\epsilon}( a-b-c)$, so $\sgn(e)=\sgn(a-b-c)$.
\\
{\bf Case 3:}
If $b < a/2$, then from the hypotheses $a-b-c > 0$.
Also, $a-b > a/2$ and $d= \fl_{\epsilon}(a-b) 
\ge \fl_{\epsilon}(a/2)$ because of property \eqref{a-mon}.
The number $a$ is representable in ${\mathbf F}_{\epsilon}$
and the radix is $\beta=2$,
	therefore $\fl_{\epsilon}(a/2)=a/2$ and $d \ge a/2$. We have
\[
c \le b < a/2 \le d.
\]
It follows that $d-c > 0$ hence $e= \fl_{\epsilon}(d-c) > 0$.\\
\end{proof}
}

\renewcommand{\thetheorem}{\ref{lemma-C1C2}}
\begin{lemma}
	Let $\epsilon < \changed{\delta/31}$ with $\delta \le 1/7$. 
The values of $C_1$ and $C_2$ computed by any weak 
$\epsilon$-computation for the machine $U$ satisfy the inequalities below:
\[
\frac{1+\epsilon}{1-\delta/2} < C_1 < \frac{1-\epsilon}{1+\epsilon}
\ \frac{1}{1-\delta}
\]
and
\[
\frac{1+\epsilon}{1-\epsilon}\ \frac{1}{1+\delta} < C_2 < 
\frac{1-\epsilon}{1+\delta/2}
.\]
\end{lemma}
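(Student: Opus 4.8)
The plan is to compute explicitly the numbers $C_1$ and $C_2$ that a weak $\epsilon$-computation of lines~4--5 of $U$ actually stores, and then to verify the four claimed inequalities one at a time as elementary inequalities in $\epsilon$ and $\delta$ under the hypotheses $\epsilon<\delta/31$ and $\delta\le \tfrac17$. Lines~4 and~5 are each realized by a fixed short sub-program: the register holding $\delta$ is read once (relative error $\le\epsilon$ by the $1+\epsilon$ property \eqref{a-eps}; subsequent copies are exact), the rational constant $3/4$ enters, a multiplication is performed, the constant $1$ enters, and finally one addition (for $C_1$) or one subtraction (for $C_2$) is performed. Applying \eqref{a-eps} at each node one obtains fixed small integers $a$ and $b$ — the number of roundings affecting the ``$1$'' summand, respectively the ``$\tfrac34\delta$'' summand, with $a+b$ a single-digit constant — so that every weak $\epsilon$-computation satisfies
\[
(1-\epsilon)^{a}+\tfrac34\delta\,(1-\epsilon)^{b}\ \le\ C_1\ \le\ (1+\epsilon)^{a}+\tfrac34\delta\,(1+\epsilon)^{b}
\]
and
\[
(1-\epsilon)^{a}-\tfrac34\delta\,(1+\epsilon)^{b}\ \le\ C_2\ \le\ (1+\epsilon)^{a}-\tfrac34\delta\,(1-\epsilon)^{b}.
\]
It is essential here to keep the exponents $a$ and $b$ separate and to record that the rounding of the final $\pm$ is common to both summands (so the adversarial worst case for $C_2$ is $(1-\epsilon)^{a}-\tfrac34\delta(1-\epsilon)(1+\epsilon)^{b-1}$, not $(1-\epsilon)^{a}-\tfrac34\delta(1+\epsilon)^{b}$), rather than collapsing everything to a single power $(1\pm\epsilon)^{a+b}$; the margins will be too small to afford that loss, so the realization of lines~4--5 must be chosen to keep $a,b$ minimal.

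Each of the four target inequalities then follows from the matching inequality for the appropriate endpoint of the bracketing interval. For instance $C_1>\dfrac{1+\epsilon}{1-\delta/2}$ is implied by $(1-\epsilon)^{a}+\tfrac34\delta(1-\epsilon)^{b}>\dfrac{1+\epsilon}{1-\delta/2}$; clearing the positive factor $1-\delta/2$ turns this into a polynomial inequality in $(\epsilon,\delta)$. Since the left side is decreasing in $\epsilon$ and the right side increasing, the tightest case is $\epsilon=\delta/31$, which reduces the statement to a one-variable polynomial inequality $Q(\delta)>0$ on $(0,\tfrac17]$. The same reduction applies to $C_1<\dfrac{1-\epsilon}{1+\epsilon}\cdot\dfrac{1}{1-\delta}$, $C_2>\dfrac{1+\epsilon}{1-\epsilon}\cdot\dfrac{1}{1+\delta}$ and $C_2<\dfrac{1-\epsilon}{1+\delta/2}$, each producing a polynomial $Q$ that vanishes at $\delta=0$. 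One then checks $Q'(0)>0$ (a linear inequality among $a$, $b$, $\tfrac1{31}$, $\tfrac12$, $\tfrac14$, $\tfrac34$), and that $Q$ stays positive up to $\delta=\tfrac17$, e.g.\ by factoring out $\delta$ and bounding the cofactor from below on $[0,\tfrac17]$. At first order the four requirements read ``$\epsilon<\delta/12$'' twice and ``$\epsilon<\delta/16$'' twice, and the hypothesis constant $31$ leaves room for the second-order corrections over $\delta\in(0,\tfrac17]$; this is exactly where the constants $31$ and $1/7$ of the statement get used.

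I expect the main obstacle to be precisely this tightness. The quantities $C_1$, $C_2$ and all four bounding fractions converge to $1$ as $(\epsilon,\delta)\to 0$, so the conclusion holds only by a margin of order $\epsilon$ — numerically of order $10^{-4}$ at the boundary $\delta=\tfrac17$, $\epsilon$ near $\delta/31$. Consequently two points require care: (i) counting the rounding steps exactly in the realization of lines~4--5, in particular treating multiplication by the built-in rational $3/4$ as a single rounded step so as not to inflate $a$ and $b$; and (ii) carrying out the $1+\epsilon$ bookkeeping without linearizing, since truncated Taylor expansions of the target fractions (which alternate in sign in $\delta$) are not conservative and the quadratic terms are not negligible against the margin — the verifications must be done on the exact rational expressions. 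Once the two bracketing intervals for $C_1$ and $C_2$ are in hand, the remaining work is routine, if tedious, polynomial estimation.
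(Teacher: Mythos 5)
Your proposal follows essentially the same route as the paper's Appendix~A proof: the bracketing you describe (exponents $a=2$, $b=4$, with the final rounding shared by both summands) is precisely the paper's bound $\left(1+\tfrac{3}{4}\delta(1\pm\epsilon)^2\right)(1\pm\epsilon)^2$ for $C_1$ and $\left(1\pm\epsilon-\tfrac{3}{4}\delta(1\mp\epsilon)^3\right)(1\pm\epsilon)$ for $C_2$, after which the paper likewise clears denominators, uses monotonicity in $\epsilon$ to reduce to the extreme value of $\epsilon$ proportional to $\delta$, and verifies the resulting univariate polynomial inequalities on $[0,1/7]$. The only quibble is your tightness estimate: the first-order requirements are $\epsilon<\delta/12$ and $\epsilon<\delta/16$, so the hypothesis $\epsilon<\delta/31$ leaves a margin of order $10^{-2}$ (not $10^{-4}$) at $\delta=1/7$, which only makes the final polynomial check easier.
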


\begin{proof}[Proof of Lemma~\ref{lemma-C1C2}]
By the $1+\epsilon$ inequality, 
\[
\left(1+\frac{3}{4}\delta(1-\epsilon)^2\right)(1-\epsilon)^2
\le
C_1 \le \left(1+\frac{3}{4}\delta(1+\epsilon)^2\right)(1+\epsilon)^2
\]
and
\[
\left(1-\epsilon-\frac{3}{4}\delta(1+\epsilon)^3\right)(1-\epsilon)
\le
C_2
\le
\left(1+\epsilon-\frac{3}{4}\delta(1-\epsilon)^3\right)(1+\epsilon)
.
\]
After replacing $C_1$ and $C_2$ by the bounds above and
clearing denominators, it
suffices to check the following inequalities:
\begin{eqnarray*}
(1-\delta)
\left(1+\frac{3}{4}\delta(1+\epsilon)^2\right)(1+\epsilon)^3
-
(1-\epsilon)
&<& 0 \\
(1-\delta/2)
\left(1+\frac{3}{4}\delta(1-\epsilon)^2\right)(1-\epsilon)^2
-
(1+\epsilon)
&>& 0 \\
(1+\delta/2)\left(1+\epsilon-\frac{3}{4}\delta(1-\epsilon)^3\right)(1+\epsilon)
-
(1-\epsilon)
&<& 0 \\
(1+\delta)\left(1-\epsilon-\frac{3}{4}\delta(1+\epsilon)^3\right)(1-\epsilon)^2
-
(1+\epsilon)
&>&0
\end{eqnarray*}
The left-hand-sides of those inequations are respectively increasing, decreasing, increasing and decreasing with respect to $\epsilon$.
Therefore it is enough to check the inequalities with $\epsilon=\delta/48$.
The constant terms vanish. After dividing by $\delta$, one obtains the
following inequalities to check:
\changed{
		\begin{eqnarray*}
-{{3\,\delta^6}\over{114516604}}-{{231\,\delta^5}\over{57258302}}- {{915\,\delta^4}\over{3694084}}-{{226\,\delta^3}\over{29791}}-{{ 13853\,\delta^2}\over{119164}}-{{1389\,\delta}\over{1922}}-{{15 }\over{124}}
&<&0\\
-{{3\,\delta^5}\over{7388168}}+{{189\,\delta^4}\over{3694084}}-{{ 291\,\delta^3}\over{119164}}+{{101\,\delta^2}\over{1922}}-{{3371\, \delta}\over{7688}}+{{19}\over{124}}
&>&0\\
{{3\,\delta^5}\over{7388168}}-{{45\,\delta^4}\over{1847042}}-{{3\, \delta^3}\over{59582}}+{{95\,\delta^2}\over{3844}}-{{2255\,\delta }\over{7688}}-{{19}\over{124}}
&<&0\\
-{{3\,\delta^6}\over{114516604}}+{{231\,\delta^5}\over{57258302}}- {{915\,\delta^4}\over{3694084}}+{{224\,\delta^3}\over{29791}}-{{ 13117\,\delta^2}\over{119164}}+{{1029\,\delta}\over{1922}}+{{201 }\over{124}}
&>&0
		\end{eqnarray*}}
Those can be established numerically
for $\delta \in [0,1/7]$. 
\end{proof}

\section{Alternative construction of the circuit of 
\changedA{Theorem~\ref{CircFeasNPc} and} Proposition~\ref{th-main}}
\label{alt-reduction}
\begin{figure}
\centerline{\resizebox{\textwidth}{!}{\input{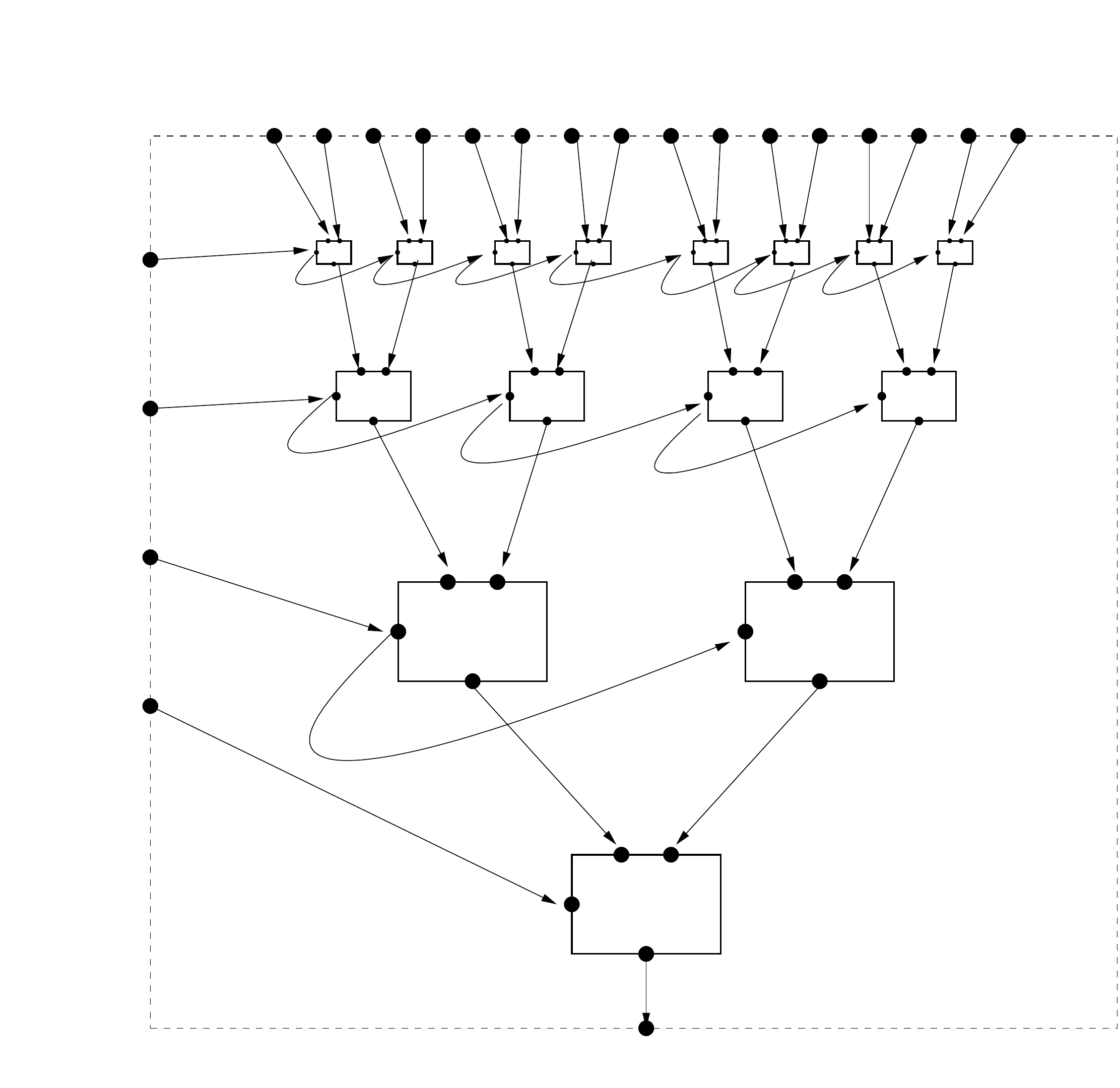_t}}}
\caption{Elementary selectors can be arranged in order to build
a compound selector switching between at most $2^d$ inputs. This
selector is controlled by a 
$d$ bits key in binary representation.\label{fig-switch}}
\end{figure}
\begin{figure}
\centerline{\resizebox{\textwidth}{!}{\input{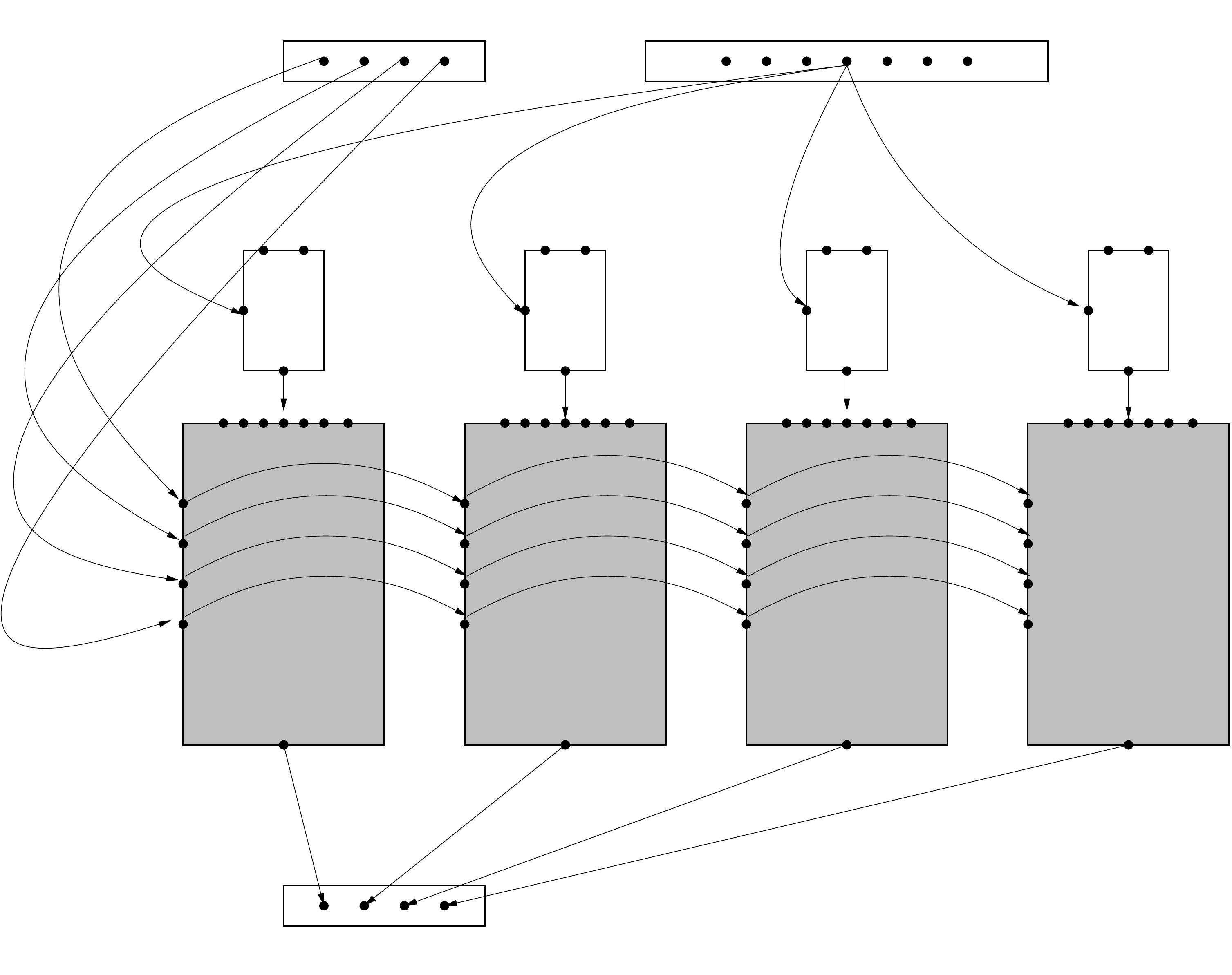_t}}}
\caption{A compound selector from Fig.~\ref{fig-switch} can be used
 to compute each bit of $\nu(t+1)$. One
should feed the $i$-th input of the $k$-th compound selector with 
the $k$-th bit of $\beta(i,s_0)$. If $i$ is a branching node,
this requires an extra selector controlled by $s_0(t)$. 
\label{fig-next-nu}
}
\end{figure}

As before, $(\nu(t), s(t))_{t \in \mathbb N_0} \in \{1,2, \dots, N \} 
\times \mathbb R_{\infty}$
denote the {\em node} and {\em state} at time $t$.
The node $\nu(t)$ will be represented by $d=\lceil \log_2 N \rceil$
`discrete' variables in $\{0,1\}$, that is $d$-bits. Each bit of the 
next node $\nu(t+1)$
can be obtained through the digital circuit in Figure~\ref{fig-switch},
where the input $i$ is \changed{fed}\label{ref39} with the corresponding bit of
the selector output,
\[
	\beta(i, s_0(t))=S(s_0(t),\beta^+(i)\changed{)},
\beta^-(i)
\]
(Please see figure~\ref{fig-next-nu}). 
The same trick is used to compute each coordinate of the next
state $s_k(t+1)$:
we add a circuit for each $g(\nu, s(t-1))$, $\nu=1, \dots N$,
and then a binary switching tree to compute $g(\nu,s)$.
For $k \ne 0$ there are only three possibilities $s_k(t+1) = s_{k-1}(t)$,
or $s_k(t)$, or $s_{k+1}(t)$. There can be binary arithmetic operations
or constants when $k=0$.
This circuit will not introduce new numerical 
computations except those that where already in $M$.

In other words, a weak (resp. strong, resp. exact) 
time-$T$ $\delta$-computation for $M$ maps
to a weak (resp. strong, resp. exact) $\delta$-computation for 
$C_{M,T,x}$. The reciprocal is true:
any weak (resp. strong, resp. exact) $\delta$-computation for 
$C_{M,T,x}$ can be written as a subsequence of a
weak (resp. strong, resp. exact) 
time-$T$ $\delta$-computation for $M$, discrete variables removed.

\begin{bibsection}

	\renewcommand{\MR}{}
\begin{biblist}
\bib{BPR}{book}{
   author={Basu, Saugata},
   author={Pollack, Richard},
   author={Roy, Marie-Fran{\c{c}}oise},
   title={Algorithms in real algebraic geometry},
   series={Algorithms and Computation in Mathematics},
   volume={10},
   publisher={Springer-Verlag},
   place={Berlin},
   date={2003},
   pages={viii+602},
   isbn={3-540-00973-6},
   review={\MR{1998147 (2004g:14064)}},
}
\smallskip
\bib{Berger}{book}{
   author={Berger, Marcel},
   title={A panoramic view of Riemannian geometry},
   publisher={Springer-Verlag, Berlin},
   date={2003},
   pages={xxiv+824},
   isbn={3-540-65317-1},
   review={\MR{2002701}},
   doi={10.1007/978-3-642-18245-7},
}

\smallskip	
\bib{BCSS}{book}{
   author={Blum, Lenore},
   author={Cucker, Felipe},
   author={Shub, Michael},
   author={Smale, Steve},
   title={Complexity and real computation},
   note={With a foreword by Richard M. Karp},
   publisher={Springer-Verlag},
   place={New York},
   date={1998},
   pages={xvi+453},
   isbn={0-387-98281-7},
   review={\MR{1479636 (99a:68070)}},
}
		
\smallskip	

\bib{BSS}{article}{
   author={Blum, Lenore},
   author={Shub, Mike},
   author={Smale, Steve},
   title={On a theory of computation and complexity over the real numbers:
   NP-completeness, recursive functions and universal machines},
   journal={Bull. Amer. Math. Soc. (N.S.)},
   volume={21},
   date={1989},
   number={1},
   pages={1--46},
   issn={0273-0979},
   review={\MR{974426 (90a:68022)}},
   doi={10.1090/S0273-0979-1989-15750-9},
}
\smallskip

\bib{Braverman-Cook}{article}{
   author={Braverman, Mark},
   author={Cook, Stephen},
   title={Computing over the Reals: Foundations for Scientific Computing},
   journal={Notices of the Amer. Math. Soc.},
   volume={53},
   number={3},
   year={2006},
   pages={318--329}
}
\smallskip	

\bib{Braverman-Yampolsky}{book}{
   author={Braverman, Mark},
   author={Yampolsky, Michael},
   title={Computability of Julia sets},
   series={Algorithms and Computation in Mathematics},
   volume={23},
   publisher={Springer-Verlag},
   place={Berlin},
   date={2009},
   pages={xiv+151},
   isbn={978-3-540-68546-3},
   review={\MR{2466298 (2012c:37092)}},
}
\smallskip	

\bib{Brent}{article}{
   author={Brent, Richard P.},
   title={Fast multiple-precision evaluation of elementary functions},
   journal={J. Assoc. Comput. Mach.},
   volume={23},
   date={1976},
   number={2},
   pages={242--251},
   issn={0004-5411},
   review={\MR{0395314 (52 \#16111)}},
}
\smallskip	
	
\bib{Chen-Meng}{article}{
   author={Chen, Jianer},
   author={Meng, Jie},
   title={On parameterized intractability: hardness and completeness},
   journal={The Computer Journal},
   volume={51},
   number={1},
   date={2008},
   pages={39--59}
   }		
\smallskip	

\bib{Cucker}{article}{
    author={Cucker, Felipe},
    title={Forum of Mathematics, Sigma},
    volume={3},
    year={2015},
    doi={https://doi.org/10.1017/fms.2015.2}
}
\smallskip	

\bib{Cucker-Smale}{article}{
   author={Cucker, Felipe},
   author={Smale, Steve},
   title={Complexity estimates depending on condition and round-off error},
   journal={J. ACM},
   volume={46},
   date={1999},
   number={1},
   pages={113--184},
   issn={0004-5411},
   review={\MR{1692497}},
   doi={10.1145/300515.300519},
}

\smallskip

\bib{Cucker-Torrecillas}{article}{
   author={Cucker, F.},
   author={Torrecillas, A.},
   title={Two ${\rm P}$-complete problems in the theory of the reals},
   journal={J. Complexity},
   volume={8},
   date={1992},
   number={4},
   pages={454--466},
   issn={0885-064X},
   review={\MR{1195264 (94a:03063)}},
   doi={10.1016/0885-064X(92)90008-Y},
}
\smallskip	

\bib{Gromov}{book}{
   author={Gromov, Misha},
   title={Metric structures for Riemannian and non-Riemannian spaces},
   series={Modern Birkh\"auser Classics},
   edition={Reprint of the 2001 English edition},
   note={Based on the 1981 French original;
   With appendices by M. Katz, P. Pansu and S. Semmes;
   Translated from the French by Sean Michael Bates},
   publisher={Birkh\"auser Boston, Inc., Boston, MA},
   date={2007},
   pages={xx+585},
   isbn={978-0-8176-4582-3},
   isbn={0-8176-4582-9},
   review={\MR{2307192 (2007k:53049)}},
}
\smallskip	
	
\bib{IEEE}{misc}{
   author={IEEE Computer Society},
   title={754-2008 - IEEE Standard for Floating-Point Arithmetic},
   doi={10.1109/IEEESTD.2008.4610935},
   date={Aug. 2008}
   }

\smallskip	
\bib{Joldes}{article}{
   author={Joldes, Mioara},
   author={Muller, Jean-Michel},
   author={Popescu, Valentina},
   title={Tight and rigorous error bounds for basic building blocks of
   double-word arithmetic},
   journal={ACM Trans. Math. Software},
   volume={44},
   date={2017},
   number={2},
   pages={Art. 15 res, 27},
   issn={0098-3500},
   review={\MR{3718804}},
   doi={10.1145/3121432},
}

\smallskip	
\bib{Muller}{book}{
author = {Muller, Jean-Michel },
author = {Brunie, Nicolas },
author = {de Dinechin, Florent },
author = {Jeannerod, Claude-Pierre },
author = {Joldes, Mioara },
author = {Lef{\`e}vre, Vincent },
author = {Melquiond, Guillaume },
author = {Revol, Nathalie}, 
author = {Torres, Serge},
title= {Handbook of Floating-point Arithmetic (2nd edition)}
publisher = {Birkh{\"a}user Basel},
pages = {1-627},
date = {2018},
DOI = {10.1007/978-3-319-76526-6},
}
\smallskip	

\bib{Priest}{book}{
   author={Priest, Douglas M.},
   title={On properties of floating point arithmetics: Numerical stability
   and the cost of accurate computations},
   note={Thesis (Ph.D.)--University of California, Berkeley},
   publisher={ProQuest LLC, Ann Arbor, MI},
   date={1992},
   pages={131},
   review={\MR{2689580}},
}

\smallskip
\bib{Renegar}{article}{
   author={Renegar, James},
   title={On the computational complexity and geometry of the first-order
   theory of the reals. I. Introduction. Preliminaries. The geometry of
   semi-algebraic sets. The decision problem for the existential theory of
   the reals},
   journal={J. Symbolic Comput.},
   volume={13},
   date={1992},
   number={3},
   pages={255--299},
   issn={0747-7171},
   review={\MR{1156882}},
   doi={10.1016/S0747-7171(10)80003-3},
}
\smallskip

\bib{Turing}{article}{
	author={Turing, Alan M.},
	title={Rounding errors in matrix processes},
	journal={The Quarterly Journal of Mechanics and Applied Mathematics},
    volume={1},
	number={1},
	pages={237-308},
	doi={10.1093/qjmam/1.1.287},
	year={1948}
	}

\end{biblist}
\end{bibsection}

\end{document}